\newtheorem{definition}{Definition}
\newtheorem{lemma}[definition]{Lemma}
\newtheorem{proposition}[definition]{Proposition}
\newtheorem{example}[definition]{Example}
\newtheorem{theorem}[definition]{Theorem}
\newtheorem{remark}[definition]{Remark}
\newcommand{\NN}{\mathbb{N}}
\newcommand{\ZZ}{\mathbb{Z}}
\newcommand{\QQ}{\mathbb{Q}}
\newcommand{\RR}{\mathbb{R}}
\newcommand{\Card}{\mathrm{Card}}
\newcommand{\history}{\mathrm{history}}
\renewcommand{\S}{\mathcal{S}}
\renewcommand{\L}{\mathcal{L}}
\newcommand{\A}{\mathcal{A}}
\newcommand{\G}{\mathcal{G}}
\newcommand{\tilP}{\widetilde{\mathcal{P}}}
\newcommand{\be}{\ensuremath{\mathbf e}}
\newcommand{\bu}{\ensuremath{\mathbf u}}
\newcommand{\bx}{\ensuremath{\mathbf x}}
\newcommand{\vect}[1]{\overrightarrow{#1}}
\newcommand{\emptyword}{\varepsilon}
\newcommand{\age}{\mathrm{age}}
\newcommand{\yes}{\mathrm{yes}}
\newcommand{\no}{\mathrm{no}}
\begin{document}

\title{Factor Complexity of $S$-adic sequences generated by the 
Arnoux-Rauzy-Poincaré Algorithm}
\author{{\sc V. Berth\'e and S. Labb\'e}\\  \\
\small LIAFA, Université Paris Diderot - Paris 7,\\ [-0.6ex]
\small Case 7014, 75205 Paris Cedex 13, France\\ [-0.6ex]
\small \tt berthe@liafa.univ-paris-diderot.fr\\ [-0.6ex]
\small \tt labbe@liafa.univ-paris-diderot.fr}
\date{\small Mathematics Subject Classifications: 68R15, 37B10.}

\maketitle

\begin{abstract}

The Arnoux-Rauzy-Poincar\'e    multidimensional  continued  fraction algorithm  is    
 obtained   by combining the 
 Arnoux-Rauzy and Poincar\'e  algorithms. It is a generalized Euclidean  algorithm.
 Its three-dimensional   linear version  consists in
 subtracting the sum  of the  two smallest entries to the largest if possible (Arnoux-Rauzy step),
and otherwise,  in subtracting  the smallest entry to the median and the median to the
largest (the Poincar\'e step), 	and  by performing when possible Arnoux-Rauzy steps in priority. After  renormalization it provides  a  piecewise fractional map of the  standard  $2$-simplex.
 We  study   here the factor complexity of its  associated
 symbolic dynamical system, defined as  an $S$-adic system.   It is   made of infinite words generated by
the composition  of   sequences of  finitely many substitutions, together with some restrictions
concerning  the allowed  sequences  of substitutions  expressed in terms of a
regular language. Here, the   substitutions   are provided  by the matrices of the linear  version of the  algorithm.  We give an 
 upper bound for  the linear growth   of the  factor complexity.  We then deduce  the 
convergence of the associated algorithm by unique ergodicity.

%
%

\end{abstract}

\section{Introduction}

Multidimensional continued fraction algorithms aim at  providing  good rational approximations
of a  given vector. 
There exist  many different  types of
continued fraction algorithms. Among them, piecewise fractional ones in the 
sense of  \cite{BRENTJES,SCH} have been widely studied   whereas for their    arithmetic or for their ergodic properties.
The viewpoint we take here on these algorithms is issued from word combinatorics and symbolic dynamics.
It is indeed possible to  generate with such algorithms  infinite words  with prescribed letter frequencies:   the letter  frequency vector 
is indeed  the vector on which the algorithm is applied.
We recall that   a substitution is a morphism of the  free monoid that replaces letters by finite words.
A piecewise fractional   continued fraction algorithm  produces  (unimodular) matrices with non-negative entries that we consider  as incidence matrices of   substitutions.
We then iterate these substitutions in an  $S$-adic way, that is,  as the  (inverse)  limit of an infinite product of substitutions (see e.g.   \cite{BDRIMS,MR2759107,durand_do_2013,Leroy12}). We  thus obtain an infinite word
$\bu$ of the form 
$$\bu= \lim_{n\to\infty}\sigma_0\circ\sigma_1\circ\cdots\circ\sigma_n(a^{\infty}).$$
As an  illustration consider the generation  of  Sturmian words    with the  classical continued fraction algorithm (see  \cite{lcfal,fogg_substitutions_2002} for more details). 

The  Diophantine approximation   properties  of the underlying continued fraction algorithm are  reflected in the generic behaviour of the   balance function of 
  the  generated word  $\bu$,  where the  balance   function  counts, for each given letter,
the difference  between the numbers of occurrences of this letter in any  two  words of the same length that occur in  $\bu$. It is also  closely related to the notion
of symbolic discrepancy such as  considered in \cite{Adam03}. We also    would like  the combinatorics of      the  generated infinite word
$\bu$  to be ``simple'' in the sense that  the factor complexity of  $\bu$ 
is expected to  be of
 linear   growth, were the factor complexity counts the number of  factors of  a given length.

Observe that there exist several  methods  for  producing  infinite   words with    prescribed  letter  frequencies  having  a linear factor complexity $p(n)$  and/or a bounded balance. 
The Sturmian words form a well-known family of infinite balanced words over a two-letter alphabet having
a linear factor complexity ($p(n)=n+1$ for all $n$).  Nevertheless the situation is more contrasted
for words defined on  alphabets having at least three letters concerning the possibility of having simultaneously   prescribed  letter  frequencies, a linear factor complexity and  a bounded balance.
Typical  generalizations of Sturmian words  are natural  codings of  interval exchanges  and the billiard words  in the $d$-dimensional cube.  However, billiard words
have quadratic factor complexity \cite{MR1372799,MR1963971} and 
codings of interval exchanges  are not balanced \cite{MR1488330}. 
Other approaches were considered in digital geometry where
arithmetic definitions of $3$D discrete lines were proposed.  The standard
model of  \cite{andres} is one of them and can be encoded as a word on
a three-letter alphabet. It turns out that this model  corresponds to the one of 
billiard words \cite{labbe_structure_2012}, thus also having a quadratic factor
complexity in general.  

The experimentations described in   \cite{EPTCS63.8,labbe_structure_2012} indicate  that 
some  multidimensional continued fraction algorithms  generate  $S$-adic words
 having  a linear factor complexity and  a bounded balance for almost every    letter frequencies vector.  In particular,  Brun multidimensional continued
fraction algorithm as well as the  Arnoux-Rauzy-Poincar\'e algorithm  seem to be the two best choices  in terms of balance properties. In
this article, we focus on the Arnoux-Rauzy-Poincar\'e algorithm   which  performs  experimentally a bit better
than does  Brun  algorithm.  
This algorithm (under its linear form) consists in subtracting  the sum  of the  two smallest entries to the largest if possible
and otherwise, in  subtracting  the smallest entry to the median and the median to the
largest.
In order to generate infinite words, 
we  introduce an  $\S$-adic system associated with  the nine possible   matrices   of the algorithm  that  thus provide  
 a set $\S$ of nine substitutions. Three of them are substitutions known under the name of
Arnoux-Rauzy substitutions \cite{arnoux_representation_1991}, and the other
six are named  Poincar\'e substitutions after Poincar\'e  algorithm
\cite{nogueira95}.  Moreover, the execution of the Arnoux-Rauzy-Poincar\'e algorithm
yields   restrictions to the allowed   infinite sequences of substitutions,
expressed in terms of a  regular language.  
We then have a bijection
(up to a set of zero measure)  between  the infinite words in the  corresponding
$\S$-adic system and the  standard $2$-simplex  $\Delta=\{(x_1,x_2,x_3)\in\RR^3_+ \mid x_1+x_2+x_3=1\}$
(the vectors of  letter frequencies).
The main result of the present  paper is  that  these words   have a  linear
factor complexity $p(n)$.

\begin{theorem}[\bf Factor Complexity]\label{thm:leq3n}
Let $\bu$ be an $\S$-adic word generated by the Arnoux-Rauzy-Poincar\'e
algorithm applied to  a totally irrational vector $\bx \in \Delta$. Then the factor complexity of $\bu$ is such that 
$p(n+1)-p(n)\in\{2,3\}$ and
$2n+1\leq p(n)\leq \frac{5}{2}n+1$
for all $n\geq 0$.
\end{theorem}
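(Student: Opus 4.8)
The plan is to study the complexity through its first difference $s(n):=p(n+1)-p(n)$, which counts right-special factors weighted by valence: $s(n)=\sum_{w}\bigl(d^+(w)-1\bigr)$, where $w$ ranges over the right-special factors of $\bu$ of length $n$ and $d^+(w)$ is the number of letters $b$ with $wb$ a factor of $\bu$. Since $p(0)=1$ and, $\bx$ being totally irrational, all three letters occur, we have $s(0)=2$ and $p(n)=1+\sum_{k=0}^{n-1}s(k)$. The statement then splits into two claims: (a) $s(n)\in\{2,3\}$ for every $n\ge 0$, which gives at once $p(n+1)-p(n)\in\{2,3\}$ and, by summation, the lower bound $p(n)\ge 2n+1$; and (b) the refined estimate $\sum_{k=0}^{n-1}s(k)\le\tfrac52 n$, which yields the upper bound $p(n)\le\tfrac52 n+1$. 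The whole difficulty is thus combinatorial, concerning the structure of special factors.

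To control $s(n)$ I would pass to the second difference, using the standard bispecial formula $s(n+1)-s(n)=\sum_{w}m(w)$, the sum being over bispecial factors $w$ of length $n$, where $m(w)=e(w)-\ell(w)-r(w)+1$ is the Euler characteristic of the extension graph $E(w)$ (left and right extending letters as vertices, pairs $(a,b)$ with $awb$ a factor as edges). The first step is to classify, over a three-letter alphabet, the finitely many relevant extension graphs and their multiplicities $m(w)$. The key structural point to establish is that in this $S$-adic system the bispecial factors form a single chain, so that at each length at most one bispecial factor is ``active''; then $s(n+1)-s(n)=m(w_n)$ for that factor and the evolution of $s$ is governed by one multiplicity at a time.

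The core of the argument is to track how extension types propagate through the $S$-adic structure $\bu=\lim_{n\to\infty}\sigma_0\circ\cdots\circ\sigma_n(a^{\infty})$. Using the desubstitution chain $\bu^{(n)}=\sigma_n(\bu^{(n+1)})$ and the classical description of the bispecial factors of $\sigma(\bw)$ as extended images of the bispecial factors of $\bw$ together with a bounded number of initial bispecials coming from $\sigma$, I would compute explicitly, for each of the nine substitutions (three Arnoux-Rauzy, six Poincaré), the induced transformation on extension graphs. This produces a finite transition system whose states are extension types; the regular language of admissible sequences of substitutions restricts the admissible paths. The aim is to prove the invariant that along any admissible path the multiplicities stay in $\{-1,0,+1\}$ and the right-special profile at each length has one of finitely many admissible shapes, each with $s(n)\in\{2,3\}$, and that $s(n)$ never drops to $1$ because total irrationality forces the word to be genuinely ternary rather than an embedded Sturmian word; this establishes (a).

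The main obstacle is (b), the frequency control needed to obtain the constant $\tfrac52$ rather than $3$. Here I would show that a length $n$ with $s(n)=3$ is always caused by a strong bispecial factor (an extension graph containing a cycle, $m(w)=+1$), and that the transition rules together with the regular-language constraints force such an event to be followed by a compensating length with $m=-1$, hence $s=2$, before the next jump to $3$ can occur, so that $s$ cannot equal $3$ on too closely spaced lengths. Quantitatively I would write $s(k)=2+t(k)$ with $t(k)\in\{0,1\}$ and prove $\sum_{k=0}^{n-1}t(k)\le n/2$ by an amortization argument that charges each index with $t(k)=1$ against a neighbouring index with $t=0$, the required spacing being guaranteed by the finite transition system. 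Verifying that this charging is always available, uniformly over the regular language and over all nine substitutions, is the delicate computational heart of the proof, and is exactly where the explicit case analysis of the Poincaré and Arnoux-Rauzy steps must be carried out.
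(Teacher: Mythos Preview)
Your overall framework (first and second differences, bispecial multiplicities, tracking extension types through the $S$-adic tower) matches the paper. But there is a genuine structural error that would derail the argument.

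The claim that the bispecial factors ``form a single chain'' is false. Under a Poincar\'e substitution $\pi_{jk}$, a bispecial factor $v$ with left valence $3$ has \emph{two} bispecial extended images, namely $k\pi_{jk}(v)$ and $jk\pi_{jk}(v)$, each with left valence $2$. This branching is the whole source of non-neutral multiplicities: when it happens after a history of the form $\S^*\pi_{jk}\S_\alpha^*\{\pi_{ik},\pi_{jk}\}$, one branch is strong ($m=+1$) and the other is weak ($m=-1$). So strong and weak bispecials are \emph{born in pairs of the same age}, not one at a time along a chain. What you need to prove is that (i) for a pair of the same age, the strong one is strictly shorter than the weak one, and (ii) if $z^-$ is weak and $w^+$ is strong with $\age(z^-)<\age(w^+)$, then $|z^-|<|w^+|$. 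Together these give the alternance of signs in $(b(n))_n$ and hence $s(n)\in\{2,3\}$. Point (ii) is where the regular-language restriction is indispensable: it forces the history between two Poincar\'e steps of the bad type to contain an $\alpha_i$, which guarantees $\vect{w^+_\ell}\geq(1,1,1)$ at the branching index, and then a partial-order argument on abelianized vectors propagates $\vect{z}<\vect{w^+}$ down to age $0$. A pure ``finite transition system on extension types'' will not see these length comparisons across different ages.

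For the $\tfrac52$ bound, your amortization idea is the right shape, but the charging is not against a \emph{neighbouring} index: you must show that each run of $3$'s in $(s(n))_n$ (from $|w^+|+1$ to $|w^-|$) is strictly shorter than the preceding run of $2$'s (from $|z^-|+1$ to $|w^+|$), i.e.\ $|w^+|-|z^-|>|w^-|-|w^+|$. The paper obtains this by a vector inequality $\vect{w^+}-\vect{z^-}>\vect{w^-}-\vect{w^+}$, established in three phases along the life of the bispecials and again using the forced $\alpha_i$ and then $\pi_{jk}$ to spread a strict inequality over all coordinates. Without this abelianized-vector machinery the spacing estimate does not follow from the extension-type automaton alone.
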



The proof relies on a careful study  of  bispecial factors of $\bu$, that is, of factors having several  left and right extensions in $\bu$.   We prove that weak and strong bispecial factors are
alternating in the sequence (ordered by increasing length) of non-neutral
bispecial factors.  
The restriction  for the  directive sequences of  the $\S$-adic words to the regular language  provided by the
Arnoux-Rauzy-Poincar\'e algorithm is clearly important;  indeed  quadratic  factor complexity   can be reached otherwise (see Section \ref{sec:quadratic}).

Then,  by using a result of
Boshernitzan \cite{Boshernitzan1984}, we deduce unique ergodicity and thus,   the existence of  (uniform)
frequency of any factor, and in particular of the letters.   This  also  provides
a combinatorial proof of convergence  for this multidimensional continued
fraction algorithm.

\begin{theorem}[\bf Frequencies and Convergence]\label{thm:conv}
Let $\bu$ be an $\S$-adic word generated by the Arnoux-Rauzy-Poincar\'e
algorithm applied to a totally irrational vector $\bx \in \Delta$.
Then   the symbolic  dynamical system  generated by $\bu$ is uniquely ergodic.
As a consequence, the frequencies of factors and letters  exist in $\bu$, the
latter being equal to the coordinates of  $\bx$.
 
Furthermore, the  Arnoux-Rauzy-Poincar\'e algorithm  is a weakly  convergent
algorithm, that is,  for   Lebesgue almost   every $\bx \in \Delta $,  if
$(M_n)_n$ stands for the  sequence of matrices  produced by the
Arnoux-Rauzy-Poincar\'e algorithm, then one has 
$\cap_n M_0 \cdots M_n ({\mathbb R}_+^3)={ \mathbb R}_ + \bx.$
\end{theorem}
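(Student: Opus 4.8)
The plan is to combine the complexity bound of Theorem~\ref{thm:leq3n} with Boshernitzan's unique ergodicity criterion, and then to convert unique ergodicity into the convergence of the algorithm by means of the nested cones $C_n := M_0 \cdots M_n(\RR_+^3)$. First I would establish the minimality of the subshift $(X_{\bu},T)$ generated by $\bu$, where $T$ denotes the shift. Since $\bx$ is totally irrational, the Arnoux-Rauzy-Poincar\'e algorithm never reaches a degenerate face of $\Delta$ and never stabilizes on a proper sub-alphabet; this should guarantee that the directive sequence of substitutions is primitive, in the sense that for every $n$ there is $m>n$ such that the incidence matrix of $\sigma_n \circ \cdots \circ \sigma_m$ has all entries positive. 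Primitivity of the directive sequence yields minimality of $(X_{\bu},T)$. With minimality in hand, Theorem~\ref{thm:leq3n} gives $\limsup_n p(n)/n \leq 5/2 < 3$, so Boshernitzan's criterion \cite{Boshernitzan1984} applies and the system is uniquely ergodic.

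Unique ergodicity means that, for every continuous function, the Birkhoff averages converge uniformly to the integral against the unique invariant measure $\mu$. Applying this to the indicator functions of cylinders shows that every factor of $\bu$, and in particular every letter, has a well-defined frequency, equal to the $\mu$-measure of the corresponding cylinder. To identify the letter frequencies with the coordinates of $\bx$, I would use that the algorithm produces at each step a renormalized vector $\bx^{(n+1)} \in \Delta$ with $\bx = M_0 \cdots M_n\, \bx^{(n+1)}$; hence $\RR_+\bx \subseteq C_n$ for every $n$, so $\RR_+\bx \subseteq \cap_n C_n$. On the other hand, each invariant measure of $(X_{\bu},T)$ produces, through its vector of letter frequencies, a normalized element of the intersection cone $\cap_n C_n$, and conversely this cone is exactly the set of such frequency vectors. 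Unique ergodicity therefore forces $\cap_n C_n$ to be a single ray, which must be $\RR_+\bx$; after normalization, the letter frequencies of $\bu$ are precisely the coordinates of $\bx$.

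Finally, the equality $\cap_n M_0 \cdots M_n(\RR_+^3) = \RR_+\bx$ obtained above is exactly the weak convergence of the algorithm; it holds for every totally irrational $\bx$, hence for Lebesgue-almost every $\bx \in \Delta$, since totally irrational vectors form a set of full measure.

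The main obstacle is the step relating invariant measures to the nested cones: one must argue carefully that the normalized intersection cone $\cap_n C_n$ is in bijection with the set of invariant probability measures of the subshift, so that a single invariant measure collapses the cone to one ray, and, upstream of this, that the directive sequence is primitive enough to force minimality. The complexity input from Theorem~\ref{thm:leq3n} is what makes Boshernitzan's threshold $3$ available; the bound $p(n)\leq \frac{5}{2}n+1$ lies strictly below it, and everything else is the standard dictionary between $\S$-adic frequencies and the columns of the matrix products $M_0 \cdots M_n$.
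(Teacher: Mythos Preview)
Your proposal is correct and follows essentially the same route as the paper: the paper's proof is a one-sentence reference to Theorem~\ref{thm:leq3n}, Boshernitzan's criterion, and Proposition~\ref{prop:primitive} (which is precisely your primitivity/minimality step for totally irrational $\bx$). You have simply unpacked the ``direct consequence'' in more detail, in particular the passage from unique ergodicity to the collapse of the nested cone $\cap_n M_0\cdots M_n(\RR_+^3)$ to a single ray, which the paper leaves implicit.
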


Let us sketch the content of the present paper.  
The Arnoux-Rauzy-Poincar\'e multidimensional continued fraction algorithm  is introduced in 
Section~\ref{sec:mcf}. We   also define the  associated  Arnoux-Rauzy-Poincar\'e $\S$-adic system based  on its
nine  substitutions (provided by its linear version)  together with a  rational restriction on the directive   sequences of substitutions that are iterated.
In Section~\ref{sec:prelim}, we introduce the basic notions used to compute the
factor complexity, namely languages, bispecial factors and extension types.
In Section~\ref{sec:arpbispecials},
we study the life of bispecial factors under Arnoux-Rauzy and Poincar\'e substitutions
(with no restriction on the order of  application of substitutions).  
In Section~\ref{sec:proofthm}, we prove the upper  bound on the factor complexity    stated in Theorem~\ref{thm:leq3n}.
The convergence of the algorithm  together with  unique ergodicity  is   lastly considered in
 Section~\ref{sec:convergence}.
 \medskip

This article is an extended version of \cite{DBLP:conf/cwords/BertheL13}.  The present paper provides    the upper bound
$p(n) \leq \frac{5}{2} n+1$, whereas  the upper bound in  \cite{DBLP:conf/cwords/BertheL13} was $p(n) \leq 3n+1$.

\bigskip

{\bf Acknowledgements} We are thankful to   Pierre Arnoux, Srecko Brlek, Julien Cassaigne,  Julien Leroy and Thierry Monteil   for
many fruitful discussions on the subject. This work was supported by Agence Nationale de la Recherche and the Austrian Science Fund
through project Fractals and Numeration ANR-12-IS01-0002 and project Dyna3S ANR-13-BS02-0003.
The second author is supported by NSERC (Canada).

\section{The Arnoux-Rauzy-Poincar\'e Algorithm}\label{sec:mcf}

\subsection{The algorithm}\label{subsec:algo}

The Arnoux-Rauzy-Poincar\'e (\textbf{ARP}) is  a  multidimensional continued
fraction algorithm in the sense of~\cite{BRENTJES,SCH}, defined by piecewise
fractional maps acting on the  standard  $2$-simplex 
$\Delta = \{(x_1,x_2,x_3)\in\RR^3_+: x_1 + x_2 + x_3 = 1\}$. 
It is a fusion algorithm  such as introduced in \cite{EPTCS63.8,labbe_structure_2012} which 
combines the   two classical algorithms that are  Poincar\'e (\textbf{P})
algorithm and Arnoux-Rauzy (\textbf{AR}) algorithm, which are  respectively
defined  (under their linear form)  in dimension 3 as follows: Poincar\'e algorithm  acts on a triple of
non-negative entries by subtracting  the smallest entry to the median and the
median to the largest, whereas   Arnoux-Rauzy algorithm acts   by  subtracting
the  sum of the two smallest entries to the largest, when  possible.  Our
 algorithm privilegiates  an Arnoux-Rauzy  step if possible, otherwise
it perfoms a Poincar\'e step.

The  simplex $\Delta$    admits as  vertices  the vectors
$\be_1 = (1,0,0)^\top$,
$\be_2 = (0,1,0)^\top$ and
$\be_3 = (0,0,1)^\top$.
In order to partition $\Delta$, we consider
the following fifteen matrices, namely
\[
\footnotesize
\begin{array}{lllll}
A_{1} =
\left(\begin{array}{rrr}
1 & 1 & 1 \\
0 & 1 & 0 \\
0 & 0 & 1
\end{array}\right),&
P_{21} =
\left(\begin{array}{rrr}
1 & 1 & 1 \\
0 & 1 & 1 \\
0 & 0 & 1
\end{array}\right),&
P_{31} =
\left(\begin{array}{rrr}
1 & 1 & 1 \\
0 & 1 & 0 \\
0 & 1 & 1
\end{array}\right),&
H_{21} =
\left(\begin{array}{rrr}
1 & 0 & 0 \\
0 & 1 & 0 \\
1 & 0 & 1
\end{array}\right),&
H_{31} =
\left(\begin{array}{rrr}
1 & 0 & 0 \\
1 & 1 & 0 \\
0 & 0 & 1
\end{array}\right),
\\
A_{2} =
\left(\begin{array}{rrr}
1 & 0 & 0 \\
1 & 1 & 1 \\
0 & 0 & 1
\end{array}\right),&
P_{12} =
\left(\begin{array}{rrr}
1 & 0 & 1 \\
1 & 1 & 1 \\
0 & 0 & 1
\end{array}\right),&
P_{32} =
\left(\begin{array}{rrr}
1 & 0 & 0 \\
1 & 1 & 1 \\
1 & 0 & 1
\end{array}\right),&
H_{12} =
\left(\begin{array}{rrr}
1 & 0 & 0 \\
0 & 1 & 0 \\
0 & 1 & 1
\end{array}\right),&
H_{32} =
\left(\begin{array}{rrr}
1 & 1 & 0 \\
0 & 1 & 0 \\
0 & 0 & 1
\end{array}\right),
\\
A_{3} =
\left(\begin{array}{rrr}
1 & 0 & 0 \\
0 & 1 & 0 \\
1 & 1 & 1
\end{array}\right),&
P_{13} =
\left(\begin{array}{rrr}
1 & 1 & 0 \\
0 & 1 & 0 \\
1 & 1 & 1
\end{array}\right),&
P_{23} =
\left(\begin{array}{rrr}
1 & 0 & 0 \\
1 & 1 & 0 \\
1 & 1 & 1
\end{array}\right),&
H_{13} =
\left(\begin{array}{rrr}
1 & 0 & 0 \\
0 & 1 & 1 \\
0 & 0 & 1
\end{array}\right),&
H_{23} =
\left(\begin{array}{rrr}
1 & 0 & 1 \\
0 & 1 & 0 \\
0 & 0 & 1
\end{array}\right),
\end{array}
\]
whose column vectors  define    partitions  by triangles of the simplex such as  illustrated at 
Figure~\ref{fig:partition_arnoux_et_poincare} (left).
\begin{figure}[h]
\begin{center}
\begin{minipage}[c]{0.60\linewidth}
\includegraphics[width=0.30\linewidth]{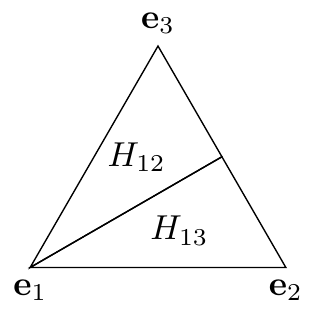}
\includegraphics[width=0.30\linewidth]{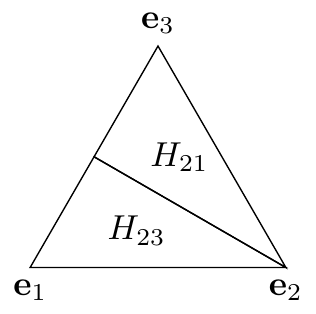}
\includegraphics[width=0.30\linewidth]{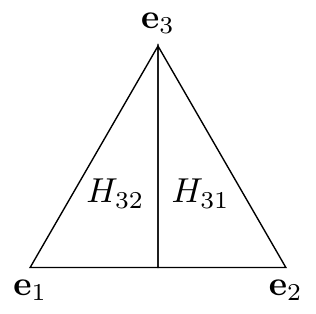}
\includegraphics[width=0.30\linewidth]{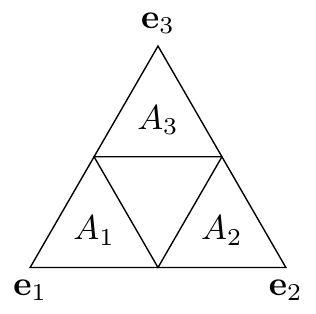}
\includegraphics[width=0.30\linewidth]{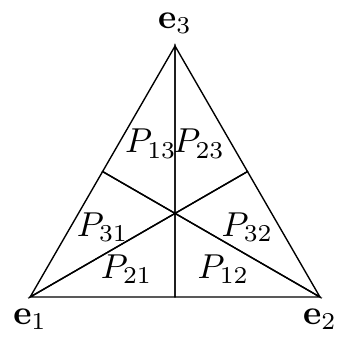}
\end{minipage}
\begin{minipage}[c]{0.37\linewidth}
\includegraphics[width=1.00\linewidth]{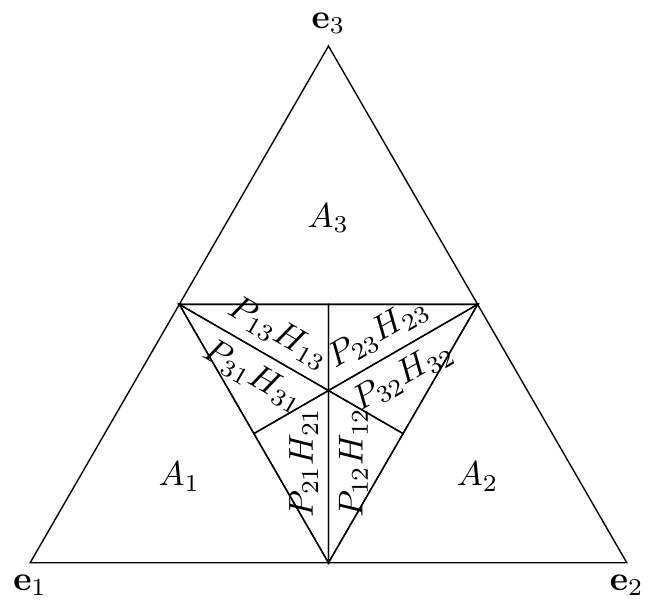}
\end{minipage}
\end{center}
\caption{Left: the partition provided  by  three Arnoux-Rauzy matrices,  the six Poincar\'e matrices and  the six half
triangles.  Right: the partition of Arnoux-Rauzy-Poincar\'e algorithm.}
\label{fig:partition_arnoux_et_poincare}
\end{figure}
Then, the column vectors of
$A_1$, $A_2$, $A_3$, $P_{31}H_{31}$, $P_{13}H_{13}$,
$P_{23}H_{23}$, $P_{32}H_{32}$, $P_{12}H_{12}$ and $P_{21}H_{21}$
describe a  partition of $\Delta$  depicted in
Figure~\ref{fig:partition_arnoux_et_poincare} (right). Partitions are
considered here up to a set of zero measure.
This partition allows one to associate with  almost every point  of $\Delta$ a
matrix as follows:
\[
\begin{array}{rcl}
    M:\Delta & \to & GL(3,\ZZ)\\
\bx & \mapsto &
\begin{cases}
A_k    & \text{ if } \bx \in A_k\Delta,\\
P_{jk}  & \text{ else if } \bx \in P_{jk}H_{jk}\Delta.
\end{cases}
\end{array}
\]
We say that $\bx=(x_1,x_2,x_3)\in\Delta$ is \emph{totally irrational} if
$x_1$, $x_2$, $x_3$ are linearly independent over $\QQ$. When  $\bx$ is not  a totally irrational vector,
 there might be    more than one choice for the matrix $M(\bx)$ in the previous definition. Nevertheless,  the matrix $M(\bx)$ is uniquely defined  for a  totally irrational vector. 

Then, the Arnoux-Rauzy-Poincar\'e algorithm is defined (by  renormalizing with respect to the simplex    $\Delta$) the linear map~$M$:
\[
\begin{array}{rcl}
T:\Delta & \to & \Delta\\
    \bx & \mapsto & 
    \displaystyle
    \frac{M(\bx)^{-1}\cdot \bx}{\left\Vert M(\bx)^{-1}\cdot \bx\right\Vert_1}  \, \cdot 
\end{array}
\]
Each  totally irrational  vector $\bx\in\Delta$ defines an orbit under the map $T$ and  a 
sequence of matrices $(M_n(\bx))_{n \in {\mathbb N}}$:
\[
    M_0(\bx)=\mathrm{Id},\quad
    M_n(\bx)=
    M(T^{n-1}(\bx))  \  \mbox{ for all }  n.
\]




\begin{example}\label{ex:part1}
Consider $\bx=(1,\pi,\sqrt{2})$. The first $5$ points of the orbit of $\bx$
under the map $T$ are
\[
\bx  \in A_2\Delta,\quad
T(\bx)\in P_{13}H_{13}\Delta,\quad
T^2(\bx)\in A_2\Delta,\quad
T^3(\bx)\in A_3\Delta,\quad
T^4(\bx)\in A_1\Delta, \cdots
\]
One has $M_0(\bx)=\mathrm{Id}$,
$M_1(\bx)=A_2$,
$M_2(\bx)=P_{13}$,
$M_3(\bx)=A_2$,
$M_4(\bx)=A_3$ and
$M_5(\bx)=A_1$.
\end{example}

\subsection{ Arnoux-Rauzy-Poincar\'e $S$-adic words}
We now  associate with the  Arnoux-Rauzy-Poincar\'e algorithm   a finite set
$\S$  of  substitutions as well as  $\S$-adic  words.

We first start with some terminology. We consider a   finite set  of {\em letters} ${\cal A}$, called   {\em alphabet}. Here ${\mathcal A}=\{1,2,3\}$.  A (finite)  {\em word} is an element of the free monoid ${\cal A}^*$ generated by ${\cal A}$.
The unique word of length $0$ is the \emph{empty word} and  we let it be  denoted as
$\emptyword$.
We let the set of all (finite) words over $\A$ be denoted  by $\A^*$. With the
concatenation of words as product operation, ${\mathcal A}^*$ is the free monoid with
$\emptyword$ as identity element.  A substitution on the alphabet ${\mathcal A}$  is a non-erasing morphism of the free monoid wich replaces letters  by  words.
Let $\sigma$ be a  substitution. Its {\em incidence matrix}  (also called  {\em abelianized matrix}) $M_{\sigma} = \left( m_{i,j} \right)_{1 \leq i,j \leq d}$ is defined 
as the square matrix  whose  entry of index $(i,j)$   is equal to  the number of occurrences  of the  letter $i$ in $\sigma(j)$.
If a word $u$ can be factorized as $pvs$, with $p,v,s\in\A^*$, then we say
that $p$ is a \emph{prefix}, $v$ is a \emph{factor} and $s$ is a \emph{suffix}
of $u$. The  factor $v$  is said \emph{proper} if  $p$ and $s$ are non-empty.  This notion extends to any infinite word $\bu$. The set ${\mathcal A} ^{\mathbb N}$  is  equipped
with the product topology of the discrete topology on each copy of ${\mathcal A}$; this topology is
induced by the following distance: for two distinct infinite words ${\bf u}$ and ${\bf v}$ in ${\mathcal A}^{\mathbb N}$, 
$\operatorname{d}({\bf u},{\bf v})=2^{-\min\{n \in { \mathbb N} \ \mid \ u_n\neq v_n\}}$. 

 The infinite word
$\bu \in {\mathcal A}^{\mathbb N}$  is said to admit an {\em  $S$-adic representation}  if there exist  a finite   set  $S$  of   substitutions   defined on the alphabet $\mathcal{A}$,
a sequence  $s = (\sigma_n)_{n\in\mathbb{N}} \in  S^{\mathbb N} $ 
of substitutions   that all belong to $ S$, and  $(a_n)_{n\in\mathbb{N}}$   a sequence of letters in ${\mathcal A}$ such that
$$\bu =  \lim_{n\to\infty} \sigma_0 \sigma_1 \cdots \sigma_n(a_n^{\infty}),$$
with the notation $a_n^{\infty}$ standing for the infinite  constant word taking the value $a_n$.
The word $\bu$ is  said to be $S$-adic, and the sequence $s$ is called the \emph{directive sequence}.  
We will use the  following notation: for all $m \in {\mathbb N}$
$$\bu  ^{(m)}=  \lim_{n\to\infty} \sigma_m \sigma_{m+1} \cdots \sigma_{n}(a_{n}^{\infty}).$$

An $S$-adic expansion with directive sequence $(\sigma_n)_{n \in {\mathbb N}}$ is said \emph{weakly primitive} if, for each $n$, there exists $r$ such that the substitution
$\sigma_{n} \cdots  \sigma_{{n+r}}$ is positive, that is,  its incidence matrix  has only positive entries.
If an infinite word $\bu$ admits a weakly primitive $S$-adic representation, then it is \emph{uniformly recurrent}, that is, all its factors
occur infinitely often and with bounded  gaps \cite{Durand:03}.  An infinite word $\bu$ is said  \emph{recurrent} if all its factors occur infinitely
often in $\bu$.
For more on $S$-adic
words, see \cite{BDRIMS,MR2759107,durand_do_2013,Leroy12}.


We now  associate   substitutions  with the  matrices  defining the Arnoux-Rauzy-Poincar\'e algorithm.
Let $i,j,k$ be 
such that $\{i,j,k\}=\{1,2,3\}$.
A {\em Poincar\'e substitution} is a substitution of   the form $\pi_{jk}: i\mapsto ijk, j\mapsto jk, k\mapsto k$.
An {\em Arnoux-Rauzy substitution} is given by $\alpha_{k}: i\mapsto ik, j\mapsto jk, k\mapsto k$.
For each $\{i,j,k\}=\{1,2,3\}$, $P_{jk}$ is the incidence matrix of the
substitution $\pi_{jk}$ and $A_k$ is the incidence matrix of $\alpha_k$.
There are thus  6 Poincar\'e and 3 distinct Arnoux-Rauzy substitutions:
\[
\begin{array}{lll}
\pi_{23} = \left\{\begin{array}{l}1 \mapsto 123 \\2 \mapsto 23 \\3 \mapsto 3
\\\end{array}\right., &
\pi_{13} = \left\{\begin{array}{l}1 \mapsto 13 \\2 \mapsto 213 \\3 \mapsto 3
\\\end{array}\right., &
\alpha_{3} = \left\{\begin{array}{l}1 \mapsto 13 \\2 \mapsto 23 \\3 \mapsto 3
\\\end{array}\right., \\
\pi_{12} = \left\{\begin{array}{l}1 \mapsto 12 \\2 \mapsto 2 \\3 \mapsto 312
\\\end{array}\right., &
\pi_{32} = \left\{\begin{array}{l}1 \mapsto 132 \\2 \mapsto 2 \\3 \mapsto 32
\\\end{array}\right., &
\alpha_{2} = \left\{\begin{array}{l}1 \mapsto 12 \\2 \mapsto 2 \\3 \mapsto 32
\\\end{array}\right., \\
\pi_{31} = \left\{\begin{array}{l}1 \mapsto 1 \\2 \mapsto 231 \\3 \mapsto 31
\\\end{array}\right., &
\pi_{21} = \left\{\begin{array}{l}1 \mapsto 1 \\2 \mapsto 21 \\3 \mapsto 321
\\\end{array}\right., &
\alpha_{1} = \left\{\begin{array}{l}1 \mapsto 1 \\2 \mapsto 21 \\3 \mapsto 31
\\\end{array}\right..
\end{array}
\]
Let 
\[
\S:=
\{\alpha_1, \alpha_2, \alpha_3, 
\pi_{12},
\pi_{13},
\pi_{21},
\pi_{23},
\pi_{31},
\pi_{32}\}.
\]
We also  denote by $\S_\alpha$, $\S_\pi$,  respectively, the following
sets of substitutions:
\[
\S_\alpha= \{\alpha_1, \alpha_2, \alpha_3\},\ 
\S_\pi   = \{\pi_{12}, \pi_{13}, \pi_{23}, \pi_{21}, \pi_{31}, \pi_{32}\}, \mbox{ with } 
\S      =  \S_\alpha\cup\S_\pi.
\]
The substitutions  in $\S$ are such that  for any letter $i \in \{1,2,3\}$, 
$\sigma(i)$ admits $i$ as a prefix. This yields the convergence of  any  $\S$-adic
representation in  $  {\mathcal A}^{\mathbb N}$ if the
sequence of letters $(a_n)_n$ is constant.  More precisely,
 for  any sequence  of  substitutions $(\sigma_n)_n$ with values in $\S$ and for  every  letter $a \in \{1,2,3\}$
then  the following  limit exists
$$ \lim_{n\to\infty} \sigma_0 \sigma_1 \cdots \sigma_n(a^{\infty}).$$

\begin{definition}[\bf  Arnoux-Rauzy-Poincar\'e $\S$-adic  word]\label{def:ARPSadicword}
An \emph{Arnoux-Rauzy-Poincar\'e $\S$-adic word} is an infinite word of the form 
$$\bu= \lim_{n\to\infty} \sigma_0 \sigma_1 \cdots \sigma_n(a^{\infty}),$$
where $a\in\A$ and $\sigma_n\in\S$ for all $n\geq0$.
Its directive sequence is the sequence $s=(\sigma_n)_n $.
\end{definition}

\subsection{The Arnoux-Rauzy-Poincar\'e $S$-adic system}
The aim of this section is to  associate with the  Arnoux-Rauzy-Poincar\'e
algorithm   an $\S$-adic  symbolic dynamical system by taking into account the
restrictions  provided by the algorithm which is not complete but Markovian.
We first  recall the definition of an $S$-adic system.  An  {\em $S$-adic
system} is  obtained  by  adding  restrictions on the set of allowed
directive sequences: it   is  given  by a finite directed strongly connected
graph $\G$ labeled by the substitutions, with each infinite path giving rise
to a directive sequence \cite{BDRIMS}.

The partition of $\Delta$ allows to associate with  almost any  point of $\Delta$ a
substitution of~$\S$:
\[
\begin{array}{rcl}
    \sigma:\Delta & \to & \S\\
\bx & \mapsto &
\begin{cases}
\alpha_k    & \text{ if } \bx \in A_k\Delta,\\
\pi_{jk}    & \text{ else if } \bx \in P_{jk}H_{jk}\Delta,
\end{cases}
\end{array}
\]
and a  directive sequence  $s=(\sigma_n)_n$ with $\sigma_n=
\sigma(T^n  (\bx))$ for all $n$.
Observe that the  substitution $\sigma(\bx)$ has for   incidence matrix  $M(\bx)$ such as defined in Section \ref{subsec:algo}.
\begin{definition}\label{def:ARPSadicwordbis}
An \emph{$\S$-adic word $\bu$ generated by the Arnoux-Rauzy-Poincar\'e
algorithm  applied to the  totally irrational vector $\bx \in \Delta$} is   an infinite word of the form
\[
    \bu = 
    \lim_{n\to\infty}
    \left(
    \sigma(\bx)\cdot
    \sigma(T(\bx))\cdot
    \sigma(T^2(\bx))\cdot
    \ldots\cdot
    \sigma(T^{n-1}(\bx))
\right)
    (a^{\infty})
\]
where $a\in\{1,2,3\}$.
Its directive sequence is the sequence $s=(\sigma_n)_n $ with  $\sigma_n=
\sigma(T^n  (\bx))$ for all $n$.
\end{definition}

Let us  show that the  factors of the  directive sequences  produced by the Arnoux-Rauzy-Poincar\'e
algorithm belong to a rational
language  strictly included in $\S^*$.
We consider the automaton $\G=(Q,\S,\delta,I,F)$  defined by
the states 
\[
Q=\{\Delta, H_{12}, H_{13},\\ H_{21}, H_{23}, H_{31}, H_{32}\},
\]
the alphabet $\S$, with
the transitions $\delta\subset Q\times\S\times Q$ being  defined by
\[
\begin{aligned}
\delta=\bigcup_{\{i,j,k\}=\{1,2,3\}} \{(\Delta,\alpha_k,\Delta),
&
(\Delta,\pi_{jk},H_{jk}),
(H_{jk},\alpha_{j},H_{jk}),\\
&
(H_{jk},\alpha_{i},\Delta),
(H_{jk},\pi_{ij},H_{ij}),
(H_{jk},\pi_{ki},H_{ki}),
(H_{jk},\pi_{ji},H_{ji})\},
\end{aligned}
\]
and with initial state $I=\{\Delta\}$
and final state $F=Q$ (see Figure~\ref{figure:markovchain}).
We consider the $\S$-adic system associated with the regular language $\L(\G)$.
\begin{figure}[h!]
\begin{center}
\includegraphics{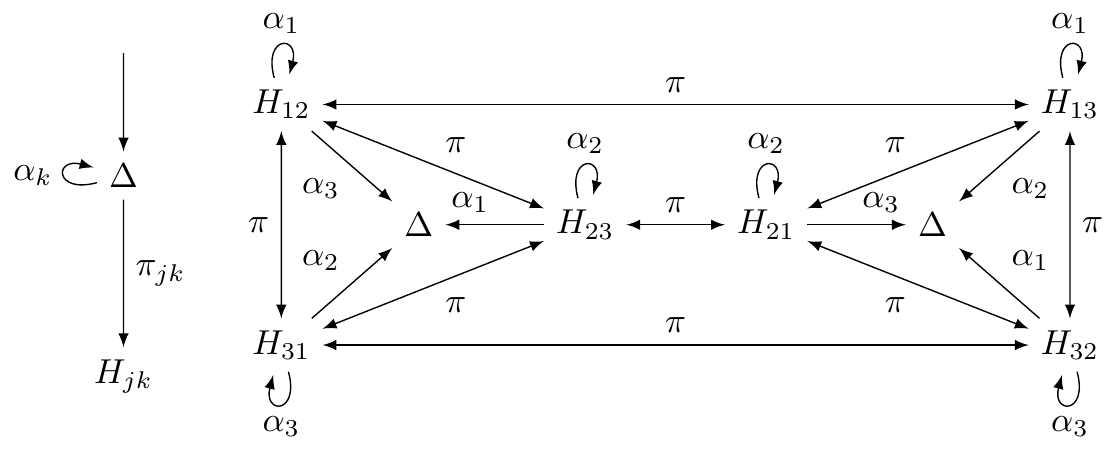}
\end{center}
\caption{The deterministic automaton $\G$. To avoid crossing arrows, the
initial state $\Delta$ is drawn at three places. The indices of $\pi$
transitions are not written since they are determined by the indices of the
arrival state: $\xrightarrow{\pi} H_{jk}$ means $\xrightarrow{\pi_{jk}}
H_{jk}$.}
\label{figure:markovchain}
\end{figure}
This language corresponds to directive sequences 
for which  the sequence of incidence matrices 
is generated by the execution of the Arnoux-Rauzy-Poincar\'e algorithm.
\begin{proposition}[\bf ARP regular language] \label{prop:1}
The  set of directive sequences produced by the
Arnoux-Rauzy-Poincar\'e algorithm is   included in   the set of  labeled infinite
paths in the automaton $\G$.
\end{proposition}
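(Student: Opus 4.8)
The plan is to recognize $\G$ as the transition graph of a Markov partition for the renormalization map $T$, after which the proposition follows by a one-line induction. Attach to each state a region of $\Delta$, namely $R_\Delta := \Delta$ and $R_{H_{jk}} := H_{jk}\Delta$, and write $C_{\alpha_k} := A_k\Delta$, $C_{\pi_{jk}} := P_{jk}H_{jk}\Delta$ for the partition cells. First I would record how $T$ acts cellwise: on $A_k\Delta$ one has $M(\bx)=A_k$, so $T$ is $\bx\mapsto A_k^{-1}\bx$ (renormalized), a bijection $A_k\Delta\to\Delta$; on $P_{jk}H_{jk}\Delta$ one has $M(\bx)=P_{jk}$, so $T$ is $\bx\mapsto P_{jk}^{-1}\bx$ (renormalized), a bijection onto $P_{jk}^{-1}(P_{jk}H_{jk}\Delta)=H_{jk}\Delta$.

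I then want the two Markov properties that match $\G$: (M1) for each state $q$, the cells meeting $R_q$ in positive measure are exactly those labelling an outgoing transition of $q$; and (M2) for each transition $(q,\sigma,q')$, one has $T(C_\sigma\cap R_q)\subseteq R_{q'}$. Granting these, the proof is immediate. For totally irrational $\bx$, each $T^n(\bx)$ is again totally irrational (the matrices lie in $GL(3,\ZZ)$, and renormalization preserves $\QQ$-linear independence of the coordinates), hence never lies on a cell boundary; so, setting $q_0=\Delta$ and letting $q_{n+1}$ be the unique target of the transition labelled $\sigma_n$ out of $q_n$, property (M1) guarantees such a transition exists, and (M2) together with $T^n(\bx)\in R_{q_n}$ gives $T^{n+1}(\bx)\in R_{q_{n+1}}$. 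Thus $(\sigma_n)_n$ labels an infinite path from $\Delta$ in $\G$.

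The computational heart is describing every region by linear inequalities, after which (M1) and (M2) reduce — by the symmetry of the construction under permutations of the three coordinates (and correspondingly of the indices) — to the state $\Delta$ (trivial, since its region is all of $\Delta$) and a single half-triangle $H_{jk}$. I would compute, with $\{i,j,k\}=\{1,2,3\}$ throughout, that $H_{jk}\Delta=\{x_k\le x_i\}$, that $A_m\Delta=\{x_m\ge x_{i'}+x_{j'}\}$ (its max coordinate dominating the sum of the other two), and that $P_{jk}H_{jk}\Delta=\{x_k\ge x_j\ge x_i\}\cap\{x_k\le x_i+x_j\}$, i.e.\ the non--Arnoux-Rauzy region carrying that ordering. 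With these in hand, (M1) at $H_{jk}$ is transparent: the forbidden cell $A_k\Delta$ and the three forbidden Poincaré cells each force $x_k\ge x_i$, hence meet $\{x_k\le x_i\}$ only along $\{x_k=x_i\}$ (measure zero), whereas the five allowed cells have positive-measure intersection — indeed the three allowed Poincaré cells lie entirely inside $\{x_k\le x_i\}$, and together with $A_i\Delta$ and $A_j\Delta$ they cover $H_{jk}\Delta$ up to measure zero.

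For (M2) the transitions split into cheap cases. For a Poincaré transition $(H_{jk},\pi_{j'k'},H_{j'k'})$, containment is automatic from the factored cell, since $T(C_{\pi_{j'k'}}\cap R_{H_{jk}})\subseteq P_{j'k'}^{-1}(P_{j'k'}H_{j'k'}\Delta)=H_{j'k'}\Delta$; the Arnoux-Rauzy transition $(H_{jk},\alpha_i,\Delta)$ is trivial; and the only directed case is the self-loop $(H_{jk},\alpha_j,H_{jk})$, where $A_j^{-1}$ alters only the $j$-th coordinate (it sends $(x_i,x_j,x_k)$ to $(x_i,x_j-x_i-x_k,x_k)$), fixing $x_i$ and $x_k$ up to the common normalization and so preserving $x_k\le x_i$, whence $T(A_j\Delta\cap H_{jk}\Delta)\subseteq H_{jk}\Delta$. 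I expect the main obstacle to be bookkeeping rather than conceptual: pinning down the inequality descriptions of the six Poincaré cells and correctly tracking how the roles of $\{i,j,k\}$ permute under each transition. The total-irrationality hypothesis is exactly what lets me discard all the boundary (measure-zero) intersections cleanly.
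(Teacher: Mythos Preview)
Your proof is correct and follows essentially the same Markov-partition idea as the paper. The one difference worth noting is organizational: the paper first refines to a genuine $12$-cell Markov partition $\tilP=\{A_jH_{jk},\,P_{jk}H_{jk}\}$, checks $T(A_jH_{jk})=T(P_{jk}H_{jk})=H_{jk}$ and decomposes each $H_{jk}$ as a union of six cells of $\tilP$, obtaining a $12$-state automaton that it then determinizes and minimizes to $\G$. You instead attach (overlapping) regions $R_q$ directly to the seven states of $\G$ and verify the transition structure of $\G$ in one pass via your (M1)--(M2); this is a little more economical since it bypasses the refinement and the minimization step. The underlying computations --- that the forbidden cells at $H_{jk}$ all satisfy $x_k\ge x_i$, and that $A_j^{-1}$ preserves $x_k\le x_i$ --- are the same in both arguments.
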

\noindent
The proof of the proposition is provided in the appendix.

\begin{remark}
We can even prove that the closure of  the  set of directive sequences produced by the
Arnoux-Rauzy-Poincar\'e algorithm is   equal to    the set  $X_{\mathcal G}$ of  labeled infinite
paths starting  in the automaton $\G$, as a consequence of the convergence of the algorithm proved in Section \ref{sec:convergence}.
 Let $\Sigma \colon  \Delta \rightarrow X_{\mathcal G}$ be the map 
 that associates with a   (totally irrational  vector) $\bx $  the directive
 sequence $(\sigma_n)_n$
where $\sigma_n  =\sigma (T^n (\bx))$ for all $n$.
 One has the following diagram and measure-theoretical isomorphism,  where $\Sigma$ is a.e.  one-to-one and where the shift  associates with  the label of an infinite path  the  label of the path  deprived of its
 first edge:
$$
\begin{array}{ccc}
\Delta&\stackrel{T}{\longrightarrow}& \Delta\\
\Big\downarrow  \scriptstyle{\Sigma} & &\Big\downarrow \scriptstyle{\Sigma} \\\
    X_{\mathcal G} &\underset{\mathrm{shift}}
{\longrightarrow}& X_{\mathcal G}
\end{array}
$$
\end{remark}
We now can define the Arnoux-Rauzy-Poincar\'e $\S$-adic  system  from the
multidimensional continued fraction algorithm.

\begin{definition}[\bf  Arnoux-Rauzy-Poincar\'e $\S$-adic system]\label{def:ARPSadic}
The \emph{Arnoux-Rauzy-Poincar\'e $\S$-adic system} is the set of   $\S$-adic 
words
$$\bu =  \lim_{n\to\infty} \sigma_0 \sigma_1 \cdots \sigma_n(a^{\infty}),$$
 whose directive sequence   $(\sigma_n)_n$ is an infinite path in  $\G$.
We distinguish three types of directive sequences together with some restrictions on the
chosen letter $a$:
\begin{enumerate}
\item if $(\sigma)_n\in   \S^*  \{\alpha_k\}^\NN$, for $k\in\{1,2,3\}$,  then $a=k$  (Type 1);
\item else if $(\sigma)_n \in   \S^* \{\alpha_k, \alpha_j\}^\NN$,
	then $a\in\{j,k\}$,  for some $\{i,j,k\}=\{1,2,3\}$   (Type 2);
\item otherwise,    take any  $a \in  \{1,2,3\}$ (Type 3).
\end{enumerate}
\end{definition}

The requirements in this definition concerning the choice of the letter $a$
will  be clearer with Proposition \ref{prop:recurrence} below: they aim at
working with  recurrent words which will be used in the  computation  of the
factor complexity function. 
According to Proposition \ref{prop:1},
any  $\S$-adic word $\bu$ generated by the Arnoux-Rauzy-Poincar\'e
algorithm  applied to a  totally irrational vector $\bx \in \Delta$, according to  Definition \ref{def:ARPSadicwordbis},
 belongs to the  Arnoux-Rauzy-Poincar\'e $\S$-adic system. Furthermore, they  correspond to Type 3 in Definition  \ref{def:ARPSadic}.

\begin{remark}
We  stress  the following terminology:  by Arnoux-Rauzy-Poincar\'e   $\S$-adic word,  we mean  an $\S$-adic word with  no other  restriction on  the  directive   sequence  that the fact  that it belongs to
$\S ^{\mathbb  N}$  (see Definition \ref{def:ARPSadicword}), whereas for  a  word    in the  Arnoux-Rauzy-Poincar\'e   $\S$-adic system, the restrictions of Proposition \ref{prop:1} are taken into account.
\end{remark}

\begin{example} We continue Example \ref{ex:part1}.
The word generated by the Arnoux-Rauzy-Poincar\'e algorithm  applied to 
$\bx=(1,\pi,\sqrt{2})$ is:
\begin{align*}
\bu
&=
\alpha_2\pi_{13}\alpha_2\alpha_3\alpha_1\pi_{31}\pi_{23}\pi_{31}\pi_{12}(\alpha_3)^8
\alpha_1(\alpha_2)^6\pi_{21}\alpha_3\alpha_3\alpha_1\pi_{32}
\cdots(1)\\
&= 1232212323221232212323221232123221232322123232212321232212323\cdots
\end{align*}
Note that the substitutions shown on the above line determine the prefix of
$\bu$ of length $1453060$.
The first prefixes are
\[
\alpha_2(1) = 12,\quad
\alpha_2\pi_{13}(1) = 1232,\quad
\alpha_2\pi_{13}\alpha_2(1) = 123221232,\quad
\alpha_2\pi_{13}\alpha_2\alpha_3(1) = 1232212323221232.
\]
Observe that due to its $S$-adic construction, the infinite  word $\bu$ can be decomposed  on    three-block  codes  (that is, on codes consisting   of  three    finite words)
in
many ways:
\begin{align*}
    &12322|1232322|12322|1232322|1232|12322|1232322|1232322|\cdots\\
    &12|32|2|12|32|32|2|12|32|2|12|32|32|2|12|32|12|\cdots\\
    &123|22123|23|22123|22123|23|22123|2123|22123|23|22123|\cdots
\end{align*}
The blocks are in each case respectively $\{12322, 1232322, 1232\}$,
$\{12, 2, 32\}$ and $\{23, 22123, 2123\}$ (they are obtained as    $\sigma_1 \cdots \sigma_n (i)$, for $i=1,2,3$, or else,  as    return words  on the letter $1$ in $\bu$,  where a  return word  on $1$ is   a finite word $v$   
that does not contain  the letter $1$, but  that  is such that     $v1$  is  a factor  of $\bu$).
For comparison, the billiard word of direction $(1,\pi,\sqrt{2})$ starting at $(0,0,0)$ is:
\[
2321232212322312232123221322231223212322321223212322132232123\cdots
\]
It    has quadratic factor complexity. It cannot be decomposed on a three-factor code   (its has too much  return words on  each letter). 
\end{example}

\subsection{Totally irrational vectors  and  weak primitivity}


The next lemma provides a characterization of  weakly primitive  $\S$-adic expansions.
Indeed weak primitivity fails if and only if the directive sequence $(\sigma_n)_{n \in {\mathbb N}}$ contains 
finitely  many  Poincar\'e substitutions and    takes ultimately  at most two values (that  thus are
  Arnoux-Rauzy substitutions). 

\begin{lemma}\label{lem:notweaklyprimitive}
Let $\bu=\lim_{n\to\infty}\sigma_0\sigma_1\cdots\sigma_n(a_n^{\infty})$ be an
$\S$-adic
word generated by the Arnoux-Rauzy-Poincar\'e algorithm applied to the vector
$\bx=(x_1,x_2,x_3)\in\Delta$. 
Its associated $\S$-adic expansion is  weakly primitive if and only if
\[
    (\sigma_n)_{n \in {\mathbb N}} \not \in \S^* \cdot 
    \left(
	\{\alpha_1, \alpha_2\}^\NN \cup
	\{\alpha_1, \alpha_3\}^\NN \cup
	\{\alpha_2, \alpha_3\}^\NN \cup
	\{\alpha_1\}^\NN \cup
	\{\alpha_2\}^\NN \cup
	\{\alpha_3\}^\NN
    \right).
\]
\end{lemma}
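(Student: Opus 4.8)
The plan is to reformulate weak primitivity as a reachability statement about the supports of the column vectors of the products $M_{\sigma_n}\cdots M_{\sigma_{n+r}}$, and then to read off both directions from the combinatorics of the nine matrices together with the automaton $\G$. The starting point is that every $\sigma\in\S$ satisfies $i\in\sigma(i)$, so each incidence matrix has a strictly positive diagonal. Hence the support of $M_{\sigma_n}\cdots M_{\sigma_{n+r}}$ is non-decreasing in $r$, an entry once positive stays positive, and weak primitivity at index $n$ amounts to: the three column supports $C_1,C_2,C_3\subseteq\{1,2,3\}$ of this product (with $C_j=\{j\}$ for the empty product) all reach $\{1,2,3\}$ for some $r$. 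Appending $\sigma$ on the right transforms the triple by $C_j\mapsto\bigcup_{a\in\sigma(j)}C_a$; explicitly, for $\{i,j,k\}=\{1,2,3\}$, the substitution $\alpha_k$ fixes $C_k$ and replaces $C_\ell$ by $C_\ell\cup C_k$ for $\ell\neq k$, while $\pi_{jk}$ (whose \emph{source} is $i$) sets $C_i:=C_1\cup C_2\cup C_3=\{1,2,3\}$, replaces $C_j$ by $C_j\cup C_k$, and fixes $C_k$. Since $j\in C_j$ always, all these maps are inflationary, which is what drives the monotonicity.

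For the direction ``forbidden $\Rightarrow$ not weakly primitive'', I assume the tail from some index $N$ uses only $\alpha_a,\alpha_b$ with $\{a,b,c\}=\{1,2,3\}$. Starting the product at $n=N$ from $(C_a,C_b,C_c)=(\{a\},\{b\},\{c\})$, the letter $c$ can enter $C_a$ only through $C_b$ and enter $C_b$ only through $C_a$; since neither contains $c$ initially, an immediate induction shows $c\notin C_a$ and $c\notin C_b$ for every $r$. Thus the entries $(c,a)$ and $(c,b)$ stay zero, the product is never positive, and weak primitivity fails.

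For the converse I will use the regular restriction $\L(\G)$ from Proposition \ref{prop:1}, which is essential here: without it the sequence $(\alpha_3\pi_{23})^\infty$, which contains infinitely many Poincaré substitutions, would be a counterexample, since its column supports stabilise at $(\{1,2,3\},\{2,3\},\{3\})$. The engine is three observations. (M1) Reading $\pi_{jk}$ makes the source column $C_i$ full. (M2) If some $C_\ell$ is full and $\alpha_\ell$ is read, then every column becomes full. (M3) The only transitions entering $H_{jk}$ from a different state read $\pi_{jk}$, so whenever the path sits at $H_{jk}$ the source column $C_i$ is already full; as the exit $H_{jk}\to\Delta$ reads $\alpha_i$, every such exit makes all columns full by (M2). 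Moreover, when the last of the three distinct $\alpha$'s to occur in a window, say $\alpha_m$, is read, $C_m$ has already absorbed $C_{m'}$ and $C_{m''}$ and is therefore full, so a window containing all three $\alpha_1,\alpha_2,\alpha_3$ fills all columns by (M2). Fixing $n$ and assuming the sequence is not forbidden, I split on the infinite path in $\G$: if there is an $H\to\Delta$ exit after $n$, (M3) gives fullness; otherwise the path eventually stays in $\Delta$, where it reads only $\alpha$'s and non-forbiddenness forces all three $\alpha_k$ to occur (again fullness), or else it is eventually trapped among the states $H_{jk}$ while reading infinitely many Poincarés (a trapped path with finitely many Poincarés would end in an $\alpha_j$-loop, hence be forbidden).

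This trapped case is the main obstacle. Here I use the invariant, coming from (M1) and (M3), that at each $H_{jk}$ the source column (the index outside $\{j,k\}$) is full. If the sources visited take all three values, all columns fill; otherwise they lie eventually in a two-element set, so the two corresponding columns are full while one deficient column $C_c$ must remain non-full forever, and by symmetry I may take $c=3$. A finite inspection of $\G$ restricted to the states whose subscript pair avoids source $3$, namely $H_{23},H_{31},H_{13},H_{32}$, shows that every infinite path there is forced to read $\pi_{31}$ or $\pi_{32}$ infinitely often, and each such reading replaces $C_3$ by $C_3\cup C_1$ or $C_3\cup C_2$, i.e.\ absorbs a full column into $C_3$; hence $C_3$ becomes full, a contradiction. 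The only extra point needing care is to rule out a single persistent source: from any source-$i$ state the allowed Poincaré moves all change the source, so at least two sources recur and the argument applies. This finite, symmetry-reduced verification on the seven-state automaton is the delicate step; once it is in place, every case yields fullness for some $r$, which proves weak primitivity and completes the equivalence.
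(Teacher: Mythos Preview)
Your argument is correct, and it is genuinely different from the paper's. A small point deserves attention: your observation (M3), and likewise the trapped-case invariant ``at each $H_{jk}$ the source column is full'', both presuppose that the $H$-state under consideration was \emph{entered} after the product began at index $n$. If the path already sits at $H_{jk}$ at time $n$ and immediately exits via $\alpha_i$, the source column need not yet be full. This is harmless because, as you note, the supports are monotone and the matrices have positive diagonal, so positivity of any factor $\sigma_{n'}\cdots\sigma_{n'+r'}$ with $n'\geq n$ forces positivity of $\sigma_n\cdots\sigma_{n'+r'}$; you can therefore shift $n$ forward past the first relevant $\pi$-transition and the invariants hold. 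It would be worth making this explicit.

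As to the comparison: the paper argues the converse by contraposition. Assuming non-primitivity at some index $m$, it uses the diagonal observation to say that no factor $\sigma_p\cdots\sigma_r$ with $p\geq m$ can be positive, and then does a short brute-force enumeration: $\alpha_i\alpha_j\alpha_k$ is positive for each ordering of $\{1,2,3\}$, and starting from any Poincar\'e step $\pi_{jk}$ the allowed continuations in $\L(\G)$ are listed explicitly, with the products $\pi_{jk}\alpha_i$, $\pi_{jk}\pi_{ki}$, $\pi_{jk}\pi_{ji}$, $\pi_{jk}\pi_{ij}\alpha_k$, $\pi_{jk}\pi_{ij}\pi_{jk}$, $\pi_{jk}\pi_{ij}\pi_{ik}$, $\pi_{jk}\pi_{ij}\pi_{ki}$ all checked to be positive. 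The only surviving tails are $\pi_{jk}\alpha_j^\infty$ and $\pi_{jk}\alpha_j^t\pi_{ij}\alpha_i^\infty$, both forbidden. Your route replaces this finite product verification by tracking the three column supports $C_1,C_2,C_3$ under the dynamics, together with the structural invariant that the ``source'' column becomes full upon each Poincar\'e step. The paper's method is shorter and fully concrete; your method is more conceptual, makes the role of the source letter transparent, and would transplant more readily to variants of the algorithm or to larger alphabets where enumerating products becomes impractical. Your finite inspection of the four-state subautomaton (avoiding the two source-$3$ states) is the analogue of the paper's product checks, and your observation that every $H\to H$ Poincar\'e move changes the source is exactly what forces at least two sources to recur.
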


\begin{proof}
If    $(\sigma_n)_{n \in {\mathbb N}} \in \S^* \cdot 
    \left(
	\{\alpha_1, \alpha_2\}^\NN \cup
	\{\alpha_1, \alpha_3\}^\NN \cup
	\{\alpha_2, \alpha_3\}^\NN \cup
	\{\alpha_1\}^\NN \cup
	\{\alpha_2\}^\NN \cup
	\{\alpha_3\}^\NN
    \right)$, then  it is easily seen that $(\sigma_n)_{n \in {\mathbb N}}$ is not weakly primitive.

Now, let $(\sigma_n)_{n \in {\mathbb N}}$ be the
directive sequence 
of an $\S$-adic expansion which
is not weakly primitive in the Arnoux-Rauzy-Poincar\'e $\S$-adic system.
Being not weakly primitive means that there exists
$m$ such that for all $p$ with $m\leq p$ the substitution
$\sigma_{m} \cdots  \sigma_{{p}}$ is not positive, that is, one of the
entries of its incidence matrix is zero.
Moreover, for all $p$ and $r$ such that
$m\leq p\leq r$ the incidence matrix of the substitution
$\sigma_{p} \cdots  \sigma_{{r}}$ is not positive.

Note that since the incidence matrix of every substitution in $\S$ has entries $1$ on
the diagonal, the positivity of entries is preserved by left and right
multiplication. Therefore, if $\sigma_1\sigma_2\cdots\sigma_n\in\S^*$ is positive,
then $\varphi$ is positive for every
$\varphi\in\S^*\sigma_1\S^*\sigma_2\S^*\cdots\S^*\sigma_n\S^*$.

Assume first that  $(\sigma_n)_{n\geq m}$ contains no Poincaré substitution.
If $(\sigma_n)_{n\geq m}$
contains three distinct Arnoux-Rauzy substitutions, 
there are some values of $p$ and $r$
with $m\leq p \leq r$ such that
$\sigma_{p} \cdots  \sigma_{{r}}$  
contains three distinct Arnoux-Rauzy substitutions.
One  verifies that $\alpha_i\alpha_j\alpha_k$ is positive for all possible values
of $i$, $j$, $k$ with $\{i,j,k\}=\{1,2,3\}$.
Then
$\sigma_{p} \cdots  \sigma_{{r}}$ is positive  which is a contradiction.
Therefore, we conclude that
$(\sigma_n)_{n \in {\mathbb N}} \in \S^* \cdot 
    \left(
	\{\alpha_1, \alpha_2\}^\NN \cup
	\{\alpha_1, \alpha_3\}^\NN \cup
	\{\alpha_2, \alpha_3\}^\NN
    \right)$.

Assume $(\sigma_n)_{n\geq m}$ contains at least one Poincaré substitution. We
may suppose
that $(\sigma_{n})_{n\geq p}$ starts with a Poincaré substitution
$\sigma_p=\pi_{jk}$ for $p\geq m$.
Since $(\sigma_{n})_{n\geq p}\in\L(\G)$, then
\[
    (\sigma_n)_{n\geq p} \in
    \left(\pi_{jk}\alpha_j^\infty\right)
    \cup
    \left(\pi_{jk}\alpha_j^t\{\alpha_i,\pi_{ki},\pi_{ji}\}\S^\NN\right)
    \cup
    \left(\pi_{jk}\alpha_j^t\pi_{ij}\alpha_i^\infty\right)
    \cup
    \left(\pi_{jk}\alpha_j^t\pi_{ij}\alpha_i^s\{\alpha_k,\pi_{jk},\pi_{ik},\pi_{ki}\}\S^\NN\right),
\]
for some non-negative integers $s$ and $t$ and $\{i,j,k\}=\{1,2,3\}$.
But $\pi_{jk}\alpha_{i}$, $\pi_{jk}\pi_{ki}$ and $\pi_{jk}\pi_{ji}$ are positive.
Also
$\pi_{jk}\pi_{ij}\alpha_{k}$,
$\pi_{jk}\pi_{ij}\pi_{jk}$,
$\pi_{jk}\pi_{ij}\pi_{ik}$ and
$\pi_{jk}\pi_{ij}\pi_{ki}$ are positive.
Therefore,
\[
    (\sigma_n)_{n\geq p} \in
    \left(\pi_{jk}\alpha_j^\infty\right)
    \cup
    \left(\pi_{jk}\alpha_j^t\pi_{ij}\alpha_i^\infty\right)
\]
and we have shown that 
$(\sigma_n)_{n \in {\mathbb N}} \in \S^* \cdot 
    \left(
	\{\alpha_1\}^\NN \cup
	\{\alpha_2\}^\NN \cup
	\{\alpha_3\}^\NN
    \right)$.
\end{proof}

\begin{proposition}\label{prop:primitive}
Let $\bu$ be an
$\S$-adic
word generated by the Arnoux-Rauzy-Poincar\'e algorithm applied to  the   totally irrational vector
$\bx\in\Delta$.  Then 
the associated $S$-adic expansion is weakly  primitive.
In particular,  $\bu^{(m)}$ is of Type $3$, uniformly recurrent  and proper, for all $m$.
\end{proposition}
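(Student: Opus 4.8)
The plan is to derive weak primitivity from Lemma~\ref{lem:notweaklyprimitive} by showing that, for a totally irrational $\bx$, the directive sequence cannot lie in the ``bad'' set $\S^*\cdot(\{\alpha_1,\alpha_2\}^\NN\cup\{\alpha_1,\alpha_3\}^\NN\cup\{\alpha_2,\alpha_3\}^\NN\cup\{\alpha_1\}^\NN\cup\{\alpha_2\}^\NN\cup\{\alpha_3\}^\NN)$; the three ``in particular'' assertions then follow formally. The heart of the argument is a monovariant on the unnormalized orbit.

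Suppose, for contradiction, that the expansion is not weakly primitive. By Lemma~\ref{lem:notweaklyprimitive} there are an index $m$ and indices $\{i,j,k\}=\{1,2,3\}$ such that $\sigma_n\in\{\alpha_j,\alpha_k\}$ for every $n\geq m$ (the one-letter tails $\{\alpha_k\}^\NN$ are the special case $j=k$). I would work with the unnormalized orbit $\mathbf{y}^{(n)}:=M_n^{-1}\cdots M_1^{-1}\bx$, which is a positive multiple of $T^n(\bx)$ and on which the algorithm acts for $n\geq m$ by the integer matrices $A_j^{-1},A_k^{-1}$. Two facts drive the proof. First, total irrationality is preserved along the orbit: each $M_n^{-1}\in GL(3,\ZZ)$, so a nontrivial rational relation among the coordinates of $\mathbf{y}^{(n)}$ would pull back (via $M_1^{-1}\cdots M_n^{-1}$, which is invertible over $\QQ$) to one among the coordinates of $\bx$; since each $\mathbf{y}^{(n)}$ also stays in the non-negative cone (the substitution being chosen precisely so that $M(\cdot)^{-1}(\cdot)\geq 0$), all its coordinates are strictly positive. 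Second, a direct computation shows that $A_\ell^{-1}$ modifies only the $\ell$-th coordinate; hence both $A_j^{-1}$ and $A_k^{-1}$ leave the $i$-th coordinate unchanged, so $\mathbf{y}^{(n)}_i$ equals a fixed constant $c>0$ for all $n\geq m$.

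Now set $s_n:=\mathbf{y}^{(n)}_j+\mathbf{y}^{(n)}_k$. Applying $\alpha_k$ replaces $\mathbf{y}_k$ by $\mathbf{y}_k-\mathbf{y}_i-\mathbf{y}_j$, so $s$ decreases by $\mathbf{y}_i+\mathbf{y}_j\geq c$; applying $\alpha_j$ decreases $s$ by $\mathbf{y}_i+\mathbf{y}_k\geq c$. Thus $s_{n+1}\leq s_n-c$ for all $n\geq m$, which forces $s_n<0$ for $n$ large and contradicts the strict positivity of the coordinates. This contradiction establishes weak primitivity. I expect this to be the main obstacle: one must check that total irrationality is genuinely preserved by the $GL(3,\ZZ)$-action and forces strict positivity (no coordinate degenerates to $0$), and then identify the correct monovariant $s_n$ together with the single coordinate $i$ that both admissible substitutions fix.

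It then remains to read off the three remaining claims. The bad set of Lemma~\ref{lem:notweaklyprimitive} is exactly the union of the Type~1 and Type~2 directive sequences of Definition~\ref{def:ARPSadic}, so weak primitivity is equivalent to $(\sigma_n)_n$ being of Type~3; moreover membership in the bad set is a tail property of $(\sigma_n)_n$, hence shift-invariant, so every $\bu^{(m)}$ is again of Type~3. Weak primitivity of $(\sigma_n)_n$ restricts verbatim to each tail $(\sigma_n)_{n\geq m}$, so each $\bu^{(m)}$ is uniformly recurrent by the criterion recalled above (\cite{Durand:03}). Finally, every substitution of $\S$ is right proper, since the images of $\alpha_k$ and of $\pi_{jk}$ all end with the letter $k$; this right-properness passes to the compositions $\sigma_m\cdots\sigma_n$ and, combined with weak primitivity, yields properness of $\bu^{(m)}$ for all $m$.
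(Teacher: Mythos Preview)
Your proof is correct and follows the same route as the paper: reduce to Lemma~\ref{lem:notweaklyprimitive} and argue that a totally irrational $\bx$ cannot produce a directive sequence in the bad set. The paper's own proof is a single sentence (``noticing that if $(\sigma_n)_n$ lies in the bad set then $\bx$ cannot be totally irrational''), so your monovariant argument on $s_n=y^{(n)}_j+y^{(n)}_k$ with the frozen coordinate $y^{(n)}_i=c>0$ is exactly the kind of detail the paper leaves to the reader; it is clean and correct. Your derivation of the ``in particular'' claims is also more explicit than the paper's. One small remark: for properness you do not actually need right-properness of the substitutions---once you know each $\bu^{(m)}$ is uniformly recurrent and (by positivity of some $\sigma_m\cdots\sigma_n$) contains all three letters, every letter occurs at a non-initial position and hence has a left neighbour, which is Definition~\ref{def:proper} verbatim.
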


\begin{proof}
The conclusion follows from Lemma~\ref{lem:notweaklyprimitive} by noticing that if 
\[
    (\sigma_n)_{n \in {\mathbb N}} \in \S^* \cdot 
    \left(
	\{\alpha_1, \alpha_2\}^\NN \cup
	\{\alpha_1, \alpha_3\}^\NN \cup
	\{\alpha_2, \alpha_3\}^\NN \cup
	\{\alpha_1\}^\NN \cup
	\{\alpha_2\}^\NN \cup
	\{\alpha_3\}^\NN
    \right)
\]
then $\bx$ cannot be totally irrational.
\end{proof}

Observe that  not
every word of the  Arnoux-Rauzy-Poincar\'e $\S$-adic system is  uniformly recurrent. Nevertheless, one
easily checks that   words of this system are  all recurrent.

\begin{proposition}\label{prop:recurrence}
Any  infinite word   ${\bf u}$ in the  Arnoux-Rauzy-Poincar\'e system  is  recurrent 
as well as $\bu ^{(m)}$ for any $m$.

\end{proposition}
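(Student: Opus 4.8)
# Proof Proposal for Proposition~\ref{prop:recurrence}

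The plan is to reduce recurrence of an arbitrary word in the system to the already-established uniform recurrence of weakly primitive expansions (Proposition~\ref{prop:primitive} via Lemma~\ref{lem:notweaklyprimitive}), handling separately the non-weakly-primitive directive sequences, which by Lemma~\ref{lem:notweaklyprimitive} are exactly those lying in $\S^* \cdot (\{\alpha_i,\alpha_j\}^\NN \cup \{\alpha_k\}^\NN)$. Since $\bu^{(m)}$ has directive sequence $(\sigma_n)_{n\geq m}$, and a suffix of a sequence in this critical set is again in it (the $\S^*$ prefix only shrinks), it suffices to prove recurrence for $\bu$ itself; the statement for all $\bu^{(m)}$ then follows by applying the same argument to the shifted system, which is again an ARP $\S$-adic system.

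\textbf{First} I would dispose of the weakly primitive case: by Durand's criterion cited in the excerpt, a weakly primitive $\S$-adic representation yields a uniformly recurrent word, hence a recurrent one, so nothing remains to prove there. \textbf{Next} I would treat the genuinely problematic directive sequences. Write $(\sigma_n)_n = \sigma_0\cdots\sigma_{m-1}\,\tau$ where $\tau$ is the ultimately-Arnoux-Rauzy tail living in $\{\alpha_i,\alpha_j\}^\NN$ or $\{\alpha_k\}^\NN$. The key observation is that recurrence is inherited under application of the substitutions $\sigma_0,\dots,\sigma_{m-1}$: since every $\sigma\in\S$ satisfies that $\sigma(i)$ begins with $i$ (noted in the excerpt) and is non-erasing, if $\bu^{(m)}$ is recurrent then so is $\bu = \sigma_0\cdots\sigma_{m-1}(\bu^{(m)})$, because any factor $w$ of $\bu$ is contained in the image of some factor $w'$ of $\bu^{(m)}$, and recurrence of $w'$ (with the uniform-recurrence-style bounded-gap images) forces $w$ to reappear. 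So I reduce to the purely Arnoux-Rauzy tails: either a constant sequence $\alpha_k^\NN$ or a sequence over $\{\alpha_i,\alpha_j\}$ using both letters infinitely often (the genuinely two-letter case) versus ultimately one letter.

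\textbf{Then} I would analyze these tails directly. For $\alpha_k^\NN$ with the matching initial letter $a=k$ prescribed by Type~1, the limit is simply the constant word $k^\infty$, which is trivially recurrent; the role of the Definition~\ref{def:ARPSadic} restriction on $a$ is precisely to guarantee we land on this recurrent word rather than a non-recurrent one like $k\,k\,k\cdots$ prefixed by a stray symbol. For a two-letter Arnoux-Rauzy tail over $\{\alpha_i,\alpha_j\}$ with $a\in\{i,j\}$ (Type~2), the limit word is an $S$-adic word over the two-letter sub-alphabet $\{i,j\}$ generated by $\alpha_i,\alpha_j$ restricted to these letters; these are exactly the Sturmian substitutions, and if both $\alpha_i,\alpha_j$ occur infinitely often the expansion is weakly primitive on $\{i,j\}$, hence uniformly recurrent. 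If one letter occurs only finitely often we fall back to the constant-word case above. \textbf{The main obstacle} I anticipate is the bookkeeping in the reduction step: verifying carefully that applying a finite prefix $\sigma_0\cdots\sigma_{m-1}$ of substitutions preserves recurrence requires checking that factors of $\bu$ are captured by images of factors of $\bu^{(m)}$ together with the bounded overlap at substitution boundaries, and that the Type-restriction on $a$ is exactly what rules out the spurious non-recurrent limits. Once this prefix-stability lemma is pinned down, every remaining case is either trivial or covered by weak primitivity on a sub-alphabet.
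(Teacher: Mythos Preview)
Your proposal is correct and follows what is essentially the only natural route; the paper itself gives no proof at all, merely asserting before the proposition that ``one easily checks that words of this system are all recurrent'' and then illustrating with the example $\alpha_1^\infty(2^\infty)=21^\infty$ (not recurrent) versus $\alpha_1^\infty(1^\infty)=1^\infty$ (recurrent). So you are not comparing against a paper proof but filling in omitted details, and you have done so soundly: the trichotomy via Lemma~\ref{lem:notweaklyprimitive}, the Type~1 constant-word case, the Type~2 Sturmian reduction (noting that both $\alpha_i,\alpha_j$ must occur infinitely often since Type~1 is excluded, hence weak primitivity on the two-letter subalphabet), and the preservation of recurrence under a finite prefix of non-erasing morphisms are all correct.

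One small wording issue: you say the statement for $\bu^{(m)}$ follows because ``the shifted system is again an ARP $\S$-adic system''. Strictly speaking the shifted directive sequence $(\sigma_n)_{n\geq m}$ is a path in $\G$ starting from whatever state one has reached after $m$ steps, not necessarily from the initial state $\Delta$, so it need not satisfy Definition~\ref{def:ARPSadic} verbatim. This does not harm your argument, since the only ingredients you use---the weak-primitivity characterization of Lemma~\ref{lem:notweaklyprimitive} and the Type~1/2/3 tail classification---depend only on the tail of the directive sequence and on the choice of the seed letter $a$, both of which are unchanged under the shift. It would be cleaner to phrase the reduction directly in those terms rather than invoking membership in the system.
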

\begin{example}
The infinite word $\alpha_1^{\infty} (2^{\infty})=21^{\infty} $ is not recurrent  whereas   $\alpha_1^{\infty} (1^{\infty})=1^{\infty} $ is recurrent.
\end{example}

The restriction of the infinite words under study to the case where each letter always appears
as proper factor will also  be useful to prove the main result of this article.

\begin{definition}[\bf Proper  word ]\label{def:proper}
A word  $\bu \in \{1,2,3\}^{\mathbb N}$
 is  said \emph{proper} if 
each letter
$i\in\{1,2,3\}$ is a proper factor of  $\bu$, or equivalently,  for each letter  $i\in\{1,2,3\}$, 
there exists a letter $e$ such that $ei$ is a factor of $\bu$.
\end{definition}
\section{Factor complexity}\label{sec:prelim}

In this section, we define the terminology relative to languages, bispecial
factors, extension types and factor complexity. We adopt the notation of \cite{MR2759107}.

\subsection{Language and complexity function $p(n)$}

Let $\A=\{1,2,\ldots,d\}$ be an alphabet. 
 The length of a word $u\in\A^n$ is denoted  by $|u|$ and is equal to $n$,  whereas the notation $|u|_i$ stands for the number of   occurrences of the letter $i$ in $u$.
A \emph{language} is a subset of the free monoid $\A^*$. A
language $L$ is \emph{factorial} if  for any  $w\in L$, then any   factor  $u$ of $w$ belongs to $L$.
The abelianized of a finite word $w\in\A^*$ is the vector
\[
\vect{w} = (|w|_1, |w|_2, \ldots, |w|_d) \in \NN^d.
\]
We consider an infinite word $\bu=u_0u_1u_2u_3\cdots\in\A^\NN$.
For each $n\in\NN$, $\L_n(\bu)$ is the set of factors of length $n$ in $\bu$,
while $\L(\bu)$ is the set of all factors in $\bu$, and is called the
\emph{language of $\bu$}.
The language of $\bu$ is factorial.
For each $n\in\NN$, let $p_\bu(n)$ be the cardinality of $\L_n(\bu)$. Then
$p_\bu:\NN\to\NN$ is a function called the \emph{ factor complexity function} of $\bu$.
When no confusion is possible, we omit $\bu$ and just write $p$.

\subsection{Bispecial Factors and Extension Types}\label{sec:bispecial}

Let $w$ be a factor of  either a recurrent infinite word  or of a finite word $\bu$.   We let $E^+(w) =
\{x\in \A \mid wx \in \L(\bu)\}$ denote  the  set of right extensions of $w$ in
$\bu$.
The \emph{right valence} $d^+(w) = \Card\,E^ +(w)$ of $w$ (in $\bu$)   is defined as the
number of distinct right extensions of $w$. 
\emph{Left extensions} $E^-(w)$ and \emph{left valence} $d^-(w)$ are defined
in a  a similar way. A factor whose right valence is at least $2$ is called
\emph{right special}. A factor whose left valence is at least $2$ is called
\emph{left special}. A factor which is both left and right special is called
\emph{bispecial}. 
 The \emph{extension type} $E_\bu(w)$ of a factor $w$ of $\bu$ is
the set of pairs $(a,b)$ of $\A\times \A$ such that $w$ can be extended in
both directions as $awb$:
\[
E_\bu(w) = \{(a,b)\in \A\times \A  \mid awb\in \L(\bu)\}.
\]
We also use the notation $E_\bu(w)$ by $E(w)$ when the context is clear.
The \emph{bilateral multiplicity} of a factor $w$ is
the number
\[
m(w) = \Card \, E(w) - d^-(w) - d^+(w) + 1.
\]
We have the following  fact  (see e.g. \cite[Proposition 4.5.1]{MR2759107}) which links bilateral multiplicity to the notion
of bispecial factor: let $w$ be a factor of    a recurrent infinite word  such that $m(w)\neq 0$;
then, $w$ is bispecial.
A bispecial factor is said \emph{strong} if $m(w)>0$,
\emph{weak} if
$m(w)<0$ and \emph{neutral} if $m(w)=0$.
A bispecial factor is \emph{ordinary} if there exist letters $a,b\in\A$ such that
\begin{equation}\label{eq:defordinaire}
\{(a,b)\}
\subseteq
E(w)
\subseteq
\left(\{a\}\times \A\right) \cup \left(\A\times \{b\}\right).
\end{equation}
An ordinary bispecial factor is
neutral, but the converse is not true for $|\A|>2$.
We will  use this notion in particular in Section  \ref{subsec:life}.
\begin{lemma}
If a bispecial factor is ordinary, then it is neutral.
\end{lemma}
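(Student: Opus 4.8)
The plan is to unpack the two definitions and show that the defining inequality for ordinariness forces the bilateral multiplicity $m(w)$ to vanish. Suppose $w$ is ordinary, so there exist letters $a,b\in\A$ with
\[
\{(a,b)\}\subseteq E(w)\subseteq \left(\{a\}\times\A\right)\cup\left(\A\times\{b\}\right).
\]
First I would analyze the structure of the set $E(w)$ under this containment. Every pair in $E(w)$ has either first coordinate $a$ or second coordinate $b$ (or both). Thus $E(w)$ splits as a union $R\cup C$, where $R=\{(a,y): (a,y)\in E(w)\}$ is the ``row'' part and $C=\{(x,b): (x,b)\in E(w)\}$ is the ``column'' part, and these two sets meet exactly in the pair $(a,b)$, which belongs to $E(w)$ by the left-hand containment.

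The key step is to count $\Card\,E(w)$, $d^-(w)$ and $d^+(w)$ in terms of $R$ and $C$. Note that $d^+(w)=\Card\,E^+(w)$ counts the distinct right extensions, i.e. the distinct second coordinates appearing in $E(w)$; since $(a,y)\in E(w)$ realizes the right extension $y$, and the only pairs with first coordinate different from $a$ are those of the form $(x,b)$, the set of right extensions is exactly $\{y: (a,y)\in E(w)\}$ together with $b$. Because $(a,b)\in E(w)$, the letter $b$ is already among the $y$'s, so $E^+(w)=\{y:(a,y)\in E(w)\}$ and hence $d^+(w)=\Card\,R$. By the symmetric argument on left extensions, $d^-(w)=\Card\,C$. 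Finally, by inclusion–exclusion $\Card\,E(w)=\Card\,R+\Card\,C-\Card(R\cap C)=\Card\,R+\Card\,C-1$, the intersection being the single pair $(a,b)$.

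Substituting these three counts into the definition of bilateral multiplicity gives
\[
m(w)=\Card\,E(w)-d^-(w)-d^+(w)+1=(\Card\,R+\Card\,C-1)-\Card\,C-\Card\,R+1=0,
\]
so $w$ is neutral, as claimed. I expect the only delicate point to be the bookkeeping that $E^+(w)$ equals the row-coordinate set and $E^-(w)$ equals the column-coordinate set; this rests precisely on the fact that $(a,b)\in E(w)$, which guarantees that the ``extra'' extension $b$ (resp. $a$) forced by the off-row (resp. off-column) pairs is not double-counted. Once that observation is pinned down, the computation is immediate. An alternative, perhaps cleaner, framing would be to view $E(w)$ as a subset of the grid $\A\times\A$ contained in the union of the single row $a$ and the single column $b$ and containing their intersection, then invoke the standard identity that for such ``cross-shaped'' sets the multiplicity $\Card\,E(w)-d^-(w)-d^+(w)+1$ always vanishes; I would present the inclusion–exclusion computation above as the justification.
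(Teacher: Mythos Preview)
Your proof is correct and follows essentially the same approach as the paper: both arguments establish the identity $\Card\,E(w)=d^-(w)+d^+(w)-1$ via inclusion--exclusion on the row/column decomposition, using $(a,b)\in E(w)$ to handle the overlap. The paper states this identity in one line without the intermediate $R$, $C$ notation, but the underlying argument is the same.
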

\begin{proof}
If $w$ is ordinary, then
\[
\Card \, E(w) =
\Card \, E^-(w) +
\Card \, E^+(w) - 1
\]
because $(a,b)\in E(w)$.
Thus, we have
$m(w) = \Card \, E(w) - d^-(w) - d^+(w) + 1=0$.
\end{proof}
It is convenient to represent  the extension type $E(w)$ of a bispecial factor $w$ in a graphical 
way. It is often represented as a bipartite graph, but  we choose here a table
representation: a cross ($\times$) is drawn at the intersection of row $a$ and
column $b$ if and only if $(a,b)\in E(w)$ (see Figure~\ref{fig:bispeciaux}).
\begin{figure}[h!]
\begin{center}
\begin{minipage}[c]{0.19\linewidth}
\centering
\footnotesize
{\scriptsize
\begin{tabular}{c|ccc}
  & 1 & 2 & 3 \\
\hline
1 &   & $\times$ &   \\
2 &   & $\times$ &   \\
3 & $\times$ & $\times$ & $\times$
\end{tabular}}\\
$m(w)=0$\\
neutral and ordinary
\end{minipage}
\begin{minipage}[c]{0.19\linewidth}
\centering
\footnotesize
{\scriptsize
\begin{tabular}{c|ccc}
  & 1 & 2 & 3 \\
\hline
1 &   & $\times$ &   \\
2 &   &   & $\times$ \\
3 & $\times$ & $\times$ & $\times$
\end{tabular}}\\
$m(w)=0$\\
neutral but not ordinary
\end{minipage}
\begin{minipage}[c]{0.19\linewidth}
\centering
\footnotesize
{\scriptsize
\begin{tabular}{c|ccc}
  & 1 & 2 & 3 \\
\hline
1 &   & $\times$ &   \\
2 &   &   &   \\
3 &   &   & $\times$
\end{tabular}}\\
$m(w)=-1$\\
weak
\end{minipage}
\begin{minipage}[c]{0.19\linewidth}
\centering
\footnotesize
{\scriptsize
\begin{tabular}{c|ccc}
  & 1 & 2 & 3 \\
\hline
1 &   &   &   \\
2 &   & $\times$ & $\times$ \\
3 & $\times$ & $\times$ & $\times$
\end{tabular}}\\
$m(w)=1$\\
strong
\end{minipage}
\end{center}
\caption{Examples of   tables representing  the  extension type $E(w)$ of a bispecial factor $w$.}
\label{fig:bispeciaux}
\end{figure}


\begin{definition}[\bf Left equivalence]\label{def:leftequivalent}
Let $w$ and $w'$ be two bispecial factors defined on the alphabet $\A$. We say that their extension types
are \emph{left equivalent} if there exists a permutation $\tau$ acting on $\A$ such that $E(w') = \{(\tau(a),b) \mid  (a,b) \in E(w) \}$.
\end{definition}
Right equivalence is defined similarly.
Left equivalence can be interpreted on the  table representation of
the extension type as follows. Indeed one representation  can be obtained from the other by a permutation of
the rows:
\[
\scriptsize
E(w) =
\begin{array}{c|ccc}
  & 1      & 2      & 3      \\
\hline
1 &        &        & \times \\
2 &        &        &        \\
3 & \times & \times & \times
\end{array}
\quad \quad \quad
\text{\normalsize and}
\quad \quad \quad
E(w') =
\begin{array}{c|ccc}
  & 1      & 2      & 3      \\
\hline
1 & \times & \times & \times \\
2 &        &        & \times \\
3 &        &        &
\end{array}
\]
Substitutions considered in this article preserve the first letter and thus
preserve the right extensions. Then, the notion of left equivalence is
sufficient for our need. But in general, we have the following definition.
Of course if the extension type of $w$ and $w'$ are left or right equivalent,
then they are also equivalent.
When the extension type of two words are equivalent, they share common
properties. In particular, being ordinary, strong or weak is preserved under
equivalence.
\begin{lemma}\label{lem:equivalence}
Let $w$ and $w'$ be two bispecial factors such that the
extension type of $w$ and $w'$ are equivalent, then
\begin{enumerate}[label=\textbullet]
\item $w$ is ordinary (neutral, strong, weak resp.) if and only if $w'$ is
    ordinary (neutral, strong, weak resp.),
\item $\Card E(w)=\Card E(w')$, $d^-(w)=d^-(w')$, $d^+(w)=d^+(w')$, $m(w)=m(w')$,
\item if the extension type of $w$ and $w'$ are left equivalent, then
$E^+(w)=E^+(w')$,
\item if the extension type of $w$ and $w'$ are right equivalent, then
$E^-(w)=E^-(w')$.
\end{enumerate}
\end{lemma}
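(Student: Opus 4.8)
The plan is to reduce everything to the single structural observation that an equivalence of extension types is a bijection of tables permuting rows and columns. Write the general equivalence as a pair of permutations $\tau,\rho$ of $\A$ with $E(w') = \{(\tau(a),\rho(b)) \mid (a,b)\in E(w)\}$, so that $(a,b)\mapsto(\tau(a),\rho(b))$ is a bijection of $E(w)$ onto $E(w')$. From this bijection the equality $\Card E(w)=\Card E(w')$ is immediate.

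For the valences in the second item, I would first record the two identities $E^-(w') = \tau(E^-(w))$ and $E^+(w') = \rho(E^+(w))$. Indeed, since $w$ is a factor of a recurrent word, a letter $b$ lies in $E^+(w)$ exactly when some pair $(a,b)$ lies in $E(w)$, which by the equivalence happens exactly when $(\tau(a),\rho(b))\in E(w')$, i.e. when $\rho(b)\in E^+(w')$; the argument on the left is symmetric. As $\tau$ and $\rho$ are bijections, taking cardinalities yields $d^-(w)=d^-(w')$ and $d^+(w)=d^+(w')$. Substituting the three preserved quantities into the definition $m(w)=\Card E(w)-d^-(w)-d^+(w)+1$ gives $m(w)=m(w')$, which finishes item two.

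The first item then splits into two parts. The neutral/strong/weak trichotomy is governed solely by the sign of $m$, so it follows at once from $m(w)=m(w')$. For \emph{ordinary} I would transport the witnesses: if $w$ is ordinary with letters $a,b$ as in \eqref{eq:defordinaire}, then $(\tau(a),\rho(b))\in E(w')$, and applying the bijection to the inclusion $E(w)\subseteq(\{a\}\times\A)\cup(\A\times\{b\})$, together with the surjectivity of $\tau$ and $\rho$, shows $E(w')\subseteq(\{\tau(a)\}\times\A)\cup(\A\times\{\rho(b)\})$; hence $w'$ is ordinary, and the converse uses $\tau^{-1},\rho^{-1}$.

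Finally, items three and four are the specializations $\rho=\mathrm{id}$ and $\tau=\mathrm{id}$ of the identities $E^+(w')=\rho(E^+(w))$ and $E^-(w')=\tau(E^-(w))$ already established for item two. I expect no genuine obstacle here; the only point demanding care is the bookkeeping of which coordinate each valence reads off the table (rows for the left, columns for the right), so that the correct permutation is applied in each identity and the left/right specializations come out on the intended side.
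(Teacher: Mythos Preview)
Your proof is correct and complete; the paper itself does not supply a proof for this lemma, treating it as an immediate consequence of the definitions. Your reduction to the pair of permutations $(\tau,\rho)$ and the derived identities $E^-(w')=\tau(E^-(w))$, $E^+(w')=\rho(E^+(w))$ is exactly the natural argument, and your handling of the ordinary case by transporting the witnesses $(a,b)\mapsto(\tau(a),\rho(b))$ is the right bookkeeping.
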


\subsection{Factor Complexity}

Let $p(n)$ be the factor complexity function of the infinite word $\bu$.
Two other functions derived from the factor complexity are useful, namely 
the sequences of \emph{finite differences of order $1$ and $2$} respectively
of $p(n)$:
\begin{eqnarray}
    s(n)&=&p(n+1)-p(n),\label{eq:defsn}\\
    b(n)&=&s(n+1)-s(n)\label{eq:defbn}.
\end{eqnarray}
Of course, we have
\begin{eqnarray}
p(n) &=& p(0) + \sum_{\ell=0}^{n-1} s(\ell),\label{eq:pn}\\
s(n) &=& s(0) + \sum_{\ell=0}^{n-1} b(\ell).\label{eq:sn}
\end{eqnarray}
These equations are very useful to compute the complexity function $p(n)$ when
its growth is slow (for example  in the case of a linear growth), since in this case functions
$s$ and $b$ take small values.  For example, we have
$p(n)=n+1$ for all $n$  ($\bu$  is thus a Sturmian word) if and only if $s(n)$ is always equal to $1$, which is  also equivalent  to the fact that
exactly two letters occur ($p(1)=2$, $s(0)=1$) and  that $b(n)$ always takes the value $0$.

In this article, one of our main results is to show that some infinite words on
a three-letter alphabet have complexity $p(n)<3n$. In order to achieve this,
we use the next  lemma.

\begin{lemma}\label{lem:pn23iffsumbn01}
Suppose $|\A| = 3$. Then,
$p(n+1)-p(n)\in\{2,3\}$ if and only if
$\sum_{\ell=0}^{n-1} b(\ell) \in \{0,1\}$.
Furthermore, if the sequence of finite differences of order $2$ is such that
\[
(b(\ell))_\ell = 0,\ldots,0,1,0,\ldots,0,-1,0,\ldots,0,1,0,\ldots,0,-1,\ldots
\]
then
$\sum_{\ell=0}^{n-1} b(\ell) \in \{0,1\}$.
\end{lemma}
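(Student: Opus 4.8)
The plan is to reduce everything to the two telescoping identities \eqref{eq:pn} and \eqref{eq:sn} already recorded, so that the statement becomes a direct computation hinging on the value of $s(0)$.

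First I would pin down the initial value $s(0)$. Since $\bu$ is defined over $\A=\{1,2,3\}$ and (for the proper words under study) all three letters occur, one has $p(0)=1$ and $p(1)=3$, so by the definition \eqref{eq:defsn} of $s$ we get $s(0)=p(1)-p(0)=2$. Substituting into \eqref{eq:sn} yields $s(n)=2+\sum_{\ell=0}^{n-1}b(\ell)$ for every $n$. Because $p(n+1)-p(n)=s(n)$ by \eqref{eq:defsn}, the condition $p(n+1)-p(n)\in\{2,3\}$ is equivalent to $2+\sum_{\ell=0}^{n-1}b(\ell)\in\{2,3\}$, hence to $\sum_{\ell=0}^{n-1}b(\ell)\in\{0,1\}$. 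This establishes the asserted equivalence.

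For the second assertion I would argue combinatorially on the partial sums. The quantity $\sum_{\ell=0}^{n-1}b(\ell)$ is the number of $+1$'s minus the number of $-1$'s occurring among $b(0),\ldots,b(n-1)$. Under the prescribed pattern the nonzero entries of $(b(\ell))_\ell$ alternate between $+1$ and $-1$, beginning with a $+1$; hence if the first $n$ terms contain $m$ nonzero entries, the partial sum equals $0$ when $m$ is even and $1$ when $m$ is odd. In both cases the partial sum lies in $\{0,1\}$, which is exactly what is claimed.

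The argument is elementary once the two telescoping identities are in hand; the only point requiring care is the evaluation $s(0)=2$, which is precisely where the hypothesis that all three letters of $\A$ occur is used. Were only two letters to appear we would instead have $s(0)=1$ and the condition would read $\sum_{\ell=0}^{n-1}b(\ell)\in\{1,2\}$, so the normalization by $s(0)$ is the real content of the statement.
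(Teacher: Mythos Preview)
Your proof is correct and follows the same approach as the paper: compute $s(0)=2$ from $p(0)=1$, $p(1)=3$, then use \eqref{eq:sn} to rewrite $p(n+1)-p(n)$ as $2+\sum_{\ell=0}^{n-1}b(\ell)$, and handle the second assertion via the alternating pattern beginning with $+1$. Your treatment of the second part is slightly more explicit than the paper's one-line remark, and your observation that the hypothesis ``all three letters occur'' is where $p(1)=3$ comes from is a useful clarification the paper leaves implicit.
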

\begin{proof}
Since $|\A| = 3$, then $p(1)=3$ and $s(0)=p(1)-p(0)=3-1=2$.
We have
\[
p(n+1)-p(n)= s(n)
= s(0) + \sum_{\ell=0}^{n-1} b(\ell)
= 2    + \sum_{\ell=0}^{n-1} b(\ell),
\]
which yields the proof  of the  first statement.

The  proof of the second one comes from the fact that the first non-zero term of the sequence
$(b(\ell))_\ell$ is $+1$.
\end{proof}



The finite differences of order $1$ and $2$ of $p(n)$ are related to
special and bispecial factors as explained in \cite{MR1440670}. We state a
weaker form (for recurrent words) of a result of \cite{MR2759107}. Indeed,
as we are interested in the factor complexity of some recurrent
words, we do not need to consider unioccurrent or exceptional prefixes.

\begin{theorem}{\rm\cite[Theorem 4.5.4]{MR2759107}}\label{thm:cassaigne454}
Let $\bu\in \A^\NN$ be an infinite recurrent word.
Then, for all $n\in\NN$:
\begin{eqnarray}
s(n) &=& \sum_{ w \in {\mathcal L}_n (u)} (d^+(w) - 1)=
\sum_{ w \in {\mathcal L}_n (u)} (d^-(w) - 1)\\
b(n) &=& \sum_{ w \in {\mathcal L}_n (u)} m(w).\label{eq:bisp_mult}
\end{eqnarray}
\end{theorem}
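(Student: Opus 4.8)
The plan is to reduce both displayed identities to a pair of elementary bijective counts on the factor sets $\L_n(\bu)$, and then to assemble the second formula from the first by the telescoping relations that define $s$ and $b$. Throughout I would exploit that, since $\bu$ is infinite and recurrent, every factor $w\in\L_n(\bu)$ occurs infinitely often, so that $d^+(w)\geq 1$, $d^-(w)\geq 1$ and $\Card E(w)\geq 1$. This is precisely the feature that makes the bijections below involve \emph{all} of $\L_n(\bu)$, with no exceptional-prefix (unioccurrent) correction terms, so that the formulas take the clean form stated.

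First I would establish the right-extension count
\[
p(n+1)=\sum_{w\in\L_n(\bu)} d^+(w).
\]
Indeed, each $v\in\L_{n+1}(\bu)$ factorizes uniquely as $v=wx$, where $w\in\L_n(\bu)$ is its length-$n$ prefix and $x\in\A$; since $v\in\L(\bu)$, one has $x\in E^+(w)$. Conversely, every pair $(w,x)$ with $w\in\L_n(\bu)$ and $x\in E^+(w)$ produces a factor $wx\in\L_{n+1}(\bu)$. This is a bijection between $\L_{n+1}(\bu)$ and $\{(w,x)\mid w\in\L_n(\bu),\ x\in E^+(w)\}$, a set of cardinality $\sum_{w} d^+(w)$. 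Subtracting $p(n)=\sum_{w\in\L_n(\bu)} 1$ yields $s(n)=\sum_w (d^+(w)-1)$. The symmetric factorization $v=xw$, with $w\in\L_n(\bu)$ the length-$n$ suffix and $x\in E^-(w)$ the first letter, gives $p(n+1)=\sum_w d^-(w)$ and hence the second expression for $s(n)$; here recurrence is what ensures that every $w$ is genuinely left-extendable, so that no prefix of $\bu$ is lost from the sum.

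Next I would prove the two-sided count
\[
p(n+2)=\sum_{w\in\L_n(\bu)}\Card E(w),
\]
by the analogous bijection sending $v\in\L_{n+2}(\bu)$ to the triple $(a,w,b)$, where $a$ is its first letter, $b$ its last letter, and $w$ its central factor of length $n$; then $(a,b)\in E(w)$, and conversely each such triple reconstructs $v=awb\in\L_{n+2}(\bu)$. Finally I would assemble the second formula purely algebraically. By definition and telescoping, $b(n)=s(n+1)-s(n)=p(n+2)-2p(n+1)+p(n)$. On the other hand, summing the definition $m(w)=\Card E(w)-d^-(w)-d^+(w)+1$ over $w\in\L_n(\bu)$ and inserting the three counts above gives
\[
\sum_{w\in\L_n(\bu)} m(w)= p(n+2)-p(n+1)-p(n+1)+p(n),
\]
which is exactly $b(n)$.

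Since every step is a finite bijective count, there is no genuine analytic obstacle; the one point requiring care—and the reason the recurrence hypothesis is imposed—is to verify that each factor of $\bu$ admits extensions on both sides and occurs in the interior, so that the sums genuinely range over the whole of $\L_n(\bu)$ and the exceptional-prefix corrections of the general statement in \cite{MR2759107} vanish. I would therefore spend most of the write-up making the three bijections precise and checking this extendability, the remainder being the linear combination above.
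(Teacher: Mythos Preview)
Your argument is correct and is the standard bijective proof of these identities. Note, however, that the paper does not supply its own proof of this theorem: it is quoted as \cite[Theorem 4.5.4]{MR2759107}, with the remark that the recurrence hypothesis allows one to dispense with the correction terms for unioccurrent or exceptional prefixes present in the general statement. There is therefore nothing in the paper to compare your proof against; your write-up simply fills in the argument that the paper chose to cite.

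One small clarification on the role of recurrence: the identities $p(n+1)=\sum_{w\in\L_n(\bu)} d^{\pm}(w)$ and $p(n+2)=\sum_{w\in\L_n(\bu)}\Card E(w)$ are purely bijective and hold for any infinite word, recurrent or not. What recurrence buys you is that $d^-(w)\geq 1$ and $\Card E(w)\geq 1$ for every $w$, so that each summand $d^-(w)-1$ and each $m(w)$ is computed from genuinely non-empty extension sets; without recurrence the formulas still hold algebraically, but one may have $d^-(w)=0$ for a non-recurring prefix, which is the phenomenon the general version in \cite{MR2759107} tracks explicitly. Your explanation that ``no prefix of $\bu$ is lost from the sum'' is slightly misleading on this point, though it does not affect the validity of the proof.
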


\section{Bispecial Factors under Arnoux-Rauzy and Poincar\'e
Substitutions}\label{sec:arpbispecials}

The goal of the next sections is to describe  
factors  of   Arnoux-Rauzy-Poincar\'e $\S$-adic words.
The key   ingredient  is  a synchronization lemma  that allows the  desubstitution   with  respect to the substitutions
in $\S$ (Section \ref{lem:synch}). As  a consequence  for  bispecial  factors,  antecedents (they are uniquely defined and  always  bispecial)  and  bispecial images     together with their possible extensions   are  described  in details 
 in Section \ref{subsec:AR} and \ref{subsec:P}, respectively.
We then can consider the  notion of life of   a bispecial factor produced  by an $\S$-adic expansion (Section  \ref{subsec:life}).
So far we still  do not use  the  restrictions  of Proposition \ref{prop:1} on the  possible  directive sequences in $\S ^{\mathbb N}$ (they will be considered only in Section \ref{sec:proofthm}).
Section \ref{sec:quadratic} illustrates the fact that a  quadratic factor complexity can be reached without these restrictions.
We end this section with    the introduction of  notions of   order on vectors  allowing  the comparison of   abelianized  vectors  under the application
of substitutions in   $\S$ (Section \ref{sec:partialorder}).

 We recall that a Poincar\'e substitution is of the form $\pi_{jk}: i\mapsto ijk, j\mapsto jk, k\mapsto k$.
An Arnoux-Rauzy substitution is given by $\alpha_{k}: i\mapsto ik, j\mapsto jk, k\mapsto k$.

\subsection{Synchronization lemma}\label{lem:synch}

From now on, the alphabet is set to $\A=\{1,2,3\}$.
The next lemma describes the preimage of a factor under Arnoux-Rauzy ({\bf
AR}) and Poincar\'e ({\bf P}) substitutions.
Such statements are classical tools when computing the factor complexity  of fixed points  of substitutions.

\begin{lemma}[\bf Synchronization]\label{lem:preimagearnouxETpoincare}
Let $u\in\A^*$ and $w$ be a factor of $\alpha_k(u)$
for some $\{i,j,k\}=\{1,2,3\}$.
\begin{enumerate}[\rm (i)]
\item If $w$ is empty or if the first letter of $w$ is $i$ or $j$, then there
exist  a unique $v\in\A^*$ and a unique $s\in\{\emptyword,i,j\}$ such that $
w = \alpha_k(v)\cdot s$.
\item If the first letter of $w$ is $k$, then there
exist a unique $v\in\A^*$ and a unique $s\in\{\emptyword,i,j\}$ such that
$
w = k\cdot \alpha_k(v)\cdot s.$
\end{enumerate}
Let $u\in\A^*$ and $w$ be a factor of $\pi_{jk}(u)$
for some $\{i,j,k\}=\{1,2,3\}$.
\begin{enumerate}[\rm (i),resume]
\item If $w$ is empty or if the first letter of $w$ is $i$, then there
exist a unique $v\in\A^*$ and a unique $s\in\{\emptyword,i,j,ij\}$ such that
$
w = \pi_{jk}(v)\cdot s$.
\item If $w=j$, then there
exist a unique $v(=\emptyword)$ such that
$
w = j\cdot \pi_{jk}(v)$.
\item If the first letter of $w$ is $j$ and $|w|>1$, then there
exist a unique $v\in\A^*$ and a unique $s\in\{\emptyword,i,j,ij\}$ such that
$
w = jk\cdot \pi_{jk}(v)\cdot s$.
\item If the first letter of $w$ is $k$, then there
exist  a unique $v\in\A^*$ and a unique $s\in\{\emptyword,i,j,ij\}$ such that
$
w = k\cdot \pi_{jk}(v)\cdot s$.
\end{enumerate}
\end{lemma}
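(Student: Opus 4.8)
The plan is to prove the synchronization lemma by a uniform \emph{decoding} argument: each substitution in $\S$ has the property that the images of letters form a \emph{prefix code} with an easily detected marker, so that any factor $w$ of $\sigma(u)$ can be parsed uniquely into a concatenation of images. First I would make the key structural observation explicit. For $\alpha_k$ we have $\alpha_k(i)=ik$, $\alpha_k(j)=jk$, $\alpha_k(k)=k$, so the letter $k$ always occurs exactly at the \emph{end} of each block $\alpha_k(\ell)$, and moreover the only way the letter $k$ can arise is as such a terminal marker. For $\pi_{jk}$ we have $\pi_{jk}(i)=ijk$, $\pi_{jk}(j)=jk$, $\pi_{jk}(k)=k$; here the images are distinguished by their first letters ($i$, $j$, $k$ respectively), so reading left to right one can unambiguously determine which letter of $u$ produced each block.

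The heart of the proof is therefore a parsing procedure. I would write $w$ as a factor of $\sigma(u)=\sigma(u_1)\sigma(u_2)\cdots$ and locate where $w$ begins inside this concatenation of blocks. The case distinctions in the statement are exactly dictated by whether $w$ begins at a block boundary or in the interior of a block. For part (i)--(ii) concerning $\alpha_k$: if the first letter of $w$ is $i$ or $j$ (or $w$ is empty), then $w$ begins at a block boundary, so after peeling off a maximal prefix of the form $\alpha_k(v)$ the only possible leftover suffix $s$ is an incomplete final block, which (since blocks have length $\le 2$ and start with $i$, $j$, or $k$) can only be one of $\emptyword$, $i$, $j$; this gives $w=\alpha_k(v)\cdot s$ with $v$ and $s$ uniquely determined. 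If instead the first letter is $k$, then $w$ begins in the interior of some block — necessarily as the terminal $k$ of $\alpha_k(i)$ or $\alpha_k(j)$ — so a single letter $k$ must be split off first, after which the remainder $\alpha_k(v)\cdot s$ falls into the previous case. The analysis for $\pi_{jk}$ in parts (iii)--(vi) is identical in spirit but bookkeeps three block lengths: the admissible suffixes are now $\emptyword,i,j,ij$ because the longest block $ijk$ has two proper prefixes, and the extra case $w=j$ (part iv) is a genuine edge case where the factor is too short to reach the defining marker, so $v=\emptyword$ is forced.

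For uniqueness, which is where the real content lies, I would argue that the parsing is forced at every step. For $\alpha_k$ the marker is the position of the letter $k$: since $k$ occurs in $\alpha_k(\ell)$ precisely as the last letter and nowhere else, the block boundaries of $\alpha_k(u)$ are exactly the positions immediately after each $k$, so once we have fixed whether $w$ starts at a boundary or one position past it (determined by the first letter of $w$), the decomposition into $\alpha_k(v)$ plus a short suffix is completely rigid. For $\pi_{jk}$ the images have \emph{distinct first letters}, which makes the code a prefix code read from the left: scanning $w$, each time we are at a block boundary the next letter ($i$, $j$, or $k$) tells us unambiguously which image begins there and hence how many letters to consume, so again $v$ and $s$ are uniquely determined. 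The same prefix-code property handles the interior-start cases once the forced initial fragment ($k$, or $jk$, or the lone $j$) has been stripped.

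The main obstacle I expect is not any single computation but the clean handling of the \emph{interior-start} cases and the short-factor degeneracies, since these are precisely where the suffix/prefix bookkeeping can go wrong. In particular one must verify that stripping the forced initial fragment (the leading $k$ in case (ii) and (vi), or $jk$ in case (v)) lands the remainder genuinely at a block boundary so that the boundary-case analysis applies, and one must check that the truncated final block can \emph{only} contribute a suffix from the listed finite set and not a spurious longer one; this last point is exactly where the bounded block lengths ($\le 2$ for $\alpha_k$, $\le 3$ for $\pi_{jk}$) are used. The degenerate case $w=j$ for $\pi_{jk}$ must be isolated separately precisely because a single $j$ is ambiguous — it could be the start of $\pi_{jk}(j)=jk$ or the middle of $\pi_{jk}(i)=ijk$ — so the lemma resolves it by its own clause with the empty preimage. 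Once these boundary subtleties are pinned down, the six statements follow by routine verification of the finitely many parsing cases.
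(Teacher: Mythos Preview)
Your proposal is correct and follows exactly the same idea as the paper: the images $\{ik,jk,k\}$ of $\alpha_k$ and $\{ijk,jk,k\}$ of $\pi_{jk}$ form prefix codes, so the parsing of any factor is forced. The paper's proof consists of that single sentence, and your write-up is simply a careful unpacking of the case analysis it leaves implicit.
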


\begin{proof}
The sets   $\{ik, jk, k\}$   and $\{ijk, jk, k\}$ form a prefix code.
\end{proof}

\begin{definition}[\bf Antecedent, extended image]
Let $\sigma=\alpha_k$ or $\sigma=\pi_{jk}$, $u\in\A^*$ and $w$ be a
factor of $\sigma(u)$. We say that the \emph{antecedent of $w$ under $\sigma$}
is the unique word $v$ as defined by Lemma~\ref{lem:preimagearnouxETpoincare}.
If $v$ is the antecedent of a word $w$, then we say that the word $w$ is an
\emph{extended image} of $v$.
\end{definition}

Note that  the antecedent is unique,  but  that  a word $v$ may have more than one extended
image. Consider for instance $w_1=23 \pi_{23}(11)1=231231231$ and $w_2=3
\pi_{23}(11)2=31231232$ which  are two distinct extended images of $v=11$.  This is
why the situation becomes here quite intricate    especiallly for bispecial
factors. In fact, it happens that strong and weak bispecial words  appear in pairs:  the image of a neutral bispecial factor $v$ can have two extended images that
are bipsecial,  with one of them being strong, and the other one  being  weak.  For more details, see 
Lemma \ref{lem:lifebispecial} and Remark \ref{rem:pairs} below.

We now  consider images and antecedents of bispecial factors.
\begin{definition}[\bf Bispecial extended image]
Let $u\in\A^*\cup\A^\NN$ and $v$ be a
factor of $u$. 
We shall say that a \emph{bispecial extended image} $w$ of $v$ under $\sigma$
is a bispecial word     of $\sigma(u)$ which is an extended image of $v$ under~$\sigma$.
\end{definition}

For example, let $v$ be a bispecial factor and suppose
$E(v)=\{(1,2),(2,3),(3,1), (3,2),(3,3)\}$.  Then $w=3\pi_{23}(v)$ and
$w'=23\pi_{23}(v)$ are both bispecial extended images of $v$ under $\pi_{23}$.
Indeed, we have
\[
    \pi_{23}\left(\{1v2,2v3,3v1,3v2,3v3\}\right)
    =
    \{123\pi_{23}(v)23,23\pi_{23}(v)3,3\pi_{23}(v)123,3\pi_{23}(v)23,3\pi_{23}(v)3\}
\]
and the extension types are
$E(w)=\{(2,2),(2,3),(3,1),(3,2),(3,3)\}$ and
$E(w')=\{(1,2),(3,3)\}$.

The next lemma allows one to relate every bispecial factor to a shorter one and
eventually to the empty word.

\begin{lemma}[\bf Bispecial extended image growth] \label{lem:growth}
Let $\sigma=\alpha_k$ or $\sigma=\pi_{jk}$ and $w\neq\emptyword$ be a 
non-empty bispecial extended image of $v$ under $\sigma$. Then, $|v|<|w|$.
\end{lemma}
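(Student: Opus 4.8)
The goal is to show that if $w \neq \emptyword$ is a non-empty bispecial extended image of $v$ under $\sigma = \alpha_k$ or $\sigma = \pi_{jk}$, then $|v| < |w|$. Intuitively, each substitution in $\S$ is expanding on every letter, so applying $\sigma$ should strictly increase length; the only subtlety is the bounded prefix/suffix adjustment ($k$ on the left, or a suffix $s$) introduced by the synchronization.

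The plan is to use Lemma~\ref{lem:preimagearnouxETpoincare} to write $w$ explicitly in terms of $\sigma(v)$ and then bound $|w|$ from below. Concretely, in each of the six cases of the Synchronization Lemma, $w$ has the form $w = (\text{prefix})\cdot \sigma(v)\cdot s$, where the prefix is either $\emptyword$, the single letter $k$, or the two-letter block $jk$, and $s$ lies in a fixed finite set of short words. Since the prefix and $s$ have length $\geq 0$, we always have $|w| \geq |\sigma(v)|$. It therefore suffices to understand $|\sigma(v)|$ relative to $|v|$.

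First I would record that every substitution $\sigma \in \S$ satisfies $|\sigma(\ell)| \geq 1$ for each letter $\ell$, and moreover $|\sigma(\ell)| \geq 2$ for at least some letters: indeed, for $\alpha_k$ one has $\alpha_k(i) = ik$, $\alpha_k(j) = jk$ of length $2$ and $\alpha_k(k) = k$ of length $1$, while for $\pi_{jk}$ one has $\pi_{jk}(i) = ijk$ of length $3$, $\pi_{jk}(j) = jk$ of length $2$, and $\pi_{jk}(k) = k$ of length $1$. Hence $|\sigma(v)| = \sum_{\ell} |v|_\ell \, |\sigma(\ell)| \geq |v|$, with equality only if $v$ consists entirely of copies of the letter $k$. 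So for any $v$ containing a letter other than $k$ we immediately get $|\sigma(v)| > |v|$, and thus $|w| \geq |\sigma(v)| > |v|$.

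The main obstacle is the degenerate case $v = k^m$ (a power of the fixed letter $k$), where $\sigma(v) = k^m$ has the same length as $v$ and the length increase must come entirely from the added prefix or suffix. The key point is that such a $v$ cannot be bispecial in the relevant sense while producing a bispecial $w$ that is \emph{non-empty} yet has $|w| = |v|$: if $v = k^m$, then $w = (\text{prefix})\cdot k^m \cdot s$, and since $w \neq \emptyword$ is bispecial it must genuinely be longer — either $v = \emptyword$ (so $|v| = 0 < |w|$ trivially, as $w \neq \emptyword$), or $m \geq 1$ forces $v$ itself to admit left and right extensions, which for the letter $k$ in the image language requires a nontrivial prefix ($k$ or $jk$) or nonempty suffix $s$, yielding $|w| > |v|$. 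I would finish by checking case-by-case that whenever $|\sigma(v)| = |v|$ the synchronization adds strictly positive length: a careful reading of Lemma~\ref{lem:preimagearnouxETpoincare} shows the only way to produce a bispecial $w$ from $v = k^m$ is via cases (ii) or (vi) with the prepended $k$ (or case (v) with $jk$), each contributing length, so that $|w| \geq |v| + 1 > |v|$ in all remaining situations.
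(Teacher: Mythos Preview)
Your overall strategy reaches the right conclusion, but it is more convoluted than necessary and the crucial step is asserted rather than argued. The paper's proof is a one-liner: since $w$ is bispecial (in particular left and right special) in $\sigma(u)$, its first and last letters must admit several one-letter extensions in $\L(\sigma(u))$; inspection of the two-letter factors of $\alpha_k(u)$ and $\pi_{jk}(u)$ shows that only $k$ (and, for $\pi_{jk}$, also $j$) can be a first letter of a left-special word, and only $k$ can be a last letter of a right-special word. This immediately forces $w=k\alpha_k(v)$, or $w\in\{k\pi_{jk}(v),\,jk\pi_{jk}(v)\}$, whence $|w|\geq 1+|\sigma(v)|\geq 1+|v|>|v|$.

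Your detour through ``$v$ contains a letter $\neq k$'' versus ``$v=k^m$'' is superfluous: the added prefix $k$ or $jk$ is what does all the work, and it is present regardless of the shape of $v$. More importantly, in the degenerate case you write that ``a careful reading of Lemma~\ref{lem:preimagearnouxETpoincare} shows the only way to produce a bispecial $w$ \dots\ is via cases (ii) or (vi) \dots\ (or case (v))''. But the Synchronization Lemma by itself says nothing about bispeciality; it only classifies decompositions according to the first letter of $w$. The missing argument is exactly the observation above: left-speciality of $w$ constrains its first letter, which is what selects cases (ii), (v), (vi). Your preceding sentence (``$m\geq 1$ forces $v$ itself to admit left and right extensions, which \dots\ requires a nontrivial prefix'') conflates the extensions of $v$ with those of $w$ and does not supply that argument. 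Once you state clearly why the first letter of $w$ must be $k$ (or $j$), the case split on $v$ becomes unnecessary and the proof collapses to the paper's version.
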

\begin{proof}
Suppose that $\sigma=\alpha_k$ for some $k\in\{1,2,3\}$.
Since $w$ is non-empty, $w$ starts and ends with letter $k$ and from
Lemma~\ref{lem:preimagearnouxETpoincare} (ii), the unique antecedent $v$ of $w$ is such
that $w=k\alpha_k(v)$. We conclude that $|v|<|w|$.

Suppose that $\sigma=\pi_{jk}$ for some $\{i,j,k\}=\{1,2,3\}$.  Since $w$ is
non-empty, $w$ starts with letter $j$ or $k$ and ends with letter $k$.  From
Lemma~\ref{lem:preimagearnouxETpoincare} (iv) and (v), the unique antecedent $v$ of
$w$ is such that $w=k\pi_{jk}(v)$ or $w=jk\pi_{jk}(v)$.  In both cases,
$|v|<|w|$.
\end{proof}

\subsection{Arnoux-Rauzy substitutions}\label{subsec:AR}

The case  of Arnoux-Rauzy substitutions is particularly convenient to handle,  both  
for  bispecial  extended  images   or  for  antecedents of bispecial factors.

\begin{lemma}[\bf AR - Bispecial extended image]\label{lem:extendedimagesAR}
Let $u\in\A^*$  and let $v$ be a bispecial factor of  $u$.
There is a unique bispecial extended image $w=k\alpha_k(v)$ of $v$  in  $\alpha_k (u)$.
\end{lemma}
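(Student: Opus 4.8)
The plan is to establish existence, uniqueness, and the explicit form $w=k\alpha_k(v)$ simultaneously by analyzing how the extension structure of $v$ transfers through $\alpha_k$. First I would recall from Lemma~\ref{lem:preimagearnouxETpoincare}(ii) that any factor of $\alpha_k(u)$ beginning with $k$ has a unique antecedent $v'$ with the word equal to $k\alpha_k(v')\cdot s$ for some suffix $s\in\{\emptyword,i,j\}$. Since $\alpha_k$ sends every letter to a word beginning with that same letter and ending in $k$ (indeed $\alpha_k(i)=ik$, $\alpha_k(j)=jk$, $\alpha_k(k)=k$), the natural candidate for a bispecial extended image of $v$ is $w=k\alpha_k(v)$, which starts and ends with $k$. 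The heart of the argument is to verify that this $w$ is genuinely bispecial in $\alpha_k(u)$ and that no other extended image of $v$ can be bispecial.

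The key computational step is to determine the extension type $E(w)$ from $E(v)$. I would argue that the letter $k$ occurring immediately before $w=k\alpha_k(v)$ and immediately after it records exactly the extensions of $v$: a pair $(a,b)\in E(v)$, giving a factor $avb$ of $u$, produces the factor $\alpha_k(a)\,\alpha_k(v)\,\alpha_k(b)$ in $\alpha_k(u)$, and since every $\alpha_k(a)$ ends in $k$ and every $\alpha_k(b)$ starts with its first letter, one reads off precisely which letters flank the central block $k\alpha_k(v)$. Concretely, the left extension of $w$ is governed by the penultimate letter of $\alpha_k(a)$ (which is $a$ itself when $a\in\{i,j\}$, and is again $k$ when $a=k$), and the right extension is governed by the first letter of $\alpha_k(b)$. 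This yields a bijection-like correspondence showing $E(w)$ is left-equivalent (in the sense of Definition~\ref{def:leftequivalent}) to $E(v)$ up to the relabeling induced by $\alpha_k$, so that $w$ inherits left and right valence at least $2$ from $v$ and is therefore bispecial.

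For uniqueness, I would invoke Lemma~\ref{lem:growth} together with the synchronization lemma: any non-empty bispecial extended image of $v$ must have $v$ as its antecedent, and by Lemma~\ref{lem:preimagearnouxETpoincare} a bispecial factor must start with $k$ (otherwise, if it started with $i$ or $j$, part~(i) would force the form $\alpha_k(v)\cdot s$, and one checks that such a word cannot be left special because the only letter that can precede the block $\alpha_k(v)$ is determined by the single letter preceding $v$'s image, killing left speciality). Thus the bispecial extended image is forced into the form of part~(ii), namely $w=k\alpha_k(v)\cdot s$ with $s\in\{\emptyword,i,j\}$; and since $w$ must end in $k$ to be right special (every right extension letter's image under $\alpha_k$ begins with that letter, but a nonempty suffix $s\in\{i,j\}$ would desynchronize the factorization and prevent $w$ from admitting two genuine right extensions consistent with the antecedent), we conclude $s=\emptyword$, giving the unique $w=k\alpha_k(v)$.

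The main obstacle I anticipate is the uniqueness direction rather than existence: one must rule out cleanly that a word of the form $\alpha_k(v)\cdot s$ or $k\alpha_k(v)\cdot s$ with $s\neq\emptyword$ could accidentally be bispecial. This requires a careful case check on the suffix $s$ and a precise reading, via the prefix-code property noted in the proof of Lemma~\ref{lem:preimagearnouxETpoincare}, of how right extensions of the candidate word correspond to right extensions of $v$. I would organize this by computing, for each admissible $s$, the right valence of the candidate and showing it drops below $2$ unless $s=\emptyword$, while simultaneously confirming that $w=k\alpha_k(v)$ achieves the full valence $d^{\pm}(w)=d^{\pm}(v)\geq 2$ transferred from $v$.
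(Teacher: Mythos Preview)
Your proposal is correct, but it is considerably more elaborate than the paper's argument, and it also front-loads material that the paper defers to the next lemma.

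The paper's proof is two sentences: any bispecial extended image $w$ of $v$ must start and end with the letter $k$ (since in $\alpha_k(u)$ every occurrence of $i$ or $j$ is both preceded and followed by $k$), and then the synchronization Lemma~\ref{lem:preimagearnouxETpoincare}(ii) forces $w=k\alpha_k(v)$. That's the entire uniqueness argument. Existence---i.e., that $k\alpha_k(v)$ really is bispecial---is not argued here at all; the paper establishes $E(k\alpha_k(v))=E(v)$ (and hence bispeciality and $m(w)=m(v)$) separately in Lemma~\ref{lem:arbispecial}.

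Your route is essentially the same for uniqueness, but you phrase the key observation indirectly (``the only letter that can precede the block $\alpha_k(v)$ is determined by the single letter preceding $v$'s image'' and ``a nonempty suffix $s$ would desynchronize the factorization''). The cleaner statement is simply that in $\alpha_k(u)$ the letters $i$ and $j$ have deterministic left and right contexts equal to $k$, so no word beginning or ending in $i$ or $j$ can be bispecial. Your existence argument, computing $E(w)$ from $E(v)$ via the correspondence $(a,b)\leftrightarrow(a,b)$, is exactly the content of Lemma~\ref{lem:arbispecial}; folding it in here is fine but duplicates the next result.
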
 

\begin{proof}
Let $w$ and $w'$ be two extended images of $v$ under $\alpha_k$. Since they
are bispecial factors,  one deduces from   Lemma~\ref{lem:preimagearnouxETpoincare} that both $w$ and $w'$ start and end with letter $k$.
Hence  $w=k\alpha_k(v)=w'$.
\end{proof}


\begin{lemma}[\bf AR - Antecedent of a bispecial]\label{lem:arbispecial}
Let $u\in\{1,2,3\}^*$ and $w\neq\emptyword$ be a bispecial factor of
$\alpha_k(u)$.
Let $v$ be the unique antecedent of $w$ under $\alpha_k$. One has 
 $w=k\alpha_k(v)$.
Furthermore, $v$ is bispecial and it has the same extension type
$E_{\alpha_k(u)}(w)=E_{u}(v)$ and same multiplicity $m(w)=m(v)$ as $w$.
\end{lemma}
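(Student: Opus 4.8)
The plan is to prove the statement in three stages: first establish the shape $w=k\alpha_k(v)$, then show $v$ is bispecial with the claimed extension type, and finally deduce equality of multiplicities as an immediate consequence. The shape is already almost free: since $w$ is a non-empty bispecial factor of $\alpha_k(u)$, it is in particular left special, so it has at least two left extensions. But if $w$ started with $i$ or $j$, then by Lemma~\ref{lem:preimagearnouxETpoincare}(i) we would have $w=\alpha_k(v')s$ for some suffix $s\in\{\emptyword,i,j\}$; I would argue that such a $w$ cannot be left special, because the only letter that can precede a block $\alpha_k(\cdot)$ (given the prefix code structure $\{ik,jk,k\}$) is forced once we know which image block $w$ sits inside. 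Concretely, every occurrence of $w$ in $\alpha_k(u)$ is preceded by the letter $k$ (the last letter of the preceding image block $\alpha_k(\text{letter})$, each of which ends in $k$), contradicting left speciality. Hence the first letter of $w$ must be $k$, and Lemma~\ref{lem:preimagearnouxETpoincare}(ii) gives $w=k\alpha_k(v)s$; a symmetric argument on the right (right speciality forces $w$ to end in $k$, since otherwise the continuation is determined) forces $s=\emptyword$, yielding $w=k\alpha_k(v)$.

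The heart of the argument is the bijection between extensions of $w$ and extensions of $v$. The key structural observation I would use is that $\alpha_k$ maps each letter $\ell$ to a word ending in $k$ and beginning with $\ell$ (for $\ell\in\{i,j\}$ we have $\ell\mapsto \ell k$, and $k\mapsto k$), so that reading $k\alpha_k(v)$, the leading $k$ together with the image $\alpha_k(v)$ synchronizes cleanly. I claim the map $(a,b)\mapsto(a,b)$ on letters induces a bijection $E_u(v)\to E_{\alpha_k(u)}(w)$. For the forward direction: if $avb\in\L(u)$, then $\alpha_k(avb)=\alpha_k(a)\alpha_k(v)\alpha_k(b)\in\L(\alpha_k(u))$, and since $\alpha_k(a)$ ends in $k$ and $\alpha_k(b)$ begins with $b$, the factor $a\cdot k\alpha_k(v)\cdot b = a\cdot w\cdot b$ appears, so $(a,b)\in E_{\alpha_k(u)}(w)$. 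For the reverse direction, I would take any $awb\in\L(\alpha_k(u))$ and desubstitute using Lemma~\ref{lem:preimagearnouxETpoincare}: the letter $a$ preceding $w=k\alpha_k(v)$ must be the final letter $k$ of some image block, and it is preceded in turn by the first letter $a$ of the original letter whose image that block is; synchronization recovers a unique preimage letter, and similarly the right extension $b$ of $w$ is exactly the first letter of $\alpha_k(b')$ for the unique preimage $b'$, giving $b=b'$. This shows $E_{\alpha_k(u)}(w)=E_u(v)$ as subsets of $\A\times\A$.

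Once the extension types coincide exactly (not merely up to equivalence), everything else follows formally: $\Card E(w)=\Card E(v)$, $d^-(w)=d^-(v)$ and $d^+(w)=d^+(v)$, hence by the definition $m(w)=\Card E(w)-d^-(w)-d^+(w)+1=m(v)$; and $v$ is bispecial precisely because $w$ is, since left and right valences are preserved. I expect the main obstacle to be the careful verification of the reverse direction of the extension bijection, namely checking that the synchronization in Lemma~\ref{lem:preimagearnouxETpoincare} applies not only to $w$ itself but uniformly to the extended words $awb$, so that a left extension of $w$ in $\alpha_k(u)$ genuinely comes from a left extension of $v$ in $u$ and no spurious extensions are created or lost at the boundary. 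The delicate point is handling the leading $k$ of $w$: one must confirm that the letter immediately to the left of this $k$ is always the penultimate-block structure predicted by the prefix code $\{ik,jk,k\}$, so that desubstitution is forced and the correspondence $a\mapsto a$ on preimage letters is genuinely well-defined and injective.
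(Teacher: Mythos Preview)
Your approach is exactly the paper's, only spelled out in far more detail (the paper's proof is essentially a one-line ``one checks'' with a figure). The argument for the shape $w=k\alpha_k(v)$ via left and right speciality is correct, and your reverse inclusion $E_{\alpha_k(u)}(w)\subseteq E_u(v)$ by synchronization is fine.

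There is one small slip in the forward direction. You argue that since $\alpha_k(a)$ ends in $k$, the factor $a\cdot k\alpha_k(v)\cdot b$ appears inside $\alpha_k(avb)$. For $a\in\{i,j\}$ this works because $\alpha_k(a)=ak$, so $\alpha_k(avb)=ak\,\alpha_k(v)\,\alpha_k(b)$ contains $awb$ as a prefix. But for $a=k$ one has $\alpha_k(k)=k$, so $\alpha_k(kvb)=k\,\alpha_k(v)\,\alpha_k(b)=w\,\alpha_k(b)$ yields only $wb$, not $kwb$. To recover the missing left $k$ you must look one letter further left in $u$: whenever $ekvb\in\L(u)$ for some letter $e$, the block $\alpha_k(e)$ ends in $k$ and supplies it. This is precisely the hypothesis the paper makes explicit in the Poincar\'e analogue (Lemma~\ref{lem:poincareextensions}); for an arbitrary finite $u$ the equality $E_{\alpha_k(u)}(w)=E_u(v)$ can in fact fail on this boundary case (try $u=3121$, $k=3$, $v=\emptyword$, $w=3$), but in every application $u$ is recurrent and the issue evaporates. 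You correctly flagged the leading $k$ as the delicate point---the fix is simply to treat $a=k$ separately rather than relying on ``$\alpha_k(a)$ ends in $k$'' alone.
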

\begin{figure}[h!]
\begin{center}
\includegraphics{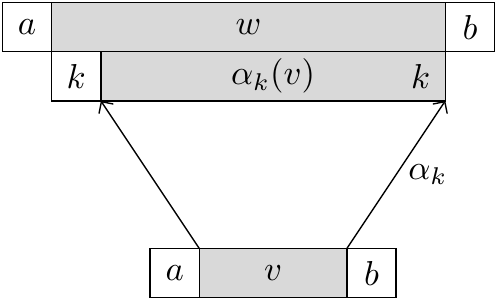}
\end{center}
\caption{The preimage of the bispecial word $w$ under $\alpha_k$.}
\label{fig:preimage_arnoux}
\end{figure}
\begin{proof}
One checks that $(a,b)\in E(v)$ if and only if
$(a,b)\in E(k\alpha_k(v))$ (see Figure~\ref{fig:preimage_arnoux}).
Then $E(k\alpha_k(v))=E(v)$. We deduce that $E^+(k\alpha_k(v))=E^+(v)$ and
$E^-(k\alpha_k(v))=E^-(v)$.  From this we conclude that $m(k\alpha_k(v))=m(v)$.
\end{proof}

\subsection{Poincar\'e substitutions}\label{subsec:P}
The  case of  Poincar\'e substitutions is  more  delicate to handle as already illustrated by the following  result.
We loose here unicity  for the bispecial extended images.

\begin{lemma}[\bf P - Bispecial extended images]\label{lem:extendedimagesP}
Let  $i,j,k$ such that $\{i,j,k\}=\{1,2,3\}$.  Let $u\in\{1,2,3\}^*$  and let $v$ be a bispecial factor of  $u$.
There are at most two distinct bispecial extended images of $v$ under
$\pi_{jk}$. They are either $k\pi_{jk}(v)$ or $jk\pi_{jk}(v)$.
\end{lemma}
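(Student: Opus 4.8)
The plan is to combine the Synchronization Lemma~\ref{lem:preimagearnouxETpoincare} with three elementary local rules governing which letters may precede or follow a given letter in $x:=\pi_{jk}(u)$. Since the letter images are the three blocks $\pi_{jk}(i)=ijk$, $\pi_{jk}(j)=jk$, $\pi_{jk}(k)=k$, one checks at once that: (a) every occurrence of $i$ in $x$ is the first letter of a block $ijk$, so (unless it is the very first letter of $x$) it is preceded by the last letter of the previous block, namely $k$; (b) every occurrence of $j$ is immediately followed by $k$, both inside $ijk$ and inside $jk$; and (c) every occurrence of $i$ is immediately followed by $j$. Consequently, any factor of $x$ beginning with $i$ satisfies $E^-(w)\subseteq\{k\}$, any factor ending with $j$ satisfies $E^+(w)\subseteq\{k\}$, and any factor ending with $i$ satisfies $E^+(w)\subseteq\{j\}$; in each case the corresponding valence is at most $1$.

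Next I would take a bispecial extended image $w$ of $v$ under $\pi_{jk}$ and apply Lemma~\ref{lem:preimagearnouxETpoincare}, which sorts $w$ according to its first letter into cases (iii)--(vi). In case (iii) the word $w$ is empty or starts with $i$, so by rule (a) one has $d^-(w)\leq 1$ and $w$ is not left special, contradicting bispeciality. In case (iv) we have $w=j$, and rule (b) gives $d^+(w)\leq 1$, so $w$ is not right special, again a contradiction. Hence $w$ must fall under case (v) or (vi), that is, $w=jk\cdot\pi_{jk}(v)\cdot s$ or $w=k\cdot\pi_{jk}(v)\cdot s$ with $s\in\{\emptyword,i,j,ij\}$.

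It then remains only to force $s=\emptyword$. If $s=i$, then $w$ ends with $i$ and rule (c) yields $d^+(w)\leq 1$; if $s=j$ or $s=ij$, then $w$ ends with $j$ and rule (b) yields $d^+(w)\leq 1$. In each of these sub-cases $w$ fails to be right special, contradicting bispeciality. Therefore $s=\emptyword$, so $w=jk\,\pi_{jk}(v)$ (case (v)) or $w=k\,\pi_{jk}(v)$ (case (vi)). As these are the only two possible words, the set of bispecial extended images of $v$ is contained in $\{k\,\pi_{jk}(v),\,jk\,\pi_{jk}(v)\}$ and thus has at most two elements, which is exactly the statement.

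The main point requiring care — rather than a genuine obstacle — is the bookkeeping of extensions: left and right valences must be read off from the \emph{actual} occurrences of $w$ in the finite word $x$, so that an occurrence at the boundary (as a prefix or suffix of $x$) contributes no extension. This is harmless here because every step bounds a valence \emph{from above}, and a missing boundary extension can only lower it further. All the real content is encoded in the three local rules (a)--(c), which merely record the prefix-code structure of $\{ijk,jk,k\}$ already exploited in the proof of Lemma~\ref{lem:preimagearnouxETpoincare}; once they are established, the elimination of cases (iii)--(vi) is routine.
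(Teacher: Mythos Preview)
Your argument is correct and follows essentially the same route as the paper's: the paper simply asserts in one line that a bispecial factor of $\pi_{jk}(u)$ ``must start with letter $j$ or $k$ and end with letter $k$'' and then invokes Lemma~\ref{lem:preimagearnouxETpoincare}, whereas you spell out explicitly, via your rules (a)--(c), \emph{why} those are the only admissible first and last letters. The elimination of the suffixes $s\in\{i,j,ij\}$ in cases (v)--(vi) is exactly the content of ``ends with $k$''.

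One small slip: in your handling of case~(iii) you write ``the word $w$ is empty or starts with $i$, so by rule~(a) one has $d^-(w)\le 1$'', but rule~(a) says nothing about the empty word, and indeed $\emptyword$ is left special in $\pi_{jk}(u)$. This is not really your oversight: the paper's one-line proof tacitly assumes $w\neq\emptyword$ as well (an empty word does not ``start with'' anything), and every application of the lemma in the paper concerns non-empty bispecial extended images (cf.\ Lemma~\ref{lem:growth}). You could simply note at the outset that $w\neq\emptyword$, since the statement is meant to list the \emph{non-trivial} bispecial extended images of $v$.
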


\begin{proof}
Let $w$ be a bispecial extended image of $v$ under $\pi_{jk}$.
Since $w$ is a bispecial factor, it must start with letter $j$ or
$k$ and end with letter $k$.  From Lemma~\ref{lem:preimagearnouxETpoincare},  one gets $w \in
\{jk\pi_{jk}(v), k\pi_{jk}(v)\}$.
\end{proof}

The ``at most two" of Lemma~\ref{lem:extendedimagesP} will be made more  precise later
in Lemma~\ref{lem:detailedpoincare} where conditions will be given for when a
bispecial factor  has one or two bispecial extended images under a Poincar\'e
substitution.

In order to get a similar result concerning the  antecedent of  a bispecial factor
under  Poincar\'e substitutions (see Lemma~\ref{lem:pbispecial} below), we first need the
following result stated for factors in general which is also used for proving
Lemma~\ref{lem:detailedpoincare} and \ref{lem:lifebispecial}.

\begin{lemma}[\bf P - Extensions]\label{lem:poincareextensions}
Let  $i,j,k$ such that $\{i,j,k\}=\{1,2,3\}$. Let $u\in\{1,2,3\}^*$ and $v$ be a factor of $u$. We assume that  for all $(a,b) \in E(v)$, there exists a letter
 $e$ such that $eavb$ is also  a factor of $u$. 
The extensions of $v$ in $u$ are related to the extensions of
$k\pi_{jk}(v)$ and $jk\pi_{jk}(v)$ considered as factors of
$\pi_{jk}(u)$:
\[
\begin{array}{l}
(i,b) \in E(v) \iff (j,b) \in E(k\pi_{jk}(v))
\quad\text{and}\quad(i,b) \in E(jk\pi_{jk}(v)),\\
(j,b) \in E(v) \iff (j,b) \in E(k\pi_{jk}(v))
\quad\text{and}\quad(k,b) \in E(jk\pi_{jk}(v)),\\
(k,b) \in E(v) \iff (k,b) \in E(k\pi_{jk}(v)).
\end{array}
\]
\end{lemma}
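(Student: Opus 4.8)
The plan is to verify each of the three biconditionals directly by analysing how the Poincaré substitution $\pi_{jk}$ acts on occurrences of $v$ and their one-letter extensions, using the synchronization Lemma~\ref{lem:preimagearnouxETpoincare} to control how letters land after substitution. Recall $\pi_{jk}$ acts by $i\mapsto ijk$, $j\mapsto jk$, $k\mapsto k$. The two candidate extended images are $w_k = k\pi_{jk}(v)$ and $w_{jk}=jk\pi_{jk}(v)$, both of which end with $k$ (since $\pi_{jk}(v)$ ends with $k$ whenever $v$ is nonempty, and the trivial cases are checked separately). The key observation driving everything is that a right extension $b$ of $v$ contributes, after applying $\pi_{jk}$, the block $\pi_{jk}(b)$ immediately following $\pi_{jk}(v)$, and the first letter of $\pi_{jk}(b)$ determines the right extension of $w_k$ and $w_{jk}$; since each of $\pi_{jk}(1),\pi_{jk}(2),\pi_{jk}(3)$ begins with its own index, the first letter of $\pi_{jk}(b)$ is simply $b$, so right extensions transfer transparently. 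The interesting content is therefore in the \emph{left} coordinate.

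First I would treat the right coordinate uniformly: for any $b$, I claim $(\cdot, b)\in E(w)$ iff $w b$ (i.e. the appropriate prefix of the image) occurs, and since $\pi_{jk}(vb)=\pi_{jk}(v)\pi_{jk}(b)$ starts its $b$-block with the letter $b$, the right letter $b$ is preserved in passing from $v$ to either extended image. This lets me reduce all three statements to tracking the left extensions. For the left coordinate, the three cases $a=i$, $a=j$, $a=k$ differ because the images $\pi_{jk}(i)=ijk$, $\pi_{jk}(j)=jk$, $\pi_{jk}(k)=k$ end in different suffixes and have different lengths. Concretely, if $avb$ occurs in $u$, then $\pi_{jk}(a)\pi_{jk}(v)\pi_{jk}(b)$ occurs in $\pi_{jk}(u)$; the letters of $\pi_{jk}(a)$ immediately preceding the block $\pi_{jk}(v)$ are what produce the left extensions of $w_k$ and $w_{jk}$. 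For $a=k$, the suffix contributed is just $k$, so $k$ precedes $\pi_{jk}(v)$, giving left extension $k$ to $w_k=k\pi_{jk}(v)$ only (the block $k\pi_{jk}(v)$ has its leading $k$ matched). For $a=j$, the suffix of $\pi_{jk}(j)=jk$ is $jk$, so $w_{jk}=jk\pi_{jk}(v)$ is seen with $k$ on its left (from the next block) while $w_k=k\pi_{jk}(v)$ is seen with $j$ on its left. For $a=i$, the suffix $\pi_{jk}(i)=ijk$ exposes $i$ before $jk\pi_{jk}(v)$ and $j$ before $k\pi_{jk}(v)$, yielding exactly the claimed left letters.

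The forward implications ($\Rightarrow$) follow immediately from these occurrence arguments: an occurrence of $avb$ in $u$ produces the stated occurrences of extended $w$'s in $\pi_{jk}(u)$. The converse implications ($\Leftarrow$) are where the hypothesis is essential. The assumption that every $(a,b)\in E(v)$ can be extended on the left (there is $e$ with $eavb$ a factor of $u$) is precisely what guarantees that an occurrence of, say, $j\cdot k\pi_{jk}(v)\cdot b$ inside $\pi_{jk}(u)$ actually \emph{desubstitutes} to an occurrence of $jvb$ in $u$, rather than being a spurious coincidence of letters straddling a block boundary that has no preimage. I would use the prefix-code property from the proof of Lemma~\ref{lem:preimagearnouxETpoincare} — the sets $\{ijk,jk,k\}$ form a prefix code — to argue synchronization: once we know the relevant factor sits inside the image with enough context on both sides (provided by the extendability hypothesis), the factorization into $\pi_{jk}$-blocks is forced, and reading off the block boundaries recovers the unique antecedent letters $a$ and $b$.

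The main obstacle I anticipate is the bookkeeping at block boundaries, namely making rigorous why the extendability hypothesis on $E(v)$ is exactly the right amount of context: for $a=i$ and $a=j$ the left extension of $w_k$ or $w_{jk}$ sees a letter $j$ or $k$ coming from \emph{within} the block $\pi_{jk}(a)$, and I must ensure this internal letter genuinely corresponds to a left extension of $v$ in $u$ and is not an artifact of where I chose to cut. This is handled by invoking synchronization once more, together with the extendability hypothesis, to certify that the occurrence in $\pi_{jk}(u)$ lifts to a bona fide occurrence $eavb$ in $u$; the prefix-code structure then pins down $a$ unambiguously from the suffix pattern $k$, $jk$, or $ijk$ that precedes $\pi_{jk}(v)$.
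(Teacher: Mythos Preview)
Your approach is essentially the same as the paper's: treat right extensions via the first-letter-preserving property of $\pi_{jk}$, then case-split on the left extension $a\in\{i,j,k\}$ and read off the letters immediately preceding $k\pi_{jk}(v)$ and $jk\pi_{jk}(v)$ inside $\pi_{jk}(avb)$, using the prefix-code structure for the converse. The paper does exactly this, just much more tersely.

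One correction on where the extendability hypothesis is used. You write that the forward implications ``follow immediately'' from applying $\pi_{jk}$ to $avb$, and that the hypothesis is essential only for the converse. This is not quite right. For $a=j$, the occurrence $jvb$ gives $\pi_{jk}(jvb)=jk\,\pi_{jk}(v)\,b\cdots$, which yields $(j,b)\in E(k\pi_{jk}(v))$ but does \emph{not} by itself give $(k,b)\in E(jk\pi_{jk}(v))$: for that you need a letter to the left of $jk\pi_{jk}(v)$, i.e., you need some $e$ with $ejvb$ a factor of $u$, and then the last letter of $\pi_{jk}(e)$ (always $k$) supplies the left extension. The same issue arises for $a=k$: from $kvb$ alone you get $k\pi_{jk}(v)b\cdots$, but the left extension of $k\pi_{jk}(v)$ again comes from $\pi_{jk}(e)$ for some $e$ preceding $k$ in $u$. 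So the hypothesis is already needed in the forward direction for the second and third biconditionals; your phrase ``from the next block'' is precisely where it enters. The converse direction, by contrast, goes through cleanly with the prefix-code synchronization alone (e.g.\ from $(i,b)\in E(jk\pi_{jk}(v))$ one reads $ijk\pi_{jk}(v)b\cdots$ and the leading $i$ forces the block $\pi_{jk}(i)=ijk$, hence $(i,b)\in E(v)$).
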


\begin{proof}
First note that $i\notin E^-(k\pi_{jk}(v))$ and $j\notin E^-(jk\pi_{jk}(v))$.
Note also that the right extensions are preserved by $\pi_{jk}$ because $\pi_{jk}$
preserves the first letter of words.
Let $(a_0,b)\in E(v)$, $(a_1,b)\in E(k\pi_{jk}(v))$ and $(a_2,b)\in
E(jk\pi_{jk}(v))$ and let us consider each case $a_0=i$, $a_0=j$ and $a_0=k$
separately (see Figure~\ref{fig:preimage_poincare}).
\begin{figure}[h!]
\begin{center}
\includegraphics{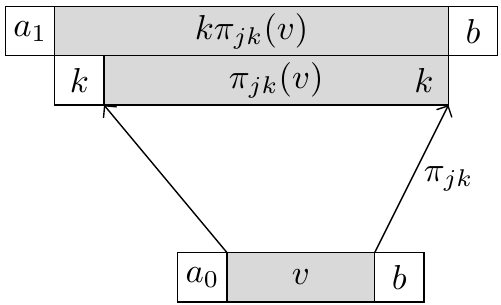}
\quad\quad\quad\quad
\includegraphics{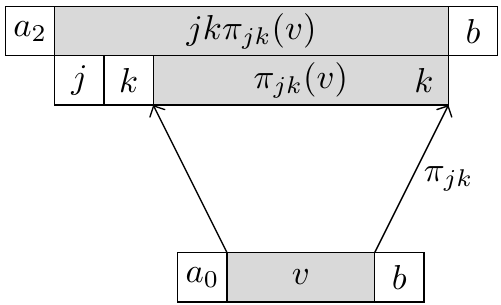}
\end{center}
\caption{The preimage of $k\pi_{jk}(v)$ and $jk\pi_{jk}(v)$ under $\pi_{jk}$.}
\label{fig:preimage_poincare}
\end{figure}
According to the assumption made on $v$, one  checks that
if $a_0=i$, then $a_1=j$ and $a_2=i$;
if $a_0=j$, then $a_1=j$ and $a_2=k$;
if $a_0=k$, then $a_1=k$.
The reciprocals are also verified.
\end{proof}

In the next lemma, we show that bispecial factors are preserved under
desubstitution by the Poincar\'e substitution.

\begin{lemma}[\bf P - Antecedent of a bispecial]\label{lem:pbispecial}
Let $u\in\{1,2,3\}^*$ and $w\neq\emptyword$ be a bispecial factor of
$\pi_{jk}(u)$.
Let $v$ be the unique antecedent of $w$ under $\pi_{jk}$. One has 
 either $w=k\pi_{jk}(v)$, or $w=jk\pi_{jk}(v)$.
Furthermore, $v$ is a  bispecial factor of $u$.
\end{lemma}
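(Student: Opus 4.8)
The plan is to prove Lemma~\ref{lem:pbispecial} in two stages: first establish the structural dichotomy $w = k\pi_{jk}(v)$ or $w = jk\pi_{jk}(v)$ by purely syntactic reasoning via the synchronization lemma, and then show that the antecedent $v$ inherits bispeciality from $w$. The first stage is essentially already done: since $w \neq \emptyword$ is bispecial, it is in particular left special, so it admits at least two distinct left extensions in $\pi_{jk}(u)$. I would argue that a bispecial factor cannot begin with the letter $i$, because the synchronization lemma (Lemma~\ref{lem:preimagearnouxETpoincare}(iii)) writes any factor starting with $i$ as $\pi_{jk}(v')\cdot s$, and the only letters that can appear immediately to the left of such a factor inside $\pi_{jk}(u)$ are those ending the block $\pi_{jk}(\cdot)$, namely $k$; hence the left valence would be $1$. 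So $w$ starts with $j$ or $k$, and (as in Lemma~\ref{lem:extendedimagesP}) necessarily ends with $k$, forcing $w \in \{k\pi_{jk}(v), jk\pi_{jk}(v)\}$ by cases (iv), (v), (vi) of the synchronization lemma.

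The harder stage is showing $v$ is bispecial, and here the natural tool is Lemma~\ref{lem:poincareextensions}, which gives a precise dictionary between $E(v)$ and the extension types of $k\pi_{jk}(v)$ and $jk\pi_{jk}(v)$. My plan is to read that dictionary backwards. Suppose $w = k\pi_{jk}(v)$. From the three biconditionals, the right extensions of $v$ correspond bijectively to the right extensions of $w$ (since $\pi_{jk}$ preserves first letters, right-speciality is transparent), so $d^+(v) = d^+(w) \geq 2$, giving right-speciality of $v$. For left-speciality I would examine which left letters of $w$ are possible: the rows of $E(k\pi_{jk}(v))$ that can be occupied are $j$ (coming from $a_0 = i$ or $a_0 = j$) and $k$ (coming from $a_0 = k$). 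If $w$ is left special it uses both rows $j$ and $k$, which by the dictionary forces $E^-(v)$ to contain $k$ together with at least one of $i,j$ — two distinct letters — so $d^-(v) \geq 2$. The symmetric case $w = jk\pi_{jk}(v)$ runs through rows $i$ and $k$ of $E(jk\pi_{jk}(v))$ and again forces two distinct left extensions of $v$.

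The main obstacle I anticipate is the hypothesis of Lemma~\ref{lem:poincareextensions}: it requires that for every $(a,b) \in E(v)$ there is a letter $e$ with $eavb \in \L(u)$, i.e.\ that the extensions of $v$ are themselves extendable on the left. This is a recurrence-type assumption that is not automatic for an arbitrary factor of a finite word $u$, so I would need to either invoke that $u$ is taken inside a suitably recurrent (proper) $\S$-adic word — as arranged by Proposition~\ref{prop:primitive} and Definition~\ref{def:proper} — or else verify directly that the desubstitution argument only uses the biconditionals in the direction that holds unconditionally. Concretely, the safe route is to apply Lemma~\ref{lem:poincareextensions} only in the ``$\Leftarrow$'' direction needed to pull left/right extensions of $w$ back to extensions of $v$, checking that each such pullback is justified by the explicit local pictures in Figure~\ref{fig:preimage_poincare} without needing the global extendability clause. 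I expect that this case analysis, combined with the prefix-code property $\{ijk, jk, k\}$ underlying the synchronization lemma, closes the argument; the bulk of the work is the bookkeeping of which rows of the extension table are reachable, rather than any genuinely delicate estimate.
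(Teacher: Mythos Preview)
Your proposal is correct and follows essentially the same route as the paper: establish the dichotomy $w\in\{k\pi_{jk}(v),\,jk\pi_{jk}(v)\}$ from the fact that a bispecial $w$ must start with $j$ or $k$ and end with $k$, then use Lemma~\ref{lem:poincareextensions} to pull back left extensions of $w$ to distinct left extensions of $v$ (right extensions being preserved trivially). Your observation about the extendability hypothesis of Lemma~\ref{lem:poincareextensions} is well taken and sharper than the paper's own treatment: only the desubstitution (``$\Leftarrow$'') direction is needed here, and that direction holds unconditionally from the prefix-code structure of $\{ijk,jk,k\}$, so no recurrence assumption on $u$ is actually required for this lemma.
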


\begin{proof}
The result is a direct consequence of Lemma~\ref{lem:poincareextensions}.
Since right extensions are preserved by $\pi_{jk}$, we only need to check that
if $w$ has at least two left extensions then so does $v$.

Suppose that $w=k\pi_{jk}(v)$. Remark that $i\notin E^-(w)$.  Thus $j,k\in
E^-(w)$ since $w$ is bispecial.  From Lemma~\ref{lem:poincareextensions},
$k\in E^-(w)$ implies $k\in E^-(v)$. Also, $j\in E^-(w)$ implies that $i\in
E^-(v)$ or $j\in E^-(v)$.  Thus $v$ is bispecial.

Suppose that $w=jk\pi_{jk}(v)$.  Since $j\notin E^-(w)$, then $i,k\in E^-(w)$.
Or course, the existence of $w$ implicitly suppose $j\in E^-(k\pi_{jk}(v))$.
Then, $i,j\in E^-(v)$. We conclude that $v$ is bispecial.
\end{proof}

Now we want to describe more precisely under which conditions a bispecial word
$v$ has a unique bispecial extended image and  provide its extension type as
we were able to do in Lemma~\ref{lem:arbispecial} for Arnoux-Rauzy
substitutions. In general (see Table~\ref{table:poincaredegree2}
and~\ref{table:poincaredegree3}), this depends on its left extensions
$E^-(v)$. However, if the left valence satisfies $d^-(v)=2$, we deduce the
unicity of the bispecial extended image as well as important information on
the extension type of the extended image. Recall that the notion of left
equivalence for  extension types was defined in Section~\ref{sec:bispecial} in
Definition~\ref{def:leftequivalent}.

\begin{lemma}[\bf P - Bispecial extended images in details]\label{lem:detailedpoincare}
Let  $i,j,k$ such that $\{i,j,k\}=\{1,2,3\}$. Let $u \in \{1,2,3\}^*$ and 
let $v$ be a bispecial factor of $u$. We assume that  for all $(a,b) \in E(v)$, there exists a letter
 $e$ such that $eavb$ is also  a factor of $u$.

\begin{enumerate}[\rm (i)]
\item If $d^{-}(v)=2$,  $v$ admits  a unique bispecial extended image
$w\in\{k\pi_{jk}(v),jk\pi_{jk}(v)\}$ under $\pi_{jk}$
and $d^-(w)=2$.  Moreover,
the extension types $E(v)$ and $E(w)$ (in $\pi_{jk}(u)$)  are left equivalent and are related
according to Table~\ref{table:poincaredegree2}.
\item If $d^{-}(v)=3$, then $v$  admits either one, or two bispecial extended images
$w\in\{k\pi_{jk}(v),jk\pi_{jk}(v)\}$ under $\pi_{jk}$. In any case,
$d^{-}(w)=2$ and the two non-empty rows of $E(w)$ are obtained by projection
of rows of $E(v)$. Furthermore, they are  related  according to   Table~\ref{table:poincaredegree3}.
\end{enumerate}
\end{lemma}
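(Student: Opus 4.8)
The plan is to push through the case analysis built on Lemma~\ref{lem:poincareextensions}, which already gives the exact dictionary translating each pair $(a,b)\in E(v)$ into left extensions of the two candidate extended images $k\pi_{jk}(v)$ and $jk\pi_{jk}(v)$. The whole statement is really a bookkeeping consequence of that dictionary together with Lemma~\ref{lem:extendedimagesP}, which restricts the bispecial extended images to those two words. So first I would record the dictionary in the convenient form: reading down the three lines of Lemma~\ref{lem:poincareextensions}, a left extension $i$ of $v$ contributes a $j$-row to $k\pi_{jk}(v)$ and an $i$-row to $jk\pi_{jk}(v)$; a left extension $j$ contributes a $j$-row to $k\pi_{jk}(v)$ and a $k$-row to $jk\pi_{jk}(v)$; and a left extension $k$ contributes a $k$-row to $k\pi_{jk}(v)$ and nothing to $jk\pi_{jk}(v)$ (since $j\notin E^-(jk\pi_{jk}(v))$ and the $i$-contribution to $jk$ only comes from an $i$-extension of $v$). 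Equivalently, $E^-(k\pi_{jk}(v))$ is obtained from $E^-(v)$ by sending $i\mapsto j$, $j\mapsto j$, $k\mapsto k$, and $E^-(jk\pi_{jk}(v))$ by sending $i\mapsto i$, $j\mapsto k$, $k\mapsto (\text{nothing})$.

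For part (i), assume $d^-(v)=2$, so $E^-(v)$ is one of the three two-element subsets $\{i,j\}$, $\{i,k\}$, $\{j,k\}$. I would go through them: if $E^-(v)=\{i,j\}$ then the $k$-image has left extensions $\{j\}$ (both $i$ and $j$ collapse to $j$), hence is not left special, while the $jk$-image has left extensions $\{i,k\}$, so $w=jk\pi_{jk}(v)$ is the unique bispecial extended image with $d^-(w)=2$; if $E^-(v)=\{j,k\}$ then symmetrically the $k$-image has $\{j,k\}$ and is bispecial while the $jk$-image drops $k$ and has only $\{k\}$, so $w=k\pi_{jk}(v)$; and if $E^-(v)=\{i,k\}$ then the $k$-image has $\{j,k\}$ (bispecial) and the $jk$-image has $\{i\}$ only, so again $w=k\pi_{jk}(v)$. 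In each of the three cases exactly one candidate is bispecial, $d^-(w)=2$, and the induced row map is a bijection on the two occupied rows, which is precisely a left equivalence; right extensions are untouched by $\pi_{jk}$, so the columns are preserved and $E(v)$ and $E(w)$ are left equivalent. This produces the three rows of Table~\ref{table:poincaredegree2}.

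For part (ii), assume $d^-(v)=3$, i.e.\ $E^-(v)=\{i,j,k\}$. Now the $k$-image has left extensions $\{j,k\}$ (the $i$- and $j$-rows both map to $j$), and the $jk$-image has left extensions $\{i,k\}$ (the $i$-row gives $i$, the $j$-row gives $k$, the $k$-row disappears). Both candidates therefore have $d^-=2$, so both are left special; each is bispecial iff it is also right special, and since right extensions are inherited unchanged this holds iff the merged rows still carry at least two columns — which is why one may get one or two bispecial extended images depending on the column structure of $E(v)$. The key point to verify is that each non-empty row of $E(w)$ is the union (projection) of the corresponding rows of $E(v)$: for the $k$-image the $j$-row is the union of the $i$-row and $j$-row of $E(v)$ while its $k$-row equals the $k$-row of $E(v)$, and symmetrically for the $jk$-image. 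Spelling this out against all possible column patterns of a valence-$3$ bispecial $v$ yields Table~\ref{table:poincaredegree3}.

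The only genuine obstacle, as opposed to routine translation, is the merging phenomenon in part (ii): because two distinct left extensions of $v$ can collapse to a single row of the image, the image rows are \emph{unions} of $v$-rows rather than relabelled copies, and this is exactly what can destroy right-specialness of one candidate while preserving it in the other, accounting for the ``one or two'' alternative. I would therefore be careful to justify that the union of rows is correct and to check, against the hypothesis that every $(a,b)\in E(v)$ extends leftward to some $eavb$ in $u$, that no spurious extensions appear in the image beyond those forced by the dictionary — this assumption is what makes the ``$\Longleftarrow$'' directions of Lemma~\ref{lem:poincareextensions} usable, so that $E(w)$ is computed exactly rather than merely bounded. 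Assembling the verified row patterns into the two tables then completes the proof.
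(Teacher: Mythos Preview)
Your proposal is correct and follows essentially the same approach as the paper: both argue via the dictionary of Lemma~\ref{lem:poincareextensions}, split part~(i) into the three two-element subsets for $E^-(v)$, and in part~(ii) read off $E^-(k\pi_{jk}(v))=\{j,k\}$ and $E^-(jk\pi_{jk}(v))=\{i,k\}$ with the row-merging/projection description that gives Tables~\ref{table:poincaredegree2} and~\ref{table:poincaredegree3}. The only cosmetic difference is that the paper packages the rows via the notation $R_a$ whereas you describe the row maps $i\mapsto j$, $j\mapsto j$, $k\mapsto k$ (for the $k$-image) and $i\mapsto i$, $j\mapsto k$, $k\mapsto\varnothing$ (for the $jk$-image) directly; one small sharpening you could add is that when $d^-(v)=3$ the $k$-image is \emph{always} bispecial (its right extensions are $R_i\cup R_j\cup R_k=E^+(v)$), so the ``one or two'' alternative is decided solely by whether $R_i\cup R_j$ is a singleton for the $jk$-image.
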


\begin{proof}
For each $a\in\{1,2,3\}$, let $R_a\subseteq\{1,2,3\}$  be such that
\[
E(v) = \bigcup_{a\in\{1,2,3\}} \{a\} \times R_a.
\]
The set $R_a$ denotes the right extensions associated with the left extension
$a\in E^-(v)$.\\
(i) 
If $d^{-}(v)=2$, then $E^-(v)$ is equal to either $\{i,j\}$, $\{i,k\}$ or
$\{j,k\}$. 
We proceed case by case.
If $E^-(v)=\{i,j\}$, then $k\pi_{jk}(v)$ is not left special and $jk\pi_{jk}(v)$
is the unique bispecial extended image of $v$.
If $E^-(v)=\{i,k\}$ or $\{j,k\}$,
then $jk\pi_{jk}(v)$ is not left special and $k\pi_{jk}(v)$ is
the unique bispecial extended image of $v$.
This is summarized in Table~\ref{table:poincaredegree2} where the information
follows from Lemma~\ref{lem:poincareextensions}.
\begin{table}[h!]
\[
\begin{array}{c|c|c}
E(v) & E(k\pi_{jk}(v)) & E(jk\pi_{jk}(v)) \\
\hline
(\{i\}\times R_i)\cup(\{j\}\times R_j)  &  \{j\}\times (R_i\cup R_j)
& (\{i\}\times R_i)\cup(\{k\}\times R_j)   \\
(\{i\}\times R_i)\cup(\{k\}\times R_k)  &  (\{j\}\times R_i)\cup(\{k\}\times R_k)   & 
\{i\}\times R_i   \\
(\{j\}\times R_j)\cup(\{k\}\times R_k)  &  (\{j\}\times R_j)\cup(\{k\}\times R_k)   &
\{k\}\times R_j
\end{array}
\]
\caption{If $d^-(v)=2$, then exactly one extended image of $v$ amongst
$k\pi_{jk}(v)$ and $jk\pi_{jk}(v)$ is bispecial. This only depends on the left
extensions as the right extensions are preserved.} \label{table:poincaredegree2}
\end{table}
In each case, the extension type $E(v)$ is left equivalent to the extension
type of the unique bispecial extended image $w$ of $v$. Moreover $d^-(w)=2$.

(ii)
If $d^{-}(v)=3$, i.e., $E^-(v)= \{i,j,k\}$, then
$E^-(k\pi_{jk}(v))=\{j,k\}$ and $E^-(jk\pi_{jk}(v))=\{i,k\}$. Thus, both
extended images can be bispecial but their left valence is at most $2$.
This is summarized
in Table~\ref{table:poincaredegree3}.
\begin{table}[h!]
\[
\begin{array}{c|c|c}
E(v) & E(k\pi_{jk}(v)) & E(jk\pi_{jk}(v)) \\
\hline
(\{i\}\times R_i)\cup(\{j\}\times R_j)\cup(\{k\}\times R_k)  &
(\{j\}\times R_i\cup R_j)\cup(\{k\}\times R_k)   &
(\{i\}\times R_i)\cup(\{k\}\times R_j)
\end{array}
\]
\caption{If $d^-(v)=3$, then one or both extended images of $v$ amongst
$k\pi_{jk}(v)$ and $jk\pi_{jk}(v)$ are bispecial. In each case, their left valence
is $2$.}
\label{table:poincaredegree3}
\end{table}
\end{proof}

Note that Table~\ref{table:poincaredegree2} and~\ref{table:poincaredegree3}
provide much more information than does the statement of
Lemma~\ref{lem:detailedpoincare} and they will be used to prove a more general
result in Lemma~\ref{lem:lifebispecial}. For example, in
Table~\ref{table:poincaredegree3}, if $v$ is a bispecial
factor such that $d^{-}(v)=3$, $R_i=R_j$ and $|R_i|=|R_j|=1$, then
$jk\pi_{jk}(v)$ is a left special factor but not a right special factor,   it is thus 
not bispecial.

\subsection{Life of a bispecial factor under ARP substitutions}\label{subsec:life}

In this section, the life of a bispecial factor is analyzed more precisely
under the application of Arnoux-Rauzy and Poincar\'e substitutions in the
spirit of 
\cite[Section 4.2.2]{MR1440670} where bispecial factors are described under the image of
circular morphisms. To achieve this, we need to understand exactly the left
extensions which will give information about the multiplicity of the bispecial
factors.  

Let $
\S      =  \S_\alpha\cup\S_\pi.$
Let $w$ be a factor of an infinite word Arnoux-Rauzy-Poincar\'e  $\S$-adic  word. Let $w_0=w$ and  $w_{i+1}$ be the unique antecedent of $w_i$
under $\sigma_{i}$ for $i\geq 0$. In particular,  $w_1$ is the antecedent of $w_0$ under $\sigma_0$ and
$w_2$ is the antecedent of $w_1$ under $\sigma_1$.
If $|w_i|>0$, then $|w_{i+1}|<|w_i|$ by Lemma \ref{lem:growth}.  There  thus exists $n$ such
that $w_n=\emptyword$. \begin{definition}[\bf Age, History, Life]
Let $w$ be a factor of an Arnoux-Rauzy-Poincar\'e  $\S$-adic  word.  
Let $w_0=w$ and  $w_{i+1}$ be the unique antecedent of $w_i$
under $\sigma_{i}$ for $i\geq 0$.
The smallest of the integers $n$ for which  $w_n=\emptyword$ is called the \emph{age} of $w$ and is
denoted as  $\age(w)$.
Furthermore, we say that the finite sequence $\sigma_0\sigma_1\cdots\sigma_n$
is the \emph{history} and the sequence $(w_i)_{0\leq i\leq n}$ is the
\emph{life} of the word~$w$.
\end{definition}
\begin{figure}
\begin{center}
\includegraphics{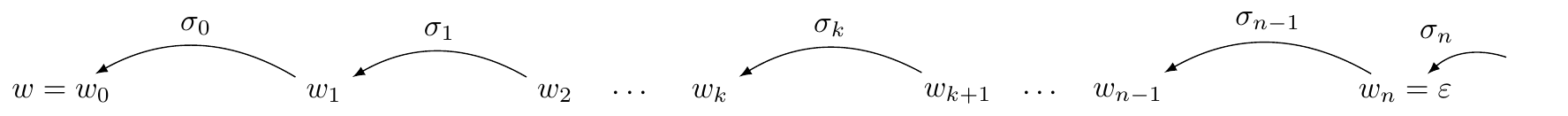}
\end{center}
\caption{Life and history of a factor $w$.}
\label{fig:historic}
\end{figure}
The above definition is illustrated in Figure \ref{fig:historic}.
According to Lemma \ref{lem:arbispecial} and \ref{lem:pbispecial},
all the  words $w_i$ of the history of $w$ are   bispecial factors when $w$ is  bispecial.
We will consider from now on recurrent  Arnoux-Rauzy-Poincar\'e  $\S$-adic  words $\bu$, with 
$\bu^{(m)}$ being  also recurrent,  in order to apply the assumptions of
Lemma  \ref{lem:poincareextensions} and \ref{lem:detailedpoincare}. According to Proposition \ref{prop:recurrence},  note that this assumption applies in particular  to all the words of the 
Arnoux-Rauzy-Poincar\'e  $\S$-adic  system.

\begin{lemma}\label{lem:atmosttwobispecialofsameage}
Let ${\bf u}$ be a recurrent  Arnoux-Rauzy-Poincar\'e  $\S$-adic  word such that $\bu^{(m)}$ is also recurrent for all $m$. 
Let $n\geq 0$ be an integer.
Let $B_n$ be the set of all bispecial factors of age $n$ in~$\bu$.
Then $\Card\,B_n \leq 2$.
\end{lemma}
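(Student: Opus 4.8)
The plan is to bound the number of bispecial factors of a fixed age $n$ by tracking, at each level of desubstitution, how many bispecial factors can coexist. The age counts how many times one must desubstitute along $\sigma_0, \sigma_1, \ldots$ before reaching the empty word; equivalently, a bispecial factor $w$ of age $n$ in $\bu$ corresponds, after applying the antecedent operation $n$ times, to the empty word $w_n = \emptyword$ in $\bu^{(n)}$, with $w_{n-1} \neq \emptyword$ a bispecial factor of age $1$ in $\bu^{(n-1)} = \sigma_{n-1}(\bu^{(n)})$-level data. So first I would reduce the question: by Lemma~\ref{lem:arbispecial} and Lemma~\ref{lem:pbispecial}, every bispecial factor of age $n$ arises as a bispecial extended image (under $\sigma_{n-1}$) of a bispecial factor of age $n-1$, and moreover the antecedent is \emph{unique}. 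Hence the map sending a bispecial factor of age $n$ to its antecedent of age $n-1$ is well-defined; the real content is to control how many age-$n$ factors can map to a single age-$(n-1)$ factor, and then to show that the total count cannot exceed $2$.

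Next I would make the key reduction: a factor of age $n$ has $w_{n-1}$ equal to a bispecial factor whose antecedent under $\sigma_{n-1}$ is $\emptyword$; that is, $w_{n-1}$ is a bispecial \emph{extended image of the empty word} under $\sigma_{n-1}$. By Lemma~\ref{lem:extendedimagesAR}, if $\sigma_{n-1} = \alpha_k$ there is a \emph{unique} such extended image, while by Lemma~\ref{lem:extendedimagesP}, if $\sigma_{n-1} = \pi_{jk}$ there are \emph{at most two}. So $\Card\,B_n$ at the base case $n=1$ is already $\leq 2$. For the inductive step, I would argue that the antecedent map from $B_n$ to $B_{n-1}$ is injective in the Arnoux-Rauzy case (unique extended image, Lemma~\ref{lem:extendedimagesAR}) and at most two-to-one in the Poincaré case (Lemma~\ref{lem:extendedimagesP}), but crucially that the two-to-one multiplicity cannot combine across distinct sources to exceed $2$ total. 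The decisive structural fact is Lemma~\ref{lem:detailedpoincare}: after any single Poincaré or Arnoux-Rauzy step, the left valence of a bispecial extended image is at most $2$ (it drops to $2$ whenever $d^- = 3$, and stays $\leq 2$ otherwise). Thus for every level $i \geq 1$, each $w_i$ in the life of a bispecial factor already has $d^-(w_i) \leq 2$, and a factor with $d^-(v) = 2$ has a \emph{unique} bispecial extended image by Lemma~\ref{lem:detailedpoincare}(i).

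This last observation is what I expect to drive the whole bound. The plan is: induct on $n$ with the strengthened hypothesis that $\Card\,B_n \leq 2$ \emph{and} every element of $B_n$ has left valence $\leq 2$ once $n \geq 1$. For the inductive step from $B_{n-1}$ to $B_n$, I would use that each bispecial factor of age $n-1$ has $d^- \leq 2$ (the strengthened hypothesis), so by Lemma~\ref{lem:detailedpoincare}(i) it produces a \emph{unique} bispecial extended image under $\sigma_{n-1}$ — regardless of whether $\sigma_{n-1}$ is an Arnoux-Rauzy or a Poincaré substitution. Hence the map $B_{n-1} \to B_n$ sending an age-$(n-1)$ factor to its (unique) bispecial extended image is a well-defined surjection, and since the antecedent is unique it is also injective; therefore $\Card\,B_n \leq \Card\,B_{n-1} \leq 2$. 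The base case $n=1$ (antecedent is $\emptyword$, which has $d^-(\emptyword) = d^+(\emptyword) = 3$, so the degree-$2$ uniqueness does not apply) is exactly where the bound of $2$ is actually attained via the Poincaré case of Lemma~\ref{lem:extendedimagesP}.

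The main obstacle I anticipate is handling the base case cleanly and justifying that the strengthened left-valence hypothesis ($d^- \leq 2$) propagates correctly: the empty word has full left valence $3$, so the first desubstitution step ($n=1$) is genuinely different from later steps, and this is where the possibility of two extended images (Poincaré) enters. I would therefore treat $n=0$ (only the empty word, so trivially) and $n=1$ separately, verifying $\Card\,B_1 \leq 2$ directly from Lemmas~\ref{lem:extendedimagesAR} and~\ref{lem:extendedimagesP}, and simultaneously checking that every factor in $B_1$ has $d^- \leq 2$ via Lemma~\ref{lem:detailedpoincare}. Once $d^- \leq 2$ is established at level $1$, the uniqueness of bispecial extended images for degree-$2$ factors makes the induction essentially automatic; the delicate point is simply to confirm that $d^- \leq 2$ is preserved at every subsequent level, which again follows from Lemma~\ref{lem:detailedpoincare} since both cases there output a bispecial image with $d^- \leq 2$.
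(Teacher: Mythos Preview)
Your core mechanism is exactly the one the paper uses: once the left valence has dropped to $2$, each bispecial factor has a \emph{unique} bispecial extended image (Lemma~\ref{lem:detailedpoincare}(i) for Poincar\'e, Lemma~\ref{lem:extendedimagesAR} for Arnoux--Rauzy), and the bound of $2$ then propagates. There is, however, a real gap in your strengthened hypothesis. You claim that every factor in $B_1$ has $d^- \leq 2$ and plan to verify this via Lemma~\ref{lem:detailedpoincare}. But that lemma is only about Poincar\'e substitutions; if $\sigma_0 = \alpha_k$, the unique element of $B_1$ is $k$, and by Lemma~\ref{lem:arbispecial} its extension type equals $E_{\bu^{(1)}}(\emptyword)$, which has $d^- = 3$. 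More generally, if $\sigma_0, \ldots, \sigma_{r-1}$ are all Arnoux--Rauzy then the unique element of $B_r$ still has $d^- = 3$, so your hypothesis fails at every such level, not just the first. The paper handles this by a case split: if $\sigma_0, \ldots, \sigma_{n-1} \in \S_\alpha$ then Arnoux--Rauzy uniqueness alone gives $\Card B_n = 1$; otherwise one locates the largest Poincar\'e index $\ell < n$, observes that the path from $w_n = \emptyword$ down to $w_{\ell+1}$ is unique (all Arnoux--Rauzy), that $w_{\ell+1}$ produces at most two bispecial extended images each with $d^- = 2$ under $\sigma_\ell = \pi_{jk}$, and only then does your propagation argument apply. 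An equivalent repair of your induction is to carry the disjunctive hypothesis ``either $\Card B_n = 1$, or $\Card B_n = 2$ and both elements have $d^- = 2$''.

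A secondary issue: your induction is phrased as going from $B_{n-1}$ to $B_n$ via extended images under $\sigma_{n-1}$, but factors of age $n-1$ in $\bu$ live at level $0$, and taking their extended image under $\sigma_{n-1}$ does not produce factors of $\bu$. What you actually want (and what the paper does) is to fix $n$ and track the sets $\{w_i : w \in B_n\}$ as $i$ decreases from $n$ to $0$; the step from level $i+1$ to level $i$ uses $\sigma_i$.
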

\begin{proof}
Let $w\in B_n$, $\sigma_0\sigma_1\cdots\sigma_n$ be its
history, and let $(w_i)_{0\leq i\leq n}$ be its life. 

Suppose first that  $\sigma_0\sigma_1\cdots\sigma_n\in\S_\alpha^*\S$, that  is,  the substitutions    of the history of $w$ are all Arnoux-Rauzy substitutions except possibly  $\sigma_n$ which may be a  Poincar\'e substitution.  From
Lemma~\ref{lem:extendedimagesAR}, $w_i$ is the unique
extended image of $w_{i+1}$ for all $0\leq i\leq n-1$. Then $\Card B_n=1$.

Suppose  now that $\sigma_0\sigma_1\cdots\sigma_n\in\S^*\,\pi_{jk}\,\S_\alpha^*\S$.
Let $\ell$ be  the largest index smaller than $n$  of occurrence of  $\pi_{jk}$, that is,  
\[
\sigma_0\sigma_1\cdots\sigma_\ell\in\S^*\,\pi_{jk}
\quad\text{ and }\quad
\sigma_{\ell+1}\sigma_{\ell+2}\cdots\sigma_n\in\S_\alpha^*\S.
\]
Then, from Lemma~\ref{lem:extendedimagesAR}, $w_i$ is the unique extended
image of $w_{i+1}$ for all $\ell+1\leq i\leq n-1$.
Also, from Lemma~\ref{lem:detailedpoincare},
$w_{\ell+1}$ has at most two extended images $w_\ell$ and
$w'_\ell$ in $\{k\pi_{jk}(w_{\ell+1}), jk\pi_{jk}(w_{\ell+1})\}$.
But then, $d^-(w'_\ell)=d^-(w_\ell)=2$ (still by Lemma~\ref{lem:detailedpoincare}). Therefore both
$w'_\ell$ has a unique extended image $w'_{\ell-1}$ and
$w_\ell$ has a unique extended image $w_{\ell-1}$ (by Lemma~\ref{lem:detailedpoincare} (i)).
Recursively, we get
$d^-(w'_i)=d^-(w_i)=2$ for all $0\leq i \leq \ell$, 
$w'_i$ has a unique extended image $w_{i-1}$, and 
$w_i$ has a unique extended image $w_{i-1}$
for all $1\leq i \leq \ell$.
We thus get  $\Card B_n\leq 2$.
\end{proof}

The life $(w_i)_{0\leq i\leq n}$ of bispecial factors ``starts''  (when read backwards with decreasing indices) as the empty word $\emptyword$
at $i=n$. The word $w_i$ for $i<n$ is  then obtained as the concatenation of one or two letters concatenated with 
$\sigma_i(w_{i+1})$. These letters depend on the extension type $E(w_{i+1})$
and recursively on the extension type $E(w_n)$ of $w_n=\emptyword$.  Furthermore, 
$w_n$ is the antecedent of $w_{n-1}$ under $\sigma_{n-1}$ and
the extension type $E(w_n)$ of $w_n=\emptyword$ depends on $\sigma_n$. Thus, it
is important to understand properly what are the possible extension types of
the empty word under the application of Arnoux-Rauzy and Poincar\'e
substitutions.  Below, the extension type $E(\emptyword)$ of the empty word
considered as a bispecial factor in the language of $\sigma(u)$ is denoted by
$E_{\sigma(u)}(\emptyword)$.

\begin{lemma}\label{lem:typeextempty}
Let $\bu\in\A^*\cup\A^\NN$ be  a proper word. Considered as a bispecial factor of the language of the word $\alpha_k(\bu)$,
the empty word $\emptyword$ is ordinary.
Considered as a bispecial factor of the language of the word $\pi_{jk}(\bu)$,
the empty word $\emptyword$ is neutral but not ordinary:
\[
\scriptsize
E_{\alpha_k(\bu)}(\emptyword) =
\begin{array}{c|ccc}
  & i      & j      & k      \\
\hline
i &        &        & \times \\
j &        &        & \times \\
k & \times & \times & \times
\end{array}
\quad\quad
\text{\normalsize and}
\quad\quad
E_{\pi_{jk}(\bu)}(\emptyword) =
\begin{array}{c|ccc}
  & i      & j      & k      \\
\hline
i &        & \times & \\
j &        &        & \times \\
k & \times & \times & \times
\end{array}.
\]
\end{lemma}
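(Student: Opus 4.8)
The statement concerns the extension type of the empty word $\emptyword$, viewed as a bispecial factor in the language of $\alpha_k(\bu)$ and of $\pi_{jk}(\bu)$, under the hypothesis that $\bu$ is proper. The plan is to compute directly the sets $E^-(\emptyword)$, $E^+(\emptyword)$ and the full extension type $E(\emptyword)$ in each image language, by reading off which single letters $ab$ (with $\emptyword$ inserted, i.e.\ which two-letter factors $ab$) occur in $\alpha_k(\bu)$ and in $\pi_{jk}(\bu)$. Since $E_{\sigma(\bu)}(\emptyword) = \{(a,b) \mid ab \in \L(\sigma(\bu))\}$ is just the set of length-two factors of $\sigma(\bu)$, the whole proof reduces to determining the two-letter factors of the image word.

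\emph{Arnoux-Rauzy case.}
First I would recall $\alpha_k\colon i\mapsto ik,\ j\mapsto jk,\ k\mapsto k$. Every image block $\alpha_k(\ell)$ ends in $k$, and the blocks $ik, jk, k$ all begin with their ``own'' letter. Concatenating images of consecutive letters of $\bu$, the two-letter factors that can appear are: those internal to a block, namely $ik$ and $jk$ (giving $(i,k)$ and $(j,k)$); and those spanning a block boundary, which start with the final $k$ of one block and continue with the first letter of the next, giving $ka$ where $a$ is the first letter of $\alpha_k(\ell)$ for $\ell$ a letter following the previous one in $\bu$. Because $\bu$ is proper, each of $i, j, k$ occurs as a proper factor of $\bu$, so each of the three first-letters $i, j, k$ is realized after some letter; hence $ki, kj, kk$ all occur. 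This yields exactly $E^+(\emptyword)=\{i,j,k\}$ via rows, and one checks the crosses are precisely $\{(i,k),(j,k),(k,i),(k,j),(k,k)\}$, which is the left table. This extension type is ordinary (take $a=k$, $b=k$): one verifies $\{(k,k)\}\subseteq E(\emptyword)\subseteq(\{k\}\times\A)\cup(\A\times\{k\})$, so by the lemma preceding, $\emptyword$ is neutral and ordinary.

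\emph{Poincaré case.}
For $\pi_{jk}\colon i\mapsto ijk,\ j\mapsto jk,\ k\mapsto k$, I would repeat the factor analysis. Internal two-letter factors of the blocks $ijk, jk, k$ are $ij$, $jk$; boundary factors are $k a$ where $a\in\{i,j,k\}$ is the first letter of the image block for a letter of $\bu$ following another, and again properness of $\bu$ forces all of $ki, kj, kk$ to occur. Collecting these gives the crosses $\{(i,j),(j,k),(k,i),(k,j),(k,k)\}$, matching the right table. A quick computation of the bilateral multiplicity then shows $m(\emptyword)=\Card E(\emptyword)-d^-(\emptyword)-d^+(\emptyword)+1 = 5-3-3+1 = 0$, so $\emptyword$ is neutral; but no choice of $(a,b)$ satisfies the containment in \eqref{eq:defordinaire} (the two ``off-diagonal'' crosses $(i,j)$ and $(j,k)$ cannot both be covered by one row and one column), so it is not ordinary. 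The one subtle point, and the only place where the hypothesis is genuinely used, is ensuring that all three boundary factors $ki, kj, kk$ really occur: this is exactly what properness of $\bu$ guarantees, since it says each letter $\ell\in\{1,2,3\}$ is preceded by some letter in $\bu$, so the image block $\sigma(\ell)$ (whose first letter ranges over all of $i, j, k$) is preceded by a $k$ in $\sigma(\bu)$. I expect this verification — correctly tracking which first-letters are attainable given properness, so that exactly the claimed crosses appear and none are spuriously added — to be the main thing to get right; the remaining computations are routine.
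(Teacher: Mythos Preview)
Your proof is correct and follows essentially the same approach as the paper's: both compute $E_{\sigma(\bu)}(\emptyword)$ as the set of two-letter factors of $\sigma(\bu)$ by separating the internal pairs within each block $\sigma(\ell)$ from the boundary pairs $k a$ straddling two blocks, using properness of $\bu$ to ensure all three boundary pairs occur. You go a bit further than the paper in explicitly verifying the ordinary/neutral-but-not-ordinary claims, but the core argument is identical.
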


\begin{proof}
We need to consider the set of pairs of consecutive letters appearing in the
language $\alpha_k(u)$. These can be consecutive letters
inside $\alpha_k(1)$, $\alpha_k(2)$ or $\alpha_k(3)$, i.e.,
$\{ik, jk\}$. Alternatively, it may be the last letter of a word
$\alpha_k(a)$ with the first letter of a word $\alpha_k(b)$:
$\{ki,kj,kk\}$.

Similarly for the language $\pi_{jk}(u)$, consecutive letters
inside $\pi_{jk}(i)$, $\pi_{jk}(j)$ and $\pi_{jk}(k)$ are $\{ij, jk\}$
and pairs made of the last letter of a word $\pi_{jk}(a)$ with the first
letter of a word $\pi_{jk}(b)$ are $\{ki,kj,kk\}$.
\end{proof}

From now on, we assume that  the Arnoux-Rauzy-Poincar\'e  $\S$-adic  words $\bu ^{(m)}$ are all
proper for all $m$  in order to apply Lemma~\ref{lem:typeextempty} for the  bispecial  factors of all
ages. Note that being recurrent does not imply  the fact of being proper: indeed an  infinite word
can be recurrent on the alphabet $\{1,2\}$ while each letter of the alphabet $\{1,2,3\}$ 
must appear for this word to be  proper.

\begin{lemma}
Let ${\bf u}$ be an  Arnoux-Rauzy-Poincar\'e  $\S$-adic
word such that $\bu^{(m)}$ is  proper and  recurrent for all $m$.
Let $w$ be a bispecial factor of ${\bf u}$.
Then $|E(w)|\leq 5$.
\end{lemma}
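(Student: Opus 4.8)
The plan is to control $|E(w)|$ by following $w$ backwards along its life $(w_i)_{0\leq i\leq n}$, where $n=\age(w)$, $w_0=w$, and $w_{i+1}$ is the antecedent of $w_i$ under $\sigma_i$, with $w_n=\emptyword$. Since $w$ is bispecial, every $w_i$ is again a bispecial factor of $\bu^{(i)}$ by Lemma~\ref{lem:arbispecial} and Lemma~\ref{lem:pbispecial}, so each cardinality $|E(w_i)|$ is well defined. The claim reduces to the two facts: (a) the initial value $|E(w_n)|=5$, and (b) the cardinality $|E(\cdot)|$ never increases when one passes from $w_{i+1}$ to its bispecial extended image $w_i$. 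Chaining (b) then gives $|E(w)|=|E(w_0)|\leq|E(w_1)|\leq\cdots\leq|E(w_n)|=5$.

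For fact (a), the empty word $w_n=\emptyword$ is a bispecial factor of $\bu^{(n)}=\sigma_n(\bu^{(n+1)})$. Since $\bu^{(n+1)}$ is proper by hypothesis, Lemma~\ref{lem:typeextempty} applies with $\sigma=\sigma_n$ and shows that $E(\emptyword)$ is one of the two displayed tables, whether $\sigma_n\in\S_\alpha$ or $\sigma_n\in\S_\pi$. In both tables the number of crosses is exactly $5$, so $|E(w_n)|=5$, and moreover $d^-(w_n)=3$, which records that the terminal step of the life starts from a factor of left valence $3$.

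For fact (b), fix $i$ and compare $|E(w_i)|$ with $|E(w_{i+1})|$. Recurrence of $\bu^{(i+1)}$ (Proposition~\ref{prop:recurrence}) ensures that every $avb$ with $(a,b)\in E(w_{i+1})$ admits a left extension, so the standing hypothesis of Lemma~\ref{lem:detailedpoincare} is met all along the life. If $\sigma_i\in\S_\alpha$, then Lemma~\ref{lem:arbispecial} gives $E(w_i)=E(w_{i+1})$, hence equal cardinalities. If $\sigma_i=\pi_{jk}$ and $d^-(w_{i+1})=2$, then by Lemma~\ref{lem:detailedpoincare}(i) the two extension types are left equivalent, again preserving the count. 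The only delicate case is $\sigma_i=\pi_{jk}$ with $d^-(w_{i+1})=3$: writing $E(w_{i+1})=\bigcup_a\{a\}\times R_a$ so that $|E(w_{i+1})|=|R_i|+|R_j|+|R_k|$, Table~\ref{table:poincaredegree3} yields $|E(k\pi_{jk}(w_{i+1}))|=|R_i\cup R_j|+|R_k|$ and $|E(jk\pi_{jk}(w_{i+1}))|=|R_i|+|R_j|$, both at most $|R_i|+|R_j|+|R_k|=|E(w_{i+1})|$. Thus $|E(w_i)|\leq|E(w_{i+1})|$ in every case.

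I expect the Poincaré case to demand the most care. One must read Table~\ref{table:poincaredegree3} correctly, noting that the two nonempty rows of an extended image are obtained by projecting, and possibly merging, rows of $E(w_{i+1})$, and then check that such a projection can only destroy crosses (through row overlap or through dropping the row $R_k$) and never create them. The two places where an error could slip in are the bookkeeping of which of $k\pi_{jk}(w_{i+1})$ and $jk\pi_{jk}(w_{i+1})$ is actually bispecial, and the verification that the recurrence hypothesis of Lemma~\ref{lem:detailedpoincare} is available at every index $i$ of the life; both are handled by Proposition~\ref{prop:recurrence} together with the properness hypothesis feeding Lemma~\ref{lem:typeextempty}.
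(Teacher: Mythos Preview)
Your proof is correct and follows essentially the same approach as the paper: chain the inequalities $|E(w_0)|\leq|E(w_1)|\leq\cdots\leq|E(w_n)|\leq 5$ along the life of $w$, invoking Lemma~\ref{lem:arbispecial}, Lemma~\ref{lem:detailedpoincare}, and Lemma~\ref{lem:typeextempty}. You supply more detail than the paper does (the explicit cardinality count $|R_i\cup R_j|+|R_k|\leq|R_i|+|R_j|+|R_k|$ in the Poincar\'e $d^-=3$ case), and one small remark: the recurrence of $\bu^{(i+1)}$ is given directly by the hypothesis of the lemma rather than via Proposition~\ref{prop:recurrence}.
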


\begin{proof}
Let $n=\age(w)$  and $(w_i)_i$ be the  life of $w$.
From Lemmas \ref{lem:arbispecial}, \ref{lem:detailedpoincare} and
\ref{lem:typeextempty}, we have
\[
|E(w_0)|
\leq |E(w_1)|
\leq |E(w_2)|
\leq \cdots
\leq |E(w_n)|
\leq 5.\qedhere
\]
\end{proof}

The following  lemma shows that the  histories  of  bispecial factors  in a   same infinite word ${\bf u}$ are related.
\begin{lemma}\label{lem:common}
Let ${\bf u}$ be an   Arnoux-Rauzy-Poincar\'e  $\S$-adic
word such that $\bu^{(m)}$ is  proper and recurrent for all $m$.  Let $w$ and $z$ be
two bispecial factors of  ${\bf u}$.
\begin{enumerate}[\rm (i)]
\item If $\age(w)<\age(z)$, then the history of $w$ is a prefix
of the one of $z$.
\item If $\age(w)=\age(z)$, then $w$ and $z$ have the same history.
\end{enumerate}
\end{lemma}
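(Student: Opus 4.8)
The plan is to prove both parts simultaneously by relating the history of any bispecial factor to the \emph{global} sequence of antecedents driven by the directive sequence $s=(\sigma_n)_n$. The key observation, which I would establish first, is that the history $\sigma_0\sigma_1\cdots\sigma_{\age(w)}$ of a bispecial factor $w$ is always a prefix of the directive sequence $s$. This follows directly from the definition of age, history and life: starting from $w_0=w$, each $w_{i+1}$ is defined as the unique antecedent of $w_i$ under $\sigma_i$ (the $i$-th substitution of the directive sequence itself, not some locally chosen one). By Lemma~\ref{lem:arbispecial} and Lemma~\ref{lem:pbispecial}, each antecedent is uniquely determined and is again bispecial, and by Lemma~\ref{lem:growth} the lengths strictly decrease until $w_n=\emptyword$ at $n=\age(w)$. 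Thus the history is forced to be exactly the initial segment $\sigma_0\cdots\sigma_n$ of $s$, with no freedom of choice.

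Given this, part (ii) is immediate: if $\age(w)=\age(z)=n$, then both histories equal $\sigma_0\sigma_1\cdots\sigma_n$, the common length-$(n{+}1)$ prefix of the directive sequence, so $w$ and $z$ have the same history. Part (i) is just as direct: if $\age(w)=m<n=\age(z)$, the history of $w$ is $\sigma_0\cdots\sigma_m$ and the history of $z$ is $\sigma_0\cdots\sigma_n$, and the former is a prefix of the latter since both are prefixes of the single sequence $s$.

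The step requiring the most care is pinning down precisely that the $\sigma_i$ appearing in the definition of the life of $w$ are the \emph{same} $\sigma_i$ from the directive sequence $s$ of $\bu$, independent of $w$. This is what makes the histories comparable at all, and it is exactly the content of the setup in Section~\ref{subsec:life}: the antecedent of $w_i$ is taken under $\sigma_i$, where $(\sigma_n)_n$ is the fixed directive sequence of the ambient word $\bu$. I would emphasize that the uniqueness of antecedents (from the synchronization Lemma~\ref{lem:preimagearnouxETpoincare}, specialized in Lemmas~\ref{lem:arbispecial} and~\ref{lem:pbispecial}) is essential here: it guarantees the life $(w_i)_i$ is canonically determined by $w$ and $s$, so that two bispecial factors of the same age genuinely trace back through identical words, and one of smaller age is a genuine truncation of the other's backward trace. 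The recurrence and properness hypotheses on $\bu^{(m)}$ are invoked only to ensure that all the antecedent lemmas apply to the $w_i$ at every level.

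\begin{proof}
Let $s=(\sigma_n)_n$ be the directive sequence of $\bu$. By definition, for any bispecial factor $x$ of $\bu$ with life $(x_i)_{0\le i\le \age(x)}$, the word $x_{i+1}$ is the unique antecedent of $x_i$ under $\sigma_i$, where $\sigma_i$ is the $i$-th term of the directive sequence $s$, the same for every bispecial factor. By Lemmas~\ref{lem:arbispecial} and~\ref{lem:pbispecial}, each antecedent is unique and bispecial, and by Lemma~\ref{lem:growth} the lengths $|x_i|$ strictly decrease as long as $x_i\neq\emptyword$. Hence the history of $x$ is exactly the prefix $\sigma_0\sigma_1\cdots\sigma_{\age(x)}$ of $s$, entirely determined by $\age(x)$.

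(i) If $\age(w)<\age(z)$, then the history $\sigma_0\cdots\sigma_{\age(w)}$ of $w$ and the history $\sigma_0\cdots\sigma_{\age(z)}$ of $z$ are both prefixes of $s$. Since $\age(w)<\age(z)$, the former is a prefix of the latter.

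(ii) If $\age(w)=\age(z)=n$, then both histories equal $\sigma_0\cdots\sigma_n$, so $w$ and $z$ have the same history.
\end{proof}
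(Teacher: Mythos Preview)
Your proof is correct and follows essentially the same approach as the paper. The paper's own proof is even terser---it merely states that (i) follows from the definition and (ii) follows from (i)---and your argument spells out exactly what ``the definition'' entails: the history of any bispecial factor is the prefix $\sigma_0\cdots\sigma_{\age(\cdot)}$ of the fixed directive sequence $s$, so comparison of histories reduces to comparison of ages.
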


\begin{proof}
Statement (i) follows from the definition and  statement 
(ii) follows from (i).
\end{proof}

In the next lemma, we describe exactly what are the bispecial factors
associated with each possible history.  We recall that there are  at most two  bispecial factors of the same age  for a given history 
according to Lemma \ref{lem:atmosttwobispecialofsameage}. It has the same history  as  $w$  according to Lemma~\ref{lem:common}.


\begin{lemma}\label{lem:lifebispecial}
Let $\bu$   be an Arnoux-Rauzy-Poincar\'e  $\S$-adic  word
such that $\bu^{(m)}$ is  proper and  recurrent for all $m$.  Let $w$ be a bispecial factor of $\bu$
 and let  $n=\age(w)$.
Let $w'$ be the other bispecial factor of the same
age as $w$ if it exists.
Then the common  history 
$\sigma_0\sigma_1\cdots\sigma_n$
 of $w$  and $w'$ determines the left
valence,  the multiplicity and the extension type of both  $w$ and $w'$.
More precisely,   the multiplicity and the extension type are described in 
Table~\ref{table:bispecial}, whereas extension types  are provided in
Figures \ref{fig:life_ordinaire_simple}, \ref{fig:life_ordinaire_double},
\ref{fig:life_pas_ordinaire_to_ordinaire} and
\ref{fig:life_pas_ordinaire_strong_weak}.

\begin{table}[h]
\[
\begin{array}{l|ccc|ccc}
\sigma_0\sigma_1\cdots\sigma_n\in
& d^-(w) & m(w) & \text{ordinary}
& d^-(w') & m(w') & \text{ordinary} \\
\hline
\S_\alpha^*\S_\alpha
& 3 & 0 & \yes \\
\S_\alpha^*\S_\pi
& 3 & 0 & \no \\
\S^*\,\pi_{jk}\,\S_\alpha^*\{\alpha_{k}\}
& 2 & 0 & \yes \\
\S^*\,\pi_{jk}\,\S_\alpha^*\{\alpha_{i},\alpha_{j}\}
& 2 & 0 & \yes  & 2 & 0 & \yes \\
\S^*\,\pi_{jk}\,\S_\alpha^*\{\pi_{ji},\pi_{ki},\pi_{ij},\pi_{kj}\}
& 2 & 0 & \yes & 2 & 0 & \yes \\
\S^*\,\pi_{jk}\,\S_\alpha^*\{\pi_{jk},\pi_{ik}\}
& 2 & +1 & \no & 2 & -1 & \no \\
\end{array}
\]
\caption{ Left
valence and   multiplicity  for the  (at most two) bispecial factors of the same age.}
\label{table:bispecial}
\end{table}
\end{lemma}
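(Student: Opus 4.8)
The plan is to read the life $(w_i)_{0\le i\le n}$ of $w$ backwards, starting from the empty word $w_n=\emptyword$ and rebuilding the bispecial extended images one substitution at a time up to $w_0=w$. Two engines drive the whole argument. First, by Lemma~\ref{lem:typeextempty} the extension type of $w_n=\emptyword$, viewed as a bispecial factor of $\bu^{(n)}=\sigma_n(\bu^{(n+1)})$, is entirely fixed by $\sigma_n$: it is the ordinary type $E_{\alpha_k(\bu^{(n+1)})}(\emptyword)$ when $\sigma_n=\alpha_k$, and the neutral-but-not-ordinary type $E_{\pi_{jk}(\bu^{(n+1)})}(\emptyword)$ when $\sigma_n=\pi_{jk}$; in both cases $d^-(w_n)=3$. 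Second, each Arnoux-Rauzy step is transparent: by Lemmas~\ref{lem:extendedimagesAR} and~\ref{lem:arbispecial} a bispecial $v$ has a unique bispecial extended image $k\alpha_k(v)$ under $\alpha_k$, with identical extension type and hence identical $d^-$, $d^+$, $m$ and ordinariness. Every Poincaré step is governed by Lemma~\ref{lem:detailedpoincare}: when $d^-(v)=3$ the branching into one or two images is read off Table~\ref{table:poincaredegree3}, and when $d^-(v)=2$ there is a unique image of left valence $2$ whose type is left equivalent to that of $v$ (Table~\ref{table:poincaredegree2}). All of this is licit because $\bu^{(m)}$ is assumed proper and recurrent, which supplies the hypotheses of Lemmas~\ref{lem:typeextempty},~\ref{lem:poincareextensions} and~\ref{lem:detailedpoincare}.

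For the first two rows, the history lies in $\S_\alpha^*\S$ with no Poincaré substitution among $\sigma_0,\dots,\sigma_{n-1}$. I would then observe that every step from $w_n$ to $w_0$ is an Arnoux-Rauzy step, so the extension type is transported unchanged from $\emptyword$ to $w$, the bispecial extended image stays unique at each step (so no second factor $w'$ exists, consistently with Lemma~\ref{lem:atmosttwobispecialofsameage}), and $w$ inherits $d^-(w)=3$ and $m(w)=0$. By Lemma~\ref{lem:typeextempty} the type is ordinary when $\sigma_n=\alpha_k$ (row~1) and not ordinary when $\sigma_n=\pi_{jk}$ (row~2).

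For rows~3--6 the history contains a Poincaré substitution among $\sigma_0,\dots,\sigma_{n-1}$; I would let $\sigma_\ell=\pi_{jk}$ be the last one, so that $\sigma_{\ell+1},\dots,\sigma_{n-1}$ are Arnoux-Rauzy. Transporting the type of $w_n=\emptyword$ through these steps gives $E(w_{\ell+1})=E_{\sigma_n}(\emptyword)$ with $d^-(w_{\ell+1})=3$. The decisive computation is the application of Table~\ref{table:poincaredegree3} to $v=w_{\ell+1}$: writing $E(v)=\bigcup_a \{a\}\times R_a$, the two candidate images $k\pi_{jk}(v)$ and $jk\pi_{jk}(v)$ are determined by $R_i,R_j,R_k$, which are in turn dictated by $\sigma_n$. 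Aligning the distinguished letters of $E_{\sigma_n}(\emptyword)$ against the letters $i,j,k$ of $\sigma_\ell$ splits into exactly the four table cases: $\sigma_n=\alpha_k$ (row~3) leaves only $k\pi_{jk}(v)$ bispecial, neutral and ordinary; $\sigma_n\in\{\alpha_i,\alpha_j\}$ (row~4) and $\sigma_n\in\{\pi_{ji},\pi_{ki},\pi_{ij},\pi_{kj}\}$ (row~5) make both images bispecial, neutral and ordinary; and $\sigma_n\in\{\pi_{jk},\pi_{ik}\}$ (row~6) makes $k\pi_{jk}(v)$ strong ($m=+1$) and $jk\pi_{jk}(v)$ weak ($m=-1$); in every case the image(s) have $d^-=2$. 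The remaining steps $\sigma_{\ell-1},\dots,\sigma_0$ then preserve these data, since Arnoux-Rauzy steps preserve the type exactly and each earlier Poincaré step, now applied to a factor of left valence $2$, yields a unique image of left-equivalent type by Lemma~\ref{lem:detailedpoincare}(i). As $d^-$, $m$ and ordinariness are invariant under left equivalence (Lemma~\ref{lem:equivalence}), the values computed for $w_\ell,w'_\ell$ are precisely those of $w=w_0$ and $w'=w'_0$, which fills in the table.

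The hard part will be the bookkeeping in the Poincaré step at $\sigma_\ell$: one must carry the type $E_{\sigma_n}(\emptyword)$, whose three distinguished letters are those of $\sigma_n$, through Table~\ref{table:poincaredegree3}, whose rows are indexed by the letters $i,j,k$ of $\sigma_\ell=\pi_{jk}$. The four rows~3--6 are exactly the four alignments of these two index sets, and the strong/weak pair of row~6 is forced precisely when the fixed letter of $\sigma_n$ equals the fixed letter $k$ of $\sigma_\ell$, so that $R_k=\{i,j,k\}$ and $\Card E(k\pi_{jk}(v))=5$, yielding $m=+1$ for one image and $m=-1$ for the other. Checking that the four cases exhaust the nine possible $\sigma_n$ and that each gives the claimed $d^-$, $m$ and ordinariness is the routine but delicate core of the proof.
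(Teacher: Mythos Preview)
Your proposal is correct and follows essentially the same route as the paper: locate the last Poincar\'e step $\sigma_\ell=\pi_{jk}$ before position $n$, transport $E_{\sigma_n}(\emptyword)$ unchanged through the Arnoux--Rauzy block to $w_{\ell+1}$, apply Table~\ref{table:poincaredegree3} to split into the one or two images at level $\ell$, and then invoke left equivalence (Lemma~\ref{lem:detailedpoincare}(i) and Lemma~\ref{lem:equivalence}) to carry $d^-$, $m$ and ordinariness down to $w_0$ and $w'_0$. The paper just makes the ``routine but delicate'' bookkeeping you flag fully explicit, listing the resulting extension sets $E(w_\ell)$, $E(w'_\ell)$ for each of the nine values of $\sigma_n$.
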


\begin{remark}\label{rem:pairs}
Recall that  the occurrence of  strong and weak bispecial factors  has an impact  on the  factor complexity. According to Lemma  \ref{lem:lifebispecial},
strong and weak bispecial words  appear in pairs under the application of
Poincar\'e substitutions each time $\pi_{jk}$ is followed by either  $\pi_{jk}$ or
$\pi_{ik}$ for $\{i,j,k\}=\{1,2,3\}$ with possibly some Arnoux-Rauzy
substitutions $\alpha_k$, $k\in\{1,2,3\}$, in between.
\end{remark}

\begin{proof}
In the following proof, elements $(j,k)$ of $E(w)$ are noted $jk$ for short.
We refer below to the  lines of  Table \ref{table:bispecial}.
\medskip

{\bf Line 1.}
 Assume $\sigma_0\sigma_1\cdots\sigma_n\in \S_\alpha^*\S_\alpha$.  According to Lemma  \ref{lem:arbispecial}, the
extension type is preserved by Arnoux-Rauzy substitutions,  which yields
$E(w)=E_{\sigma_n}(\emptyword)$, so that $d^-(w)=3$. Moreover, since
$\sigma_n\in\S_\alpha$, then $E_{\sigma_n}(\emptyword)$ is ordinary and the
multiplicity is $m(w)=0 $ (by Lemma \ref{lem:typeextempty}).  Also, the bispecial extended images are unique
under the application of each substitution $\sigma_i\in\S_\alpha$, by Lemma \ref{lem:extendedimagesAR}.
\medskip

{\bf Line 2.}  Assume 
$\sigma_0\sigma_1\cdots\sigma_n\in \S_\alpha^*\S_\pi$.
The proof is the same as for Line 1 except that the extension type of the
empty word $E_{\sigma_n}(\emptyword)$ is not ordinary because
$\sigma_n\in\S_\pi$  (by Lemma \ref{lem:typeextempty}).
\medskip

{\bf Line 3-6.}  We assume
$\sigma_0\sigma_1\cdots\sigma_n\in \S^*\,\pi_{jk}\,\S_\alpha^*\S$.
Let  $\ell$  be   the largest  index of  occurrence smaller than $n$  of $\pi_{jk}$, that is, 
\[
\sigma_0\sigma_1\cdots\sigma_\ell\in\S^*\,\pi_{jk}
\quad\text{ and }\quad
\sigma_{\ell+1}\sigma_{\ell+2}\cdots\sigma_n\in\S_\alpha^*\S.
\]
The bispecial antecedent of $w$ under the substitution
$\sigma_0\sigma_1\cdots\sigma_{\ell-1}\in\S^*$ is  $w_{\ell}$, and
$w_{\ell+1}$ is the bispecial antecedent of $w_\ell$ under the substitution
$\sigma_\ell=\pi_{jk}$.
Since $\sigma_{\ell+1}\sigma_{\ell+2}\cdots\sigma_n\in\S_\alpha^*\S$, then
$d^-(w_{\ell+1})=3$ and $w_{\ell+1}$ has two extended images
under $\sigma_\ell=\pi_{jk}$.
Moreover, let $w'_\ell$, with $w'_\ell\neq w_\ell$, be the other extended image of $w_{\ell+1}$.
One has $w_{\ell}, w'_\ell \in \{ k\pi_{jk}(w_{\ell+1}),\ jk\pi_{jk}(w_{\ell+1})\}$.
Note that the factor $w'_\ell$ may be bispecial or not (see e.g.  the proof of the case of  Line~3 below).
The end of the proof for lines 3-6 follows the same pattern. In fact, the first part
$\sigma_0\sigma_1\cdots\sigma_{\ell-1}\in\S^*$  is always applied on a
bispecial factor $w_\ell$ or $w'_\ell$ with  left valence satisfying
$d^-(w_\ell)=d^-(w'_\ell)=2$. Therefore, from Lemma~\ref{lem:detailedpoincare}
(i) the extension types of $w=w_0$ and $w_\ell$ are left-equivalent.
Similarly, the extension types of $w'=w'_0$ and $w'_\ell$ are left-equivalent (where  the $w'_i$ are inductively defined  as   extended images).
From Lemma~\ref{lem:equivalence}, the multiplicity, the left valence and the
fact of being strong, weak or ordinary is preserved by left-equivalence.
Below, we suppose $w_{\ell}=k\pi_{jk}(w_{\ell+1})$ and $w'_\ell=jk\pi_{jk}(w_{\ell+1})$.
\medskip

\begin{figure}[h]
\begin{center}
\includegraphics{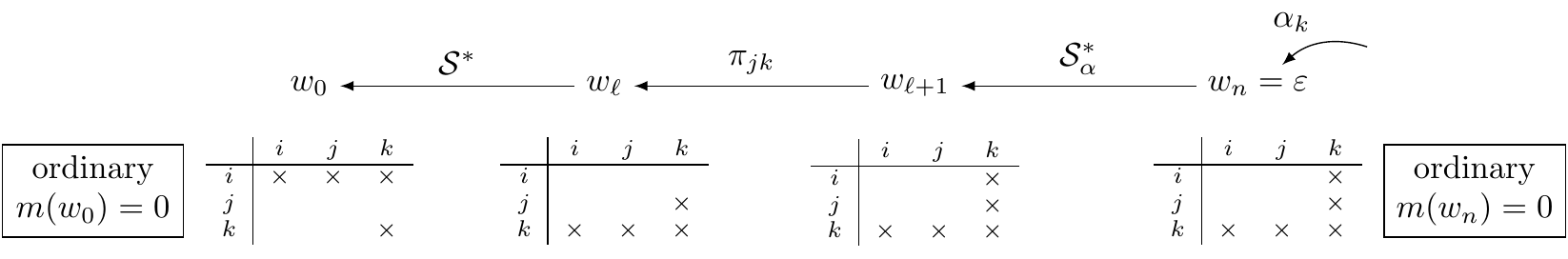}
\end{center}
\caption{Life of a bispecial word if
$\sigma_0\sigma_1\cdots\sigma_n\in
\S^*\,\pi_{jk}\,\S_\alpha^*\{\alpha_{k}\}$.}
\label{fig:life_ordinaire_simple}
\end{figure}
{\bf Line 3.}  We assume 
$\sigma_0\sigma_1\cdots\sigma_n\in
\S^*\,\pi_{jk}\,\S_\alpha^*\{\alpha_{k}\}$
(see Figure~\ref{fig:life_ordinaire_simple}).
If $\sigma_n=\alpha_k$, then
$E(w_{\ell+1})=E_{\sigma_n(u)}(\emptyword)= E_{\alpha_k(u)}(\emptyword)$.
Then from Lemma~\ref{lem:detailedpoincare} (ii) and
Table~\ref{table:poincaredegree3}, we have
$E(w_{\ell+1})=
\{ik,jk,ki,kj,kk\}$,
$E(w_\ell)=
\{jk,ki,kj,kk\}$
and
$E(w'_\ell)=
\{ik,kk\}$.
Then $w_\ell$ is bispecial ordinary and $w'_\ell$ is not bispecial.
\medskip

\begin{figure}[h]
\begin{center}
\includegraphics{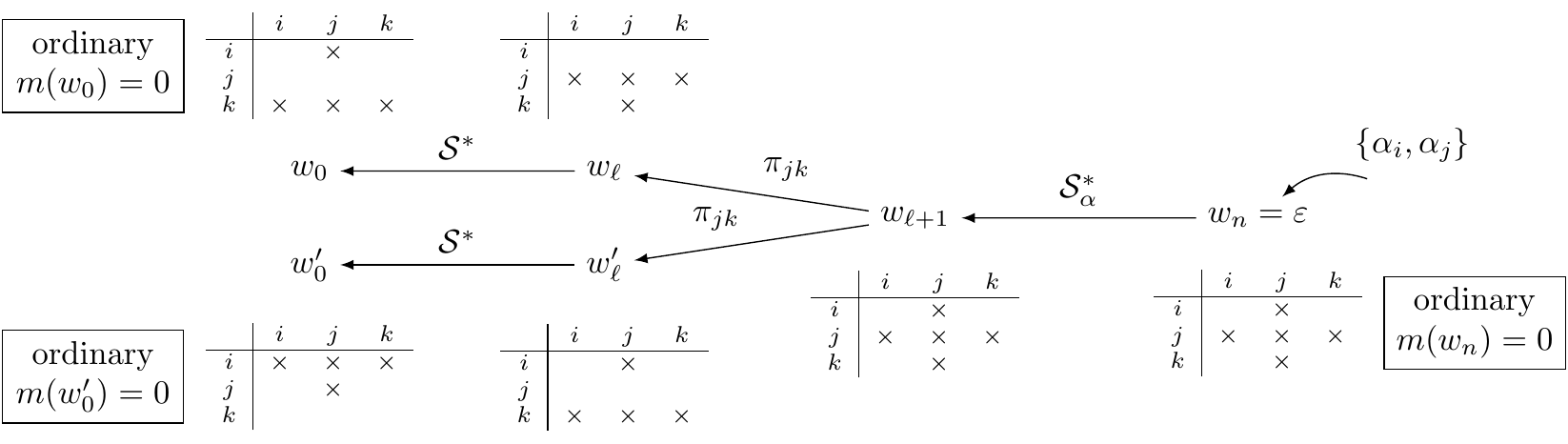}
\end{center}
\caption{Life of a bispecial word if
$\sigma_0\sigma_1\cdots\sigma_n\in
\S^*\,\pi_{jk}\,\S_\alpha^*\{\alpha_{i},\alpha_{j}\}$.
The extension types depicted represent the case $\sigma_n=\alpha_j$.
}
\label{fig:life_ordinaire_double}
\end{figure}
{\bf Line 4.} Assume
$\sigma_0\sigma_1\cdots\sigma_n\in
\S^*\,\pi_{jk}\,\S_\alpha^*\{\alpha_{i},\alpha_{j}\}$
(see Figure~\ref{fig:life_ordinaire_double}).
If $\sigma_n=\alpha_i$, then
$E(w_{\ell+1})=E_{\sigma_n(u)}(\emptyword)= E_{\alpha_i(u)}(\emptyword)$.
Then from Lemma~\ref{lem:detailedpoincare} (ii) and
Table~\ref{table:poincaredegree3}, we have
$E(w_{\ell+1})=
\{ii,ij,ik,ji,ki\}$,
$E(w_\ell)=
\{ji,jj,jk,ki\}$,
and
$E(w'_\ell)=
\{ii,ij,ik,ki\}$.
Then $w_\ell$ and $w'_\ell$ are both bispecial ordinary.

If $\sigma_n=\alpha_j$, then
$E(w_{\ell+1})=E_{\sigma_n(u)}(\emptyword)= E_{\alpha_j(u)}(\emptyword)$.
Then from Lemma~\ref{lem:detailedpoincare} (ii) and
Table~\ref{table:poincaredegree3}, we have
$E(w_{\ell+1})=
\{ij,ji,jj,jk,kj\}$,
$E(w_\ell)=
\{ji,jj,jk,kj\}$
and
$E(w'_\ell)=
\{ij,ki,kj,kk\}$.
Then $w_\ell$ and $w'_\ell$ are both bispecial ordinary.
\medskip

\begin{figure}[h]
\begin{center}
\includegraphics{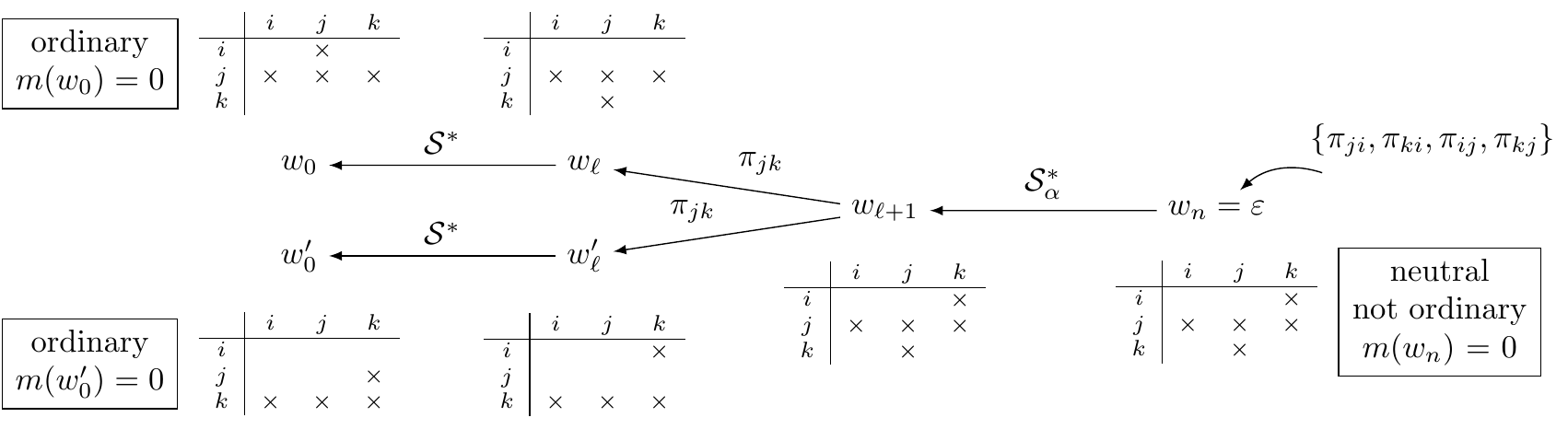}
\end{center}
\caption{Life of a bispecial word if
$\sigma_0\sigma_1\cdots\sigma_n\in
\S^*\,\pi_{jk}\,\S_\alpha^*\{\pi_{ji},\pi_{ki},\pi_{ij},\pi_{kj}\}$.
The extension types depicted represent the case $\sigma_n=\pi_{kj}$.}
\label{fig:life_pas_ordinaire_to_ordinaire}
\end{figure}
{\bf Line 5.}  Assume
$\sigma_0\sigma_1\cdots\sigma_n\in
\S^*\,\pi_{jk}\,\S_\alpha^*\{\pi_{ji},\pi_{ki},\pi_{ij},\pi_{kj}\}$
(see Figure~\ref{fig:life_pas_ordinaire_to_ordinaire}).
If $\sigma_n=\pi_{ji}$, then
$E(w_{\ell+1})=E_{\sigma_n(u)}(\emptyword)= E_{\pi_{ji}(u)}(\emptyword)$.
Then from Lemma~\ref{lem:detailedpoincare} (ii) and
Table~\ref{table:poincaredegree3}, we have
$E(w_{\ell+1})=
\{ii,ij,ik,ji,kj\}$,
$E(w_\ell)=
\{ji,jj,jk,kj\}$
and
$E(w'_\ell)=
\{ii,ij,ik,ki\}$.
Then $w_\ell$ and $w'_\ell$ are both bispecial ordinary.

If $\sigma_n=\pi_{ki}$, then
$E(w_{\ell+1})=E_{\sigma_n(u)}(\emptyword)= E_{\pi_{ki}(u)}(\emptyword)$.
Then from Lemma~\ref{lem:detailedpoincare} (ii) and
Table~\ref{table:poincaredegree3}, we have
$E(w_{\ell+1})=
\{ii,ij,ik,jk,ki\}$,
$E(w_\ell)=
\{ji,jj,jk,ki\}$
and
$E(w'_\ell)=
\{ii,ij,ik,kk\}$.
Then $w_\ell$ and $w'_\ell$ are both bispecial ordinary.

If $\sigma_n=\pi_{ij}$, then
$E(w_{\ell+1})=E_{\sigma_n(u)}(\emptyword)= E_{\pi_{ij}(u)}(\emptyword)$.
Then from Lemma~\ref{lem:detailedpoincare} (ii) and
Table~\ref{table:poincaredegree3}, we have
$E(w_{\ell+1})=
\{ij,ji,jj,jk,ki\}$,
$E(w_\ell)=
\{jk,jj,jk,ki\}$
and
$E(w'_\ell)=
\{ij,ki,kj,kk\}$.
Then $w_\ell$ and $w'_\ell$ are both bispecial ordinary.

If $\sigma_n=\pi_{kj}$, then
$E(w_{\ell+1})=E_{\sigma_n(u)}(\emptyword)= E_{\pi_{kj}(u)}(\emptyword)$.
Then from Lemma~\ref{lem:detailedpoincare} (ii) and
Table~\ref{table:poincaredegree3}, we have
$E(w_{\ell+1})=
\{ik,ji,jj,jk,kj\}$,
$E(w_\ell)=
\{ji,jj,jk,kj\}$
and
$E(w'_\ell)=
\{ik,ki,kj,kk\}$.
Then $w_\ell$ and $w'_\ell$ are both bispecial ordinary.
\medskip

\begin{figure}[h]
\begin{center}
\includegraphics{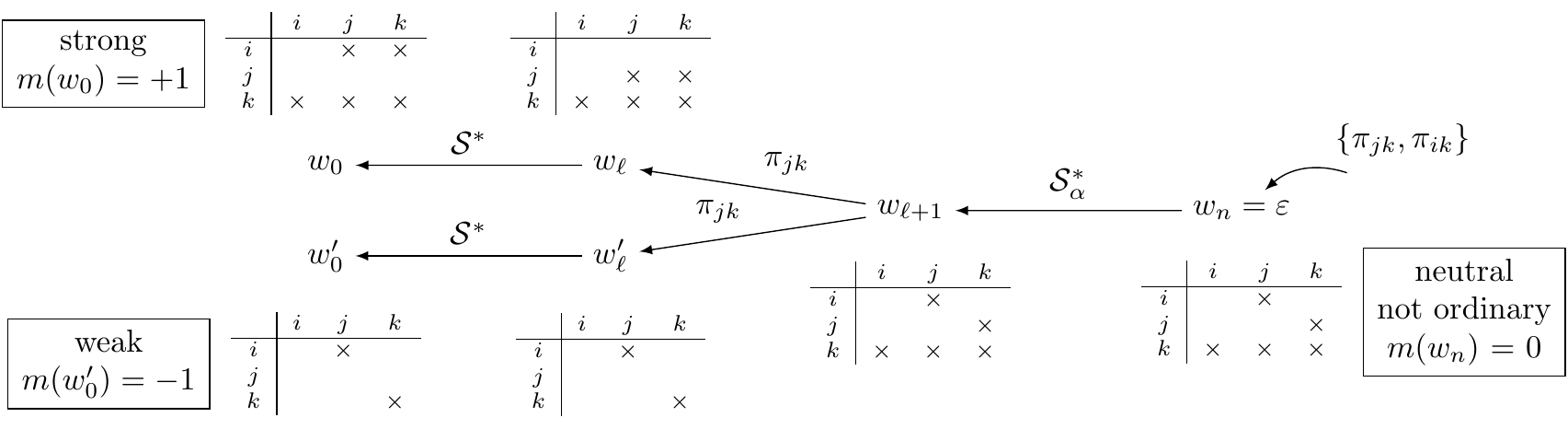}
\end{center}
\caption{Life of a bispecial word if
$\sigma_0\sigma_1\cdots\sigma_n\in
\S^*\,\pi_{jk}\,\S_\alpha^*\{\pi_{jk},\pi_{ik}\}$.
The extension types shown represent the case $\sigma_n=\pi_{jk}$.
}
\label{fig:life_pas_ordinaire_strong_weak}
\end{figure}
{\bf Line 6.} Assume
$\sigma_0\sigma_1\cdots\sigma_n\in
\S^*\,\pi_{jk}\,\S_\alpha^*\{\pi_{jk},\pi_{ik}\}$
(see Figure~\ref{fig:life_pas_ordinaire_strong_weak}).
If $\sigma_n=\pi_{jk}$, then
$E(w_{\ell+1})=E_{\sigma_n(u)}(\emptyword)= E_{\pi_{jk}(u)}(\emptyword)$.
Then from Lemma~\ref{lem:detailedpoincare} (ii) and
Table~\ref{table:poincaredegree3}, we have
$E(w_{\ell+1})=
\{ij,jk,ki,kj,kk\}$,
$E(w_\ell)=
\{jj,jk,ki,kj,kk\}$
and
$E(w'_\ell)=
\{ij,kk\}$.
Then $w_\ell$ is bispecial strong and $w'_\ell$ is bispecial weak.

If $\sigma_n=\pi_{ik}$, then
$E(w_{\ell+1})=E_{\sigma_n}(u)(\emptyword)= E_{\pi_{ik}(u)}(\emptyword)$.
Then from Lemma~\ref{lem:detailedpoincare} (ii) and
Table~\ref{table:poincaredegree3}, we have
$E(w_{\ell+1})=
\{ik,ji,ki,kj,kk\}$,
$E(w_\ell)=
\{ji,jk,ki,kj,kk\}$
and
$E(w'_\ell)=
\{ik,ki\}$.
Then $w_\ell$ is bispecial strong and $w'_\ell$ is bispecial weak.
\end{proof}

\subsection{Quadratic complexity is achievable}\label{sec:quadratic}

According to Remark \ref{rem:pairs}, each time $\pi_{jk}$ and $\pi_{ik}$ are found one next to the other in a certain
$\S$-adic sequence, a new pair of strong and weak bispecial factors is created
(see Lemma~\ref{lem:lifebispecial})
and the length of a newly created  weak bispecial factor can be larger than the length of an
older strong bispecial factor. Therefore, the complexity can increase by more
than $3$, i.e., $p(n+1)-p(n)>3$ for some values of $n$. Let us illustrate  it on the following example. Let
\begin{align*}
    u &= \pi_{23}\pi_{23}
    \pi_{13}\pi_{23}\pi_{23}\alpha_{1}\alpha_{3}\alpha_{2}(1)\\
    &= 1232333233123233332331232333333123233323\cdots
\end{align*}
The  bispecial factors of $u$ of age $\leq5$ and their life are shown in 
Figure~\ref{fig:larger3n}. We see that some weak bispecial factors are longer
than older strong bispecial factors. Because of this fact
and of Equation~\eqref{eq:bisp_mult}, the non-zero values
of the sequence $(b(n))_n$ do not alternate in the set $\{+1,-1\}$.
Therefore, there are values of $n$ for which $s(n)>3$.
The complete computation of $b(n)$, $s(n)$ and $p(n)$ for $n\leq10$ is given
in Table~\ref{tab:larger3n}.
The complexity $p(n)$ of the finite word $u$ satisfies 
$p(n+1)-p(n)=4$ for some values of $n$ and 
$p(n)>3n+1$ for $n$ such that $7\leq n\leq 17$
(recall Equations \eqref{eq:defsn}, \eqref{eq:defbn},
\eqref{eq:pn}, \eqref{eq:sn}, \eqref{eq:bisp_mult}
and in particular that
$s(n)=2+\sum_{\ell=0}^{n-1}b(\ell)$ when the size of alphabet is $3$).
\begin{figure}[h]
\begin{center}
\includegraphics{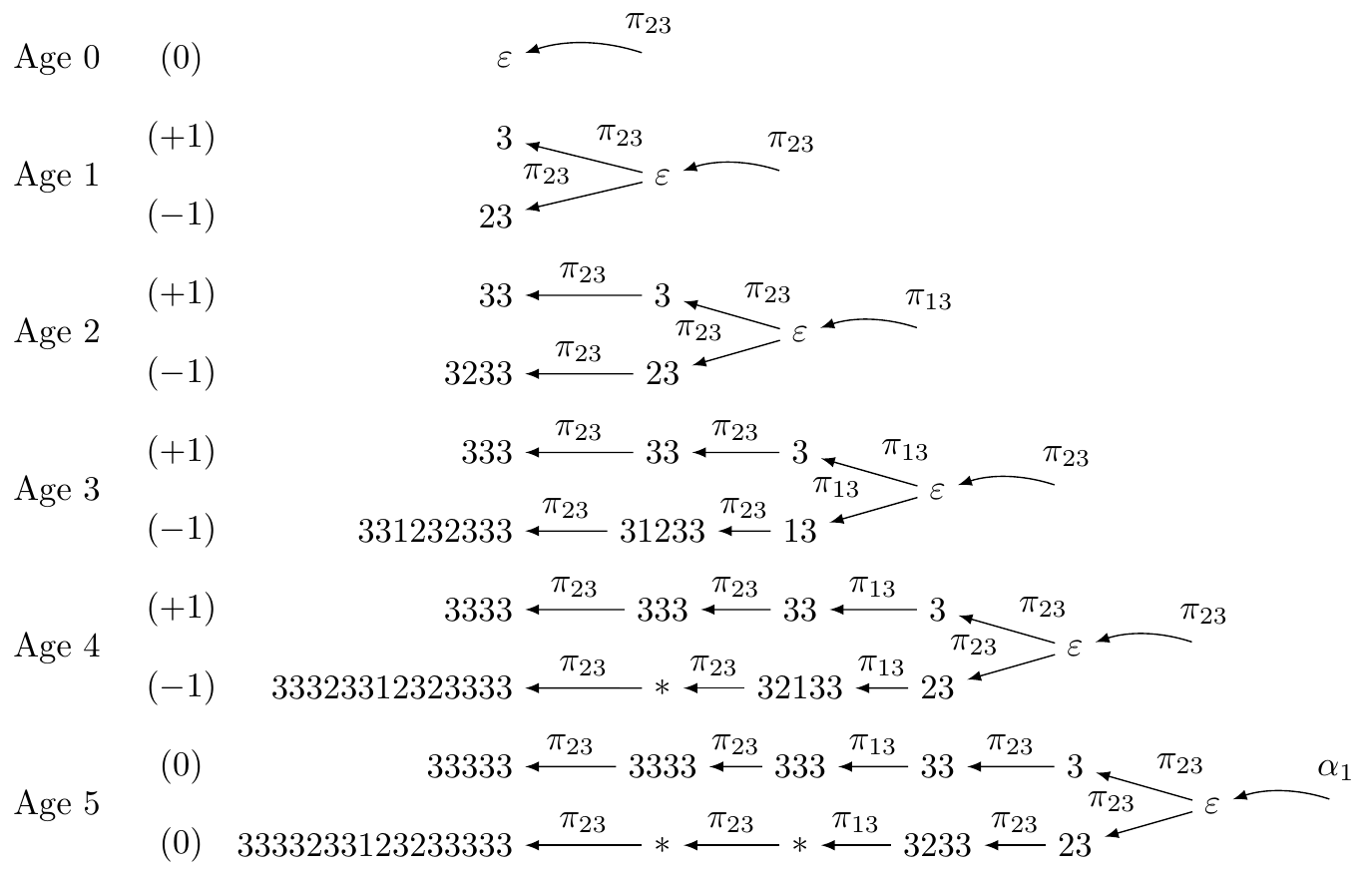}
\end{center}
\caption{
The young bispecial factors of 
$u = \pi_{23}\pi_{23}\pi_{13}\pi_{23}\pi_{23}\alpha_{1}\alpha_{3}\alpha_{2}(1)$
of age $\leq 5$ and their life. The factors $3$, $33$, $333$, $3333$ are strong while $33333$
is neutral. For  the age  $k$  equal to $2$ and $3$, the weak factor of age $k$ is longer than the strong
factor of age $k+1$. The quantity under parenthesis indicates the value of $m$.
}
\label{fig:larger3n}
\end{figure}
\begin{table}[h!]
\[
\begin{array}{c|c|c|c|c|c|c}
    n & \displaystyle \sum_{\substack{w\in\L_n(u)\\ w\,\text{is strong}}} m(w)
      & \displaystyle \sum_{\substack{w\in\L_n(u)\\ w\,\text{is weak}}}   m(w) 
      & b(n) & s(n) & p(n) & 3n+1\\
\hline
0 &  0 & 0  & 0 & 2 & 1  & 1   \\
1 & +1 & 0  & 1 & 2 & 3  & 4   \\
2 & +1 & -1 & 0 & 3 & 5  & 7   \\
3 & +1 & 0  & 1 & 3 & 8  & 10   \\
4 & +1 & -1 & 0 & 4 & 11 & 13  \\
5 &  0 & 0  & 0 & 4 & 15 & 16  \\
6 &  0 & 0  & 0 & 4 & 19 & 19  \\
7 &  0 & 0  & 0 & 4 & 23 & 22  \\
8 &  0 & 0  & 0 & 4 & 27 & 25  \\
9 &  0 & -1 & -1 & 4 & 31 & 28  \\
10 & 0 & 0  & 0 & 3 & 35 & 31  \\
\end{array}
\]
\caption{The lengths of strong bispecial factors of $u$ are $1$, $2$, $3$, $4$
    while the  lengths of weak bispecial factors of $u$ are $2$, $4$, $9$, $14$.
    Since there are two more strong bispecial factors of length $\leq n$ than
    the number of weak bispecial factors of length $\leq n$ for all $n$ such
    that $3\leq n\leq8$, then $s(n)=4$ for each $n$ with $4\leq n\leq9$.  For
    example, $s(4)=p(5)-p(4)=s(0)+\sum_{\ell=0}^{3}b(\ell)=4$. Moreover,
    $p(7)=23>22=3\cdot 7+1$.}
\label{tab:larger3n}
\end{table}

In fact the complexity can get higher.  It follows from
Theorem 4.7.66 of \cite[p. 214]{MR2759107} that the fixed point of
$\pi_{23}\pi_{13}$ starting with letter $1$ has a quadratic factor complexity
because it has infinitely many distinct factors, namely the factors $3^n$,
that are  bounded (in fact fixed) under $\pi_{23}\pi_{13}$.



\subsection{Partial and strict partial order on $\RR^3$}\label{sec:partialorder}

In this section, we consider two distinct partial orders on $\RR^3$ and
consider how these partial orders are preserved by the application of
Arnoux-Rauzy and Poincar\'e substitutions. The results allow the understanding of
the growth of bispecial factors and are used in the proof of  Theorem \ref{thm:leq3n}
in the next section.

Let $\vect{u}=(u_1,u_2,u_3), \vect{v}=(v_1,v_2,v_3)\in\NN^3$ be two  abelianized vectors (for two words $u,v$).
We define $<$ as the strict partial order (irreflexive, transitive and
thus asymmetric) defined coordinate per coordinate on $\NN^3$ by:
\[
\vect{u} < \vect{v}
\iff
u_1 < v_1\quad\text{and}\quad
u_2 < v_2\quad\text{and}\quad
u_3 < v_3.
\]
Also, we define $\leq$ as the partial order (reflexive, transitive and antisymmetric)
defined coordinate per coordinate on $\NN^3$:
\[
\vect{u} \leq \vect{v}
\iff
u_1 \leq v_1\quad\text{and}\quad
u_2 \leq v_2\quad\text{and}\quad
u_3 \leq v_3.
\]
Moreover, we say that the inequality $\vect{u} \leq \vect{v}$ is \emph{strict on
the index $i$} if $u_i<v_i$.
Note that $\leq$ is not the reflexive closure of $<$ since it includes more
relations.

The next lemma shows that the relation $<$ is preserved by Arnoux-Rauzy and
Poincar\'e substitutions and that some stronger conditions are satisfied.
These stronger conditions are used to show at
Lemma~\ref{lem:partialorderchignonpreserved} that the relation $<$ is also
preserved for extended images of factors.
In the next lemma and the next sections, we fix $\be_1=(1,0,0)$, $\be_2=(0,1,0)$ and $\be_3=(0,0,1)$.

\begin{lemma}\label{lem:partialorderpreserved}
Let $v,v'\in\A^*$ be such that $\vect{v}<\vect{v'}$.
For all $\{i,j,k\}=\{1,2,3\}$,
\begin{enumerate}[\rm (i)]
\item $\vect{\alpha_{k}(v)}+2\be_k< \vect{\alpha_{k}(v')}$,
\item $\vect{\pi_{jk}(v)}+\be_j+2\be_k< \vect{\pi_{jk}(v')}$.
\end{enumerate}
In particular, if $\vect{v}<\vect{v'}$ then
$\vect{\alpha_{k}(v)}< \vect{\alpha_{k}(v')}$ and
$\vect{\pi_{jk}(v)}< \vect{\pi_{jk}(v')}$.
\end{lemma}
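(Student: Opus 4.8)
The plan is to reduce the whole statement to an explicit coordinate-by-coordinate computation of the two abelianized vectors, and then exploit the integrality of their entries. Writing $\vect{v}=(v_1,v_2,v_3)$ and $\vect{v'}=(v'_1,v'_2,v'_3)$, the hypothesis $\vect{v}<\vect{v'}$ means $v_\ell<v'_\ell$ for each $\ell\in\{1,2,3\}$, and since these are integers this is equivalent to $v'_\ell-v_\ell\geq 1$ for each $\ell$. First I would record, directly from the definitions $\alpha_k\colon i\mapsto ik,\ j\mapsto jk,\ k\mapsto k$ and $\pi_{jk}\colon i\mapsto ijk,\ j\mapsto jk,\ k\mapsto k$, the letter counts (written in the coordinate order $i,j,k$)
\[
\vect{\alpha_k(v)}=(v_i,\ v_j,\ v_i+v_j+v_k),\qquad
\vect{\pi_{jk}(v)}=(v_i,\ v_i+v_j,\ v_i+v_j+v_k),
\]
and likewise for $v'$.

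For part (i), the $i$- and $j$-coordinates of $\vect{\alpha_k(v)}+2\be_k$ coincide with those of $\vect{\alpha_k(v)}$ and read $v_i<v'_i$ and $v_j<v'_j$, which hold by hypothesis. The only coordinate that consumes the strengthening is $k$, where one must check $(v_i+v_j+v_k)+2<v'_i+v'_j+v'_k$. Summing the three integer gaps $v'_\ell-v_\ell\geq 1$ gives $\sum_\ell(v'_\ell-v_\ell)\geq 3>2$, which is exactly the required inequality. This is the crux, and it is where the offset $2\be_k$ is calibrated: the $k$-coordinate of $\alpha_k$ accumulates all three differences, so it carries two units of slack beyond strictness.

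Part (ii) follows the same scheme. The $i$-coordinate gives $v_i<v'_i$ directly. For the $j$-coordinate of $\vect{\pi_{jk}(v)}+\be_j+2\be_k$ I would check $(v_i+v_j)+1<v'_i+v'_j$, which holds because $(v'_i-v_i)+(v'_j-v_j)\geq 2>1$; the $k$-coordinate computation is identical to that of (i), pitting all three gaps against the offset $2$. The final \emph{in particular} clause is then immediate: since $2\be_k$ and $\be_j+2\be_k$ are nonnegative, we have $\vect{\alpha_k(v)}\leq\vect{\alpha_k(v)}+2\be_k$ and $\vect{\pi_{jk}(v)}\leq\vect{\pi_{jk}(v)}+\be_j+2\be_k$ coordinatewise, so composing with the strict inequalities just proved yields $\vect{\alpha_k(v)}<\vect{\alpha_k(v')}$ and $\vect{\pi_{jk}(v)}<\vect{\pi_{jk}(v')}$.

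I do not anticipate a genuine obstacle; this is essentially a bookkeeping lemma. The one point requiring care is \emph{not} to argue by a blunt ``monotonicity of the incidence matrix'', which would only deliver $\leq$ and forfeit strictness in the $i$- and $j$-coordinates, but instead to treat each coordinate separately and use integrality to convert three (resp.\ two) strict scalar inequalities into the single unit of extra room that the offsets $2\be_k$ and $\be_j+2\be_k$ demand.
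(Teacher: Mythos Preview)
Your proposal is correct and follows essentially the same route as the paper: both compute the abelianized vectors explicitly from the definitions of $\alpha_k$ and $\pi_{jk}$, then use integrality ($v_\ell<v'_\ell$ in $\NN$ gives $v'_\ell-v_\ell\geq 1$) to verify the strict inequalities coordinate by coordinate, with the $k$-coordinate absorbing the sum of all three gaps and the $j$-coordinate (for $\pi_{jk}$) absorbing two of them. The only difference is cosmetic packaging of the same arithmetic.
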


The proof is in the appendix.

The next lemma shows that the relation $<$ is preserved by Arnoux-Rauzy and
Poincar\'e substitutions from a pair of factors to their extended images.

\begin{lemma}\label{lem:partialorderchignonpreserved}
Let $\sigma\in\S$.  Let $v,v',w,w'\in\A^*$ and suppose
$w$ (resp. $w'$) is an extended image of $v$ (resp. $v'$) under
$\sigma$. If $\vect{v}<\vect{v'}$, then $\vect{w}<\vect{w'}$.
\end{lemma}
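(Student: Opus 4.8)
The plan is to make the shapes of $w$ and $w'$ completely explicit and then to feed them into the calibrated inequalities of Lemma~\ref{lem:partialorderpreserved}. By the Synchronization Lemma~\ref{lem:preimagearnouxETpoincare}, being an extended image means there are a prefix $p$ and a suffix $s$ with $w=p\cdot\sigma(v)\cdot s$, and likewise $w'=p'\cdot\sigma(v')\cdot s'$; here $\vect p\le\be_k$ and $\vect s\le\be_i+\be_j$ when $\sigma=\alpha_k$, while $\vect p\le\be_j+\be_k$ and $\vect s\le\be_i+\be_j$ when $\sigma=\pi_{jk}$. Passing to abelianized vectors (and using $\vect{\sigma(x)}=M_\sigma\vect x$), the claim $\vect w<\vect{w'}$ becomes a comparison of
\[
\vect{w'}-\vect w=\bigl(\vect{\sigma(v')}-\vect{\sigma(v)}\bigr)+\bigl((\vect{p'}-\vect p)+(\vect{s'}-\vect s)\bigr),
\]
so it suffices to check that the substitution gap $\vect{\sigma(v')}-\vect{\sigma(v)}$ dominates, coordinate by coordinate, the bounded affix perturbation.

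The engine is that Lemma~\ref{lem:partialorderpreserved} was stated with exactly the slack needed to swallow the prefix. First I would treat the suffix-free situation $s=s'=\emptyword$. For $\sigma=\alpha_k$ one has $\vect{\alpha_k(v')}-\vect{\alpha_k(v)}>2\be_k$, so (since $<$ is strict in every coordinate) its $i,j$-entries are $\ge1$ and its $k$-entry is $\ge3$, while $\vect{p'}-\vect p\ge-\be_k$; adding these, $\vect{w'}-\vect w$ has entries at least $1,1,2$ in positions $i,j,k$, hence is positive in every coordinate. For $\sigma=\pi_{jk}$ the same computation uses $\vect{\pi_{jk}(v')}-\vect{\pi_{jk}(v)}>\be_j+2\be_k$ (entries $\ge1,2,3$) against $\vect{p'}-\vect p\ge-(\be_j+\be_k)$, again leaving entries at least $1,1,2$, hence a strictly positive margin. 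Thus, modulo the suffix, the whole proof reduces to a two-case coordinate bookkeeping built directly on Lemma~\ref{lem:partialorderpreserved}, with the slacks $2\be_k$ and $\be_j+2\be_k$ exactly matching the maximal prefixes $k$ and $jk$.

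The point that needs genuine care—and the one I expect to be the real obstacle—is the suffix. The prefixes only load coordinates $j$ and $k$, precisely where Lemma~\ref{lem:partialorderpreserved} has reserved extra room, but a suffix $s=i$ loads coordinate $i$, where the only surplus is the bare strict inequality $v_i<v'_i$; if $w'$ carries no matching suffix this surplus can be consumed entirely, so the crude bound $\vect{s'}-\vect s\ge-(\be_i+\be_j)$ is too weak to conclude. I would resolve this by recording at the outset the setting in which the lemma is used, namely for \emph{bispecial} extended images: by Lemmas~\ref{lem:arbispecial} and~\ref{lem:pbispecial} such an image necessarily has the suffix-free form $k\alpha_k(v)$, $k\pi_{jk}(v)$, or $jk\pi_{jk}(v)$. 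Hence $s=s'=\emptyword$, the prefix-only estimate of the previous paragraph applies verbatim, and the strict inequality $\vect w<\vect{w'}$ follows; the degenerate case $v=\emptyword$ (prefix $j$, so $\vect w=\be_j$) is checked directly since $\vect{\pi_{jk}(v')}$ already has $j$-entry $v'_i+v'_j\ge2$.
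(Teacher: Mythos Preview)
Your approach is correct and matches the paper's: both restrict to the suffix-free forms $k\alpha_k(v)$, $k\pi_{jk}(v)$, $jk\pi_{jk}(v)$ and invoke Lemma~\ref{lem:partialorderpreserved}, with the paper doing an explicit four-case check on the Poincar\'e prefix combinations where you use a single coordinate bound. You are right that the lemma as literally stated (for arbitrary extended images carrying a suffix $s\in\{\emptyword,i,j,ij\}$) would fail on coordinate $i$; the paper's own proof also tacitly works only with the bispecial forms, asserting without comment that ``the extended image of $v$ and $v'$ are uniquely determined: $w=k\alpha_k(v)$''---so your explicit discussion of this scope issue is a clarification the paper itself glosses over.
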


The proof is in the appendix.
\begin{remark}
The previous lemma is false for the order $\leq$.
Indeed $\pi_{jk}$ does not preserve the relation $\leq$ for extended images.
For example, if $v=\emptyword$ and $v'=3$, then
$\vect{v}=(0,0,0)\leq(0,0,1)=\vect{v'}$
but
\[
\vect{13\pi_{13}(v)}=\vect{13}=(1,0,1)\not\leq (0,0,2)=\vect{33}=\vect{3\pi_{13}(v')},
\]
and this may even lead after some more substitutions to an inversion of the
order:
\[
\vect{3\pi_{23}(13)}=\vect{31233}=(1,1,3)\geq (0,0,3)=\vect{333}=\vect{3\pi_{23}(33)}.
\]
This example can be seen between age $3$ and $4$ in Figure~\ref{fig:larger3n}.
\end{remark}

\section{Proof of  Theorem  \ref{thm:leq3n}}\label{sec:proofthm}

We now consider $\S$-adic words $\bu$ generated by the Arnoux-Rauzy-Poincar\'e
algorithm applied to a totally irrational vector $\bx\in\Delta$.  By
Proposition~\ref{prop:primitive},   $\bu^{(m)}$ is   proper and  uniformly
recurrent for all $m$ so
the hypothesis  introduced  in the previous section is satisfied.  Such sequences
are in the Arnoux-Rauzy-Poincar\'e $\S$-adic system (Type $3$), that is, we
take into account the restrictions on the directive sequences provided by
Proposition \ref{prop:1}. 
The examples in Section~\ref{sec:quadratic}  show that
Arnoux-Rauzy-Poincaré $\S$-adic sequences can lead in general to quadratic
factor complexity.
Nevertheless 
we show that the factor complexity $p(n)$ of $\S$-adic words $\bu$ generated by the
Arnoux-Rauzy-Poincar\'e algorithm applied to a totally irrational vector 
satisfy $p(n+1)-p(n)\in\{2,3\}$. 
Thus, their factor complexity is bounded below and
above,  that is,  $2n+1\leq p(n)\leq 3n+1$ for all $n$.  In fact, we even  prove  that $p(n+1)-p(n)$
is equal to $2$ more often than it is equal to $3$ which implies that
$p(n)\leq\frac{5}{2}n+1$.
More precisely,  we   will  show that strong and weak bispecial words
alternate when the length increases in Section \ref{subsec:2,3}. We then   consider  more closely the lengths of consecutive values of  $2$ and $3$ in 
the sequence $(p(n+1)-p(n))_n$ in Section \ref{subsec:52}. By making  use  of
Lemma~\ref{lem:pn23iffsumbn01} together with Lemma \ref{lem:alternate}  (see Figure~\ref{fig:pairebispecial}),  we will be  able to prove Theorem 
 \ref{thm:leq3n} in Section  \ref{subsec:proof}.


\subsection{Alternance of  strong and weak  bispecial factors} \label{subsec:2,3}

We first gather the lemmas required in the  proof (see Section
\ref{subsec:proof})  of the fact that the  $\S$-adic words  $\bu$ (with  $\bu ^{(m)}$ recurrent for all $m$) such  that
$\sigma_k\sigma_{k+1}\cdots\sigma_{\ell}\in \L(\G)$ (for all $k, \ell$)   provide
 words that satisfy $p(n+1)-p(n)\in\{2,3\}$.

Restricted to the language of the automaton $\G$, illustrated in
Figure~\ref{figure:markovchain}, the history of a strong or weak bispecial
factor necessarily contains Arnoux-Rauzy substitutions.

\begin{lemma}\label{lem:weak-stron-language-restricted}
Let $\bu=\lim_{n\to\infty}\sigma_0\sigma_1\cdots\sigma_n(a_n)$  be an
$\S$-adic word generated by the Arnoux-Rauzy-Poincar\'e
algorithm applied to a totally irrational vector $\bx\in\Delta$.
Let $w$ be a bispecial factor of $\bu$ and let  $n=\age(w)$.

If $w$ is weak or strong
and the history of $w$ is in the regular language
$\sigma_0\sigma_1\cdots\sigma_n \in \L(\G)$, then
\[
\sigma_0\sigma_1\cdots\sigma_n\in
\S^*\, \pi_{jk} \{\alpha_j\}^*\, \alpha_i\, \S_\alpha^*\,\, \{\pi_{ik},\pi_{jk}\}
\]
for some $\{i,j,k\}=\{1,2,3\}$.
\end{lemma}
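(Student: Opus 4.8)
The plan is to combine the classification of histories in Lemma~\ref{lem:lifebispecial} with a reading of the automaton $\G$. First I would invoke Table~\ref{table:bispecial}: among the six possible shapes of a history, only Line~6 yields a nonzero multiplicity, so a weak or strong bispecial factor is forced to have history
\[
\sigma_0\sigma_1\cdots\sigma_n\in\S^*\,\pi_{jk}\,\S_\alpha^*\,\{\pi_{jk},\pi_{ik}\}
\]
for some $\{i,j,k\}=\{1,2,3\}$. Writing $\ell$ for the largest index below $n$ with $\sigma_\ell=\pi_{jk}$, this says $\sigma_{\ell+1}\cdots\sigma_{n-1}\in\S_\alpha^*$ and $\sigma_n\in\{\pi_{jk},\pi_{ik}\}$. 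It then remains only to show that, under the extra hypothesis $\sigma_0\cdots\sigma_n\in\L(\G)$, the Arnoux-Rauzy block $\sigma_{\ell+1}\cdots\sigma_{n-1}$ is forced into the shape $\{\alpha_j\}^*\,\alpha_i\,\S_\alpha^*$.

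Next I would fix an accepting path $\Delta=q_0\xrightarrow{\sigma_0}q_1\xrightarrow{\sigma_1}\cdots\xrightarrow{\sigma_n}q_{n+1}$ in $\G$ and track only the states from $q_{\ell+1}$ onward. Since every edge labelled $\pi_{jk}$ in $\G$ ends in the state $H_{jk}$ (the destination of a Poincar\'e edge is read off its label, as recorded in Figure~\ref{figure:markovchain}), reading $\sigma_\ell=\pi_{jk}$ lands us in $q_{\ell+1}=H_{jk}$. From the description of $\delta$, the only Arnoux-Rauzy edges leaving $H_{jk}$ are the loop $(H_{jk},\alpha_j,H_{jk})$ and the edge $(H_{jk},\alpha_i,\Delta)$; in particular there is no $\alpha_k$-edge at $H_{jk}$, so while the path remains at $H_{jk}$ it can only read $\alpha_j$, and the single way to leave $H_{jk}$ by an Arnoux-Rauzy letter is via $\alpha_i$, which moves to $\Delta$. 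Once at $\Delta$ every $\alpha$-edge is a loop, so there any sequence of Arnoux-Rauzy substitutions may be read.

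The crux of the argument — and the step I would verify most carefully — is that the terminal substitution $\sigma_n\in\{\pi_{jk},\pi_{ik}\}$ cannot be read from $H_{jk}$. Indeed the Poincar\'e edges leaving $H_{jk}$ are exactly those labelled $\pi_{ij},\pi_{ki},\pi_{ji}$, and a direct check of index pairs shows that neither $(j,k)$ nor $(i,k)$ equals any of $(i,j),(k,i),(j,i)$; by contrast, from $\Delta$ every $\pi$-edge is available. Hence the path cannot still sit at $H_{jk}$ when it reads $\sigma_n$: it must already have returned to $\Delta$, which (since the intervening block is Arnoux-Rauzy) can only happen by an occurrence of $\alpha_i$. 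As the path stays at $H_{jk}$ through an initial run of $\alpha_j$'s, leaves to $\Delta$ at the first $\alpha_i$, and then remains at $\Delta$, the block $\sigma_{\ell+1}\cdots\sigma_{n-1}$ is a run of $\alpha_j$'s, followed by $\alpha_i$, followed by an arbitrary word of $\S_\alpha^*$ read at $\Delta$; that is, $\{\alpha_j\}^*\,\alpha_i\,\S_\alpha^*$. Substituting back gives
\[
\sigma_0\sigma_1\cdots\sigma_n\in\S^*\,\pi_{jk}\,\{\alpha_j\}^*\,\alpha_i\,\S_\alpha^*\,\{\pi_{ik},\pi_{jk}\},
\]
as required. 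Beyond this bookkeeping, the only genuine content is the index-pair verification that $\{\pi_{jk},\pi_{ik}\}$ is disjoint from the outgoing Poincar\'e labels at $H_{jk}$, which is exactly what makes the return to $\Delta$ — and hence the forced $\alpha_i$ — unavoidable.
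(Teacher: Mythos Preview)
Your proof is correct and follows essentially the same route as the paper: first invoke Lemma~\ref{lem:lifebispecial} (Line~6 of Table~\ref{table:bispecial}) to get the coarse shape $\S^*\,\pi_{jk}\,\S_\alpha^*\,\{\pi_{jk},\pi_{ik}\}$, then trace the path in $\G$ from the state $H_{jk}$ reached after $\sigma_\ell=\pi_{jk}$, observing that the outgoing edges at $H_{jk}$ force a run of $\alpha_j$'s followed by $\alpha_i$ before the final Poincar\'e letter can be read from $\Delta$. Your explicit index-pair check that $\{\pi_{jk},\pi_{ik}\}$ is disjoint from the Poincar\'e labels $\{\pi_{ij},\pi_{ki},\pi_{ji}\}$ leaving $H_{jk}$ is exactly the key observation the paper uses (phrased there as ``only $\alpha_j$ and $\alpha_i$ are allowed by $q\in\S_\alpha^*\,\{\pi_{ik},\pi_{jk}\}$'').
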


\begin{proof}
From Lemma~\ref{lem:lifebispecial}, we have
\[
\sigma_0\sigma_1\cdots\sigma_n\in
\S^*\,\pi_{jk}\,\S_\alpha^*\,\{\pi_{ik},\pi_{jk}\}
\]
for some $\{i,j,k\}=\{1,2,3\}$.
Let $p\in\S^*\,\pi_{jk}$ and $q\in\S_\alpha^*\,\{\pi_{ik},\pi_{jk}\}$ such
that $pq=\sigma_0\sigma_1\sigma_2\cdots\sigma_n$.
The word $p$ starts at the initial state $\Delta$ and ends in the state $H_{jk}$, the
word $q$ starts from the state $H_{jk}$ and ends in state $H_{jk}$ or $H_{ik}$ (see
Figure~\ref{fig:sousautomate}).
\begin{figure}[h]
\begin{center}
\includegraphics{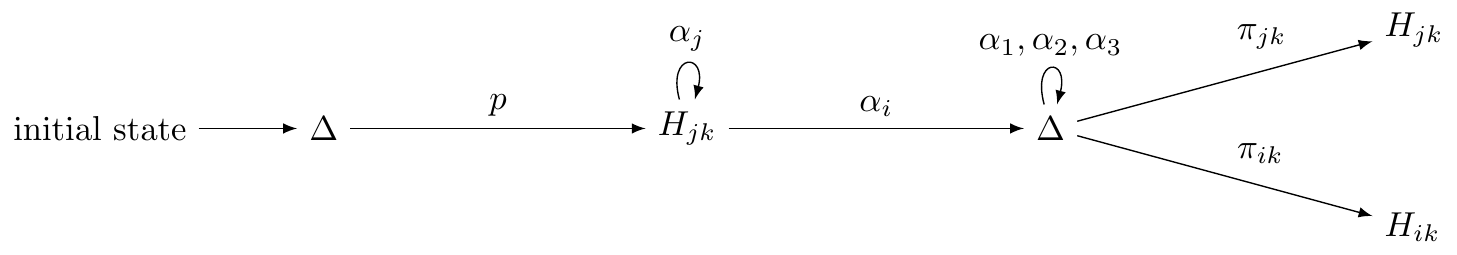}
\end{center}
\caption{The subautomaton of $\G$ describing a path
$\sigma_0\sigma_1\cdots\sigma_n=pq$ such that $p\in
\S^*\,\pi_{jk}$ and $q\in\S_\alpha^*\,\{\pi_{ik},\pi_{jk}\}$.}
\label{fig:sousautomate}
\end{figure}
In the automaton $\G$, the possible transitions  issued from state $H_{jk}$ are
$\pi_{ij}$, $\pi_{ji}$, $\pi_{ki}$, $\alpha_j$ and $\alpha_i$ where only
$\alpha_j$ (looping on state $H_{jk}$) and $\alpha_i$ (going to state
$\Delta$) are allowed by $q\in\S_\alpha^*\,\{\pi_{ik},\pi_{jk}\}$. Once in
state $\Delta$, $q$ allows loops for each symbol in $\S_\alpha$, and finally  the transitions
$\pi_{jk}$ or $\pi_{ik}$ (see Figure~\ref{fig:sousautomate}).  It follows from
this that
\[
q\in\{\alpha_j\}^*\, \alpha_i\, \S_\alpha^*\,\, \{\pi_{ik},\pi_{jk}\}
\]
which was to be proved.
\end{proof}

\begin{lemma}\label{lem:length-lower-bound}
Let $w$ be a bispecial factor   of  an Arnoux-Rauzy-Poincar\'e  $\S$-adic  word.
If for some $\{i,j,k\}=\{1,2,3\}$,
\[
\history(w)\in \pi_{jk}\, \S^* \alpha_i\, \S^* \S,
\]
then $\vect{w}\geq(1,1,1)$.
\end{lemma}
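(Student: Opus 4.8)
The plan is to track a single letter through the life of $w$ and to use two facts established earlier: every substitution in $\S$ sends each letter $\ell$ to a word beginning with $\ell$, and the Poincar\'e substitution $\pi_{jk}$ sends $i$ to the word $ijk$, which contains all three letters. Write $\history(w)=\sigma_0\sigma_1\cdots\sigma_n$ with $\sigma_0=\pi_{jk}$ and $n=\age(w)$, and let $(w_t)_{0\le t\le n}$ be the life of $w$. Since $w$ is bispecial, every $w_t$ is bispecial, and $w_t\neq\emptyword$ for $t<n$. By hypothesis $\alpha_i$ occurs in the history strictly after position $0$ (occupied by $\pi_{jk}$) and strictly before position $n$ (since it is followed by at least one symbol of $\S$), so there is an index $m$ with $1\le m\le n-1$ and $\sigma_m=\alpha_i$; fix such an $m$.

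First I would show that the letter $i$ occurs in $w_m$. Indeed $w_m$ is a nonempty bispecial factor of $\alpha_i(\bu^{(m+1)})$ whose antecedent under $\alpha_i$ is $w_{m+1}$, so Lemma~\ref{lem:arbispecial} gives $w_m=i\,\alpha_i(w_{m+1})$, which begins with $i$. Then I would propagate this occurrence upward by a descending induction on $t$ from $m$ down to $1$. Assume $i$ occurs in $w_{t+1}$ for some $t$ with $1\le t\le m-1$. As $w_t$ is an extended image of $w_{t+1}$ under $\sigma_t$, the word $\sigma_t(w_{t+1})$ is a factor of $w_t$; and since $i$ occurs in $w_{t+1}$, the word $\sigma_t(i)$ is a factor of $\sigma_t(w_{t+1})$, while $\sigma_t(i)$ begins with $i$ because every substitution of $\S$ preserves the first letter. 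Hence $i$ occurs in $w_t$, and in particular $i$ occurs in $w_1$.

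It remains to apply the outermost substitution. Since $w=w_0$ is an extended image of $w_1$ under $\sigma_0=\pi_{jk}$, Lemma~\ref{lem:pbispecial} gives $w=k\pi_{jk}(w_1)$ or $w=jk\pi_{jk}(w_1)$, so $\pi_{jk}(w_1)$ is a factor of $w$. As $i$ occurs in $w_1$, the word $\pi_{jk}(i)=ijk$ is a factor of $\pi_{jk}(w_1)$, hence of $w$; because $\{i,j,k\}=\{1,2,3\}$, this factor contains each of the three letters, whence $\vect{w}\ge(1,1,1)$. The argument is essentially routine once set up this way; the only points needing care are checking that the relevant life-words are nonempty (guaranteed by $m\le n-1<n=\age(w)$) and bispecial (so that the antecedent descriptions of Lemmas~\ref{lem:arbispecial} and~\ref{lem:pbispecial} apply), together with the key observation that placing $\pi_{jk}$ at the outermost step is precisely what converts a single occurrence of $i$ into an occurrence of all three letters.
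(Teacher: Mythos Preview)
Your proof is correct and follows essentially the same approach as the paper's: locate the occurrence of $\alpha_i$ in the history to force the letter $i$ into $w_m$ via $w_m=i\,\alpha_i(w_{m+1})$, propagate that occurrence down to $w_1$, and then use $\pi_{jk}(i)=ijk$ at the outermost step to get all three letters. The paper's argument is terser (it simply asserts that $w_1$ contains $i$ without spelling out the induction), whereas you make the propagation explicit and carefully justify the index bounds and the applicability of Lemmas~\ref{lem:arbispecial} and~\ref{lem:pbispecial}; but the underlying idea is the same.
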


\begin{figure}[h]
\begin{center}
\includegraphics{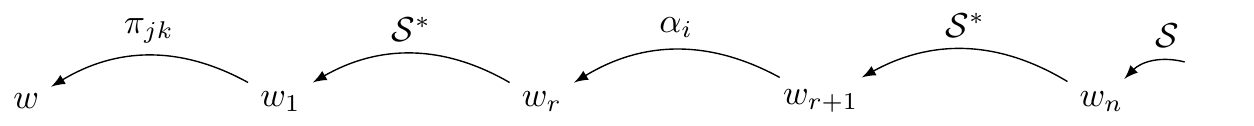}
\end{center}
\caption{We suppose here that
$\history(w)\in \pi_{jk}\, \S^* \alpha_i\, \S^* \S$.}
\label{fig:vectorgraph}
\end{figure}

\begin{proof}
Let $w_1$ be the ancestor of $w$ under $\pi_{jk}$.
Let $r$ and $n$ be integers such that $1\leq r < n=\age(w)$ and $w_{r+1}$ is
the ancestor of $w_r$ under substitution $\alpha_i$ as depicted in
Figure~\ref{fig:vectorgraph}. We have that $\vect{w_n}=(0,0,0)$. Also,
$\vect{w_{r+1}}\geq (0,0,0)$ but $w_r=i\alpha_i(w_{r+1})$ contains at least one occurence of the
letter $i$. Then, $w_1$ also contains at least one occurrence of the
letter $i$. Therefore $\vect{w}\geq(1,1,1)$, because $\pi_{jk}$ maps
$i$ to $ijk$.
\end{proof}

In order to prove that $p(n+1)-p(n)\in\{2,3\}$ for 
 Arnoux-Rauzy-Poincar\'e  $\S$-adic  words $\bu$  such that $\bu^{(m)}$
is  proper and recurrent for all $m$, it is sufficient that strong and weak bispecial words
alternate when the length increases because of Lemma~\ref{lem:pn23iffsumbn01}.  More precisely, if
$z_1$ and $z_3$ are two strong (with multiplicity $+1$) bispecial factors of a word
$\bu$ such that $|z_1|<|z_3|$, then there exists a weak (with multiplicity $-1$)
bispecial factor $z_2$ such that $|z_1|<|z_2|\leq|z_3|$. Note that the notion of alternance was also 
used to prove Theorem 4.11.2 in \cite[p.  238]{MR2759107}.

\begin{lemma}\label{lem:weak_larger_than_strong}
Let $\bu$
be an Arnoux-Rauzy-Poincar\'e  $\S$-adic  word such that $\bu^{(m)}$
is  proper and recurrent for all $m$. 
Let $z^+$ and $z^-$ be two bispecial factors of $\bu$ of the same age.
 Suppose  that $z^-$ is weak and $z^+$ is strong. Then
$|z^+|<|z^-|$. 
\end{lemma}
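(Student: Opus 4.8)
The plan is to isolate the unique mechanism that can create a strong and a weak bispecial factor of the same age, and then to show that, all along their common life, the weak factor contains the strong one as a proper suffix; reading this off at age $0$ gives $|z^+|<|z^-|$.

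First I would invoke Lemma~\ref{lem:lifebispecial}. Scanning Table~\ref{table:bispecial}, the only history producing at the same time a factor of multiplicity $+1$ and a factor of multiplicity $-1$ of the same age is the last line, so by Lemma~\ref{lem:common} the common history of $z^+$ and $z^-$ lies in $\S^*\,\pi_{jk}\,\S_\alpha^*\,\{\pi_{jk},\pi_{ik}\}$ for some $\{i,j,k\}=\{1,2,3\}$. Writing $(w_r)_r$ for the life of $z^+=w_0$ and $(w'_r)_r$ for that of $z^-=w'_0$, and letting $\ell<n=\age(z^+)$ be the largest index at which $\pi_{jk}$ occurs, the analysis of Line~6 gives a common antecedent $w_{\ell+1}$ with $d^-(w_{\ell+1})=3$ whose two bispecial extended images are the strong $w_\ell=k\pi_{jk}(w_{\ell+1})$ and the weak $w'_\ell=jk\pi_{jk}(w_{\ell+1})$; moreover both lineages have left valence $2$ at every level $0\le r\le\ell$ (by Lemma~\ref{lem:atmosttwobispecialofsameage} and Lemma~\ref{lem:detailedpoincare}). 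In particular $w'_\ell=j\,w_\ell$, so at age $n-\ell$ the weak factor is exactly the strong one with one extra letter $j$ prepended.

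The core step is a downward induction on $r$ from $\ell$ to $0$ of the invariant ``$w_r$ is a proper suffix of $w'_r$'', that is, $w'_r=g_r w_r$ with $g_r$ non-empty. Assume it at level $r$ and apply $\sigma_{r-1}$. By Lemma~\ref{lem:arbispecial} (resp. Lemma~\ref{lem:detailedpoincare} and Table~\ref{table:poincaredegree2}) the bispecial extended images are $w_{r-1}=c\,\sigma_{r-1}(w_r)$ and $w'_{r-1}=c'\,\sigma_{r-1}(w'_r)=c'\,\sigma_{r-1}(g_r)\,\sigma_{r-1}(w_r)$, where for $\sigma_{r-1}=\alpha_m$ one has $c=m$, and for $\sigma_{r-1}=\pi_{mn}$ one has $c=n$ if $n\in E^-(w_r)$ and $c=mn$ if $n\notin E^-(w_r)$. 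Both words end with the common block $\sigma_{r-1}(w_r)$, and since every image $\sigma_{r-1}(e)$ ends with $m$ (case $\alpha_m$) resp. $n$ (case $\pi_{mn}$), the letters of $w'_{r-1}$ immediately to the left of this block reproduce the correction $c$, so that $w_{r-1}$ is again a suffix of $w'_{r-1}$; it is a \emph{proper} suffix because $w_{r-1}\ne w'_{r-1}$ (otherwise this single word would have the two distinct unique antecedents $w_r\ne w'_r$), whence $g_{r-1}$ is non-empty.

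The one delicate case, which I expect to be the main obstacle, is $\sigma_{r-1}=\pi_{mn}$ with $c=mn$: matching the two trailing letters of $w'_{r-1}$ before $\sigma_{r-1}(w_r)$ with $mn$ requires $\sigma_{r-1}(g_r)$ to end in $mn$, which fails precisely when the last letter of $g_r$ equals $n$. This cannot occur: since $g_r w_r=w'_r$ is a factor of $\bu^{(r)}$, the last letter $e$ of $g_r$ satisfies $e w_r\in\L(\bu^{(r)})$, so $e\in E^-(w_r)$; but $c=mn$ means exactly $n\notin E^-(w_r)$, forcing $e\ne n$ and hence $\sigma_{r-1}(g_r)=\cdots\pi_{mn}(e)$ does end in $mn$. (When $c=n$ or $c=m$ only one trailing letter is needed, which is always present.) This closes the induction, and at $r=0$ we conclude that $z^+=w_0$ is a proper suffix of $z^-=w'_0$, hence $|z^+|<|z^-|$.
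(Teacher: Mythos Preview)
Your argument is correct and in fact establishes the stronger fact that $z^+$ is a proper suffix of $z^-$, from which $|z^+|<|z^-|$ is immediate. The paper's own proof takes a different route: it never tracks the word-level suffix relation but instead maintains, by downward induction on the age, the abelian invariant that $\vect{z^+}\le\vect{z^-}$ with a strict inequality on at least one coordinate $j$ lying in $E^-(z^+)$. Because the prepended letters under Poincar\'e depend on the left extensions (and $E^-(z_h^+)$ and $E^-(z_h^-)$ need not coincide), the paper closes the induction by a nine-line case table over all $\sigma_{h-1}\in\S$, checking in each case that a strict coordinate survives and remains in the new left-extension set.

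The two approaches are logically independent. Your suffix invariant avoids the full case table by exploiting a single structural observation---that the last letter of the excess prefix $g_r$ is automatically a left extension of $w_r$, hence cannot equal $n$ precisely when the longer correction $mn$ is required---and it yields a stronger conclusion. The paper's abelianized approach, on the other hand, is more in line with the vector-tracking machinery used elsewhere (Lemmas~\ref{lem:partialorderpreserved}, \ref{lem:partialorderchignonpreserved}, \ref{lem:alternate_far}), and keeping the strict coordinate inside $E^-(z^+)$ is the abelian analogue of your last-letter-of-$g_r$ argument. Either way the key is to carry through the induction something that ``remembers'' the left extensions; you do it at the word level, the paper does it coordinate by coordinate.
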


\begin{proof}
In this proof, we denote by $\vect{z^+}\leq_j \vect{z^-}$ when $\vect{z^+}\leq
\vect{z^-}$ is strict on the coordinate $j\in\{1,2,3\}$.

We prove by induction on the age of bispecial factors that
$\vect{z^+}\leq_j\vect{z^-}$ is strict on
at least one coordinate $j\in\{1,2,3\}$ with $j\in E^-(z^+)$.

Let us prove  the base step of the induction. Suppose that  $z^+$ and $z^-$
have a common neutral bispecial antecedent $v$ thus under the substitution
$\pi_{jk}$ for some $\{i,j,k\}=\{1,2,3\}$. Then, $z^+=k\pi_{jk}(v)$ and
$z^-=jk\pi_{jk}(v)$ so that $\vect{z^+}\leq_j\vect{z^-}$ is strict on the
coordinate $j$. Moreover $E^-(z^+)=\{j,k\}$ and $E^-(z^-)=\{i,k\}$ and hence
$j\in E^-(z^+)$.

Suppose now that  $z^+_h$ and $z^-_h$ are two respectively strong and weak
bispecial factors of a word $u$ of the same age such that
$\vect{z^+_h}\leq_k\vect{z^-_h}$ is strict on at least one coordinate $k\in
E^-(z_h^+)$.  Let $z^+_{h-1}$ and $z^-_{h-1}$ be respectively  the unique
bispecial extended  images of $z^+_h$ and $z^-_h$ under the application of
$\sigma_{h-1}$.  We want to show the following implication for proving the
induction:
\[
\vect{z^+_h}\leq_k\vect{z^-_h} \text{ and } k\in E^-(z_h^+)
\implies
\text{there exists $j$ such that }
\vect{z^+_{h-1}}\leq_j\vect{z^-_{h-1}} \text{ and } j\in E^-(z_{h-1}^+).
\]

Since the letters prepended to the left of bispecial extended images depend on
the left extensions by Table~\ref{table:poincaredegree2},
if $E^-(z^-_h)=E^-(z^+_h)$, it is clear that
$E^-(z^-_{h-1})=E^-(z^+_{h-1})$ and
$\vect{z^+_{h-1}}\leq_j\vect{z^-_{h-1}}$ is strict
for some letter $j\in E^-(z^+_{h-1})$.
Suppose now that $E^-(z^-_h)\neq E^-(z^+_h)$
and suppose without lost of generality that
$E^-(z^+_h)=\{2,3\}$ and $E^-(z^-_h)=\{1,3\}$.
The possible cases depending on $\sigma_{h-1}$ are described in the following
table:
\[
\begin{array}{c|cc|cc|cc}
&&&&& \text{if }k=2 & \text{if }k=3\\
\sigma_{h-1} & z^+_{h-1} & E^-(z^+_{h-1})  & z^-_{h-1} & E^-(z^-_{h-1}) &
    \{j\mid \vect{z^+_{h-1}}\leq_j\vect{z^-_{h-1}}\} &
    \{j\mid \vect{z^+_{h-1}}\leq_j\vect{z^-_{h-1}}\}\\
\hline
\alpha_1 &  1\alpha_1(z^+_h) & \{2,3\} &  1\alpha_1(z^-_h) & \{1,3\} & \{1,2  \}& \{1,3\}\\  
\alpha_2 &  2\alpha_2(z^+_h) & \{2,3\} &  2\alpha_2(z^-_h) & \{1,3\} & \{2    \}& \{2,3\}\\
\alpha_3 &  3\alpha_3(z^+_h) & \{2,3\} &  3\alpha_3(z^-_h) & \{1,3\} & \{2,3  \}& \{3\}\\
\pi_{12} &  2\pi_{12}(z^+_h) & \{1,2\} & 12\pi_{12}(z^-_h) & \{2,3\} & \{1,2  \}& \{1,2,3\}\\
\pi_{32} &  2\pi_{32}(z^+_h) & \{2,3\} & 32\pi_{32}(z^-_h) & \{1,2\} & \{2,3  \}& \{2,3\}  \\
\pi_{13} &  3\pi_{13}(z^+_h) & \{1,3\} &  3\pi_{13}(z^-_h) & \{1,3\} & \{1,2,3\}& \{3\}    \\
\pi_{23} &  3\pi_{23}(z^+_h) & \{2,3\} &  3\pi_{23}(z^-_h) & \{2,3\} & \{2,3  \}& \{3\}    \\
\pi_{21} & 21\pi_{21}(z^+_h) & \{1,3\} &  1\pi_{21}(z^-_h) & \{1,2\} & \{1    \}& \{1,3\}  \\
\pi_{31} & 31\pi_{31}(z^+_h) & \{1,2\} &  1\pi_{31}(z^-_h) & \{1,3\} & \{1,2  \}& \{1\}
\end{array}
\]
We check that for all nine possible values of
$\sigma_{h-1}\in\S$, we always have that
$\vect{z^+_{h-1}}\leq_j\vect{z^-_{h-1}}$ is strict for some $j\in E^-(z^+_{h-1})$.
Since we proved $\vect{z^+}\leq_j\vect{z^-}$ is strict on at least one coordinate $j$,
then we conclude that $|z^+|<|z^-|$. 
\end{proof}

\begin{lemma}\label{lem:alternate}
Let $\bu=\lim_{n\to\infty}\sigma_0\sigma_1\cdots\sigma_n(a_n)$  be an
$\S$-adic word generated by the Arnoux-Rauzy-Poincar\'e
algorithm applied to a totally irrational vector $\bx\in\Delta$.
Let $z^-$ and $w^+$ be two bispecial factors of  ${\bf u}$ such that $z^-$ is weak,
 $w^+$ is strong, and the history of both $z^-$ and $w^+$ are in the regular language
$\sigma_0\sigma_1\cdots\sigma_n \in \L(\G)$.
If $\age(z^-)<\age(w^+)$, then $|z^-|<|w^+|$.
\end{lemma}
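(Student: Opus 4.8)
The plan is to trace both $z^-$ and $w^+$ back through their (common) directive sequence and to compare their abelianized vectors at the moment the strong factor's history last performs a Poincar\'e step before its final one. Write $m=\age(z^-)<n=\age(w^+)$. By Lemma~\ref{lem:common} the two factors share the same global directive sequence $(\sigma_t)_t$, their lives $(z^-_t)$ and $(w^+_t)$ are obtained from one another by extended images under the common $\sigma_t$'s, and $z^-_t=\emptyword$ for every $t\geq m$ while $w^+_t\neq\emptyword$ for every $t<n$.

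The core step is to produce an index $p$ with $m\leq p<n$ at which $\vect{w^+_p}\geq(1,1,1)$. Applying Lemma~\ref{lem:weak-stron-language-restricted} to the strong factor $w^+$ gives $\sigma_0\cdots\sigma_n\in\S^*\,\pi_{j'k'}\{\alpha_{j'}\}^*\alpha_{i'}\S_\alpha^*\{\pi_{i'k'},\pi_{j'k'}\}$ for some $\{i',j',k'\}=\{1,2,3\}$; I would take $p$ to be the index of the distinguished $\pi_{j'k'}$, that is, the last Poincar\'e substitution strictly before $\sigma_n$. Then $\history(w^+_p)=\sigma_p\cdots\sigma_n$ is exactly of the form $\pi_{j'k'}\,\S^*\,\alpha_{i'}\,\S^*\,\S$ with the matching triple $\{i',j',k'\}$, so Lemma~\ref{lem:length-lower-bound} yields $\vect{w^+_p}\geq(1,1,1)$. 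It remains to check $m\leq p$: applying Lemma~\ref{lem:weak-stron-language-restricted} to the weak factor $z^-$ shows that $\sigma_m\in\{\pi_{ik},\pi_{jk}\}$ is a Poincar\'e substitution, and since every substitution strictly between index $p$ and index $n$ is of Arnoux-Rauzy type while $m<n$, the Poincar\'e index $m$ cannot lie in the open interval $(p,n)$ and is not $n$, whence $m\leq p$. In particular $z^-_p=\emptyword$.

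To conclude, from $\vect{z^-_p}=(0,0,0)<(1,1,1)\leq\vect{w^+_p}$ we obtain $\vect{z^-_p}<\vect{w^+_p}$ for the strict coordinatewise order. Applying Lemma~\ref{lem:partialorderchignonpreserved} once for each of $\sigma_{p-1},\dots,\sigma_0$ — each $z^-_t$ (resp.\ $w^+_t$) being the extended image of $z^-_{t+1}$ (resp.\ $w^+_{t+1}$) under $\sigma_t$ — the relation $<$ is preserved all the way down, giving $\vect{z^-}<\vect{w^+}$ and hence $|z^-|<|w^+|$ by summing the three coordinates.

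The main obstacle is the core step, and specifically the guarantee that a genuine Arnoux-Rauzy substitution $\alpha_{i'}$ sits between the last two Poincar\'e substitutions in the history of $w^+$: this is exactly what the restriction to the regular language $\L(\G)$ buys through Lemma~\ref{lem:weak-stron-language-restricted}, and it is what forces $\vect{w^+_p}\geq(1,1,1)$. Without this restriction the history could end in two consecutive Poincar\'e substitutions, $\vect{w^+_p}$ could acquire a vanishing coordinate, and the alternance would break — precisely the mechanism behind the quadratic complexity exhibited in Section~\ref{sec:quadratic}.
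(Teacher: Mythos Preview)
Your proof is correct and follows the same approach as the paper: identify the index $\ell$ (your $p$) of the last Poincar\'e step before $\sigma_n$, use Lemma~\ref{lem:weak-stron-language-restricted} together with Lemma~\ref{lem:length-lower-bound} to get $\vect{w^+_\ell}\geq(1,1,1)$, argue $m\leq\ell$ from the fact that $\sigma_m\in\S_\pi$ while the stretch $(\ell,n)$ contains only Arnoux--Rauzy substitutions, and then propagate the strict coordinatewise inequality down to index~$0$ via Lemma~\ref{lem:partialorderchignonpreserved}. The only cosmetic difference is that the paper compares the two abelianized vectors at index $m$ (inserting the harmless step $\vect{w^+_\ell}\leq\vect{w^+_m}$), whereas you compare them already at index $p$; your variant then implicitly uses Lemma~\ref{lem:partialorderchignonpreserved} on the pair $(\emptyword,\emptyword)\mapsto(\emptyword,\cdot)$ for indices between $m$ and $p$, a degenerate case not spelled out in the appendix proof but immediate since any extended image of a word containing all three letters again contains all three letters.
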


\begin{figure}[h!]
\begin{center}
\includegraphics[width=1.0\linewidth]{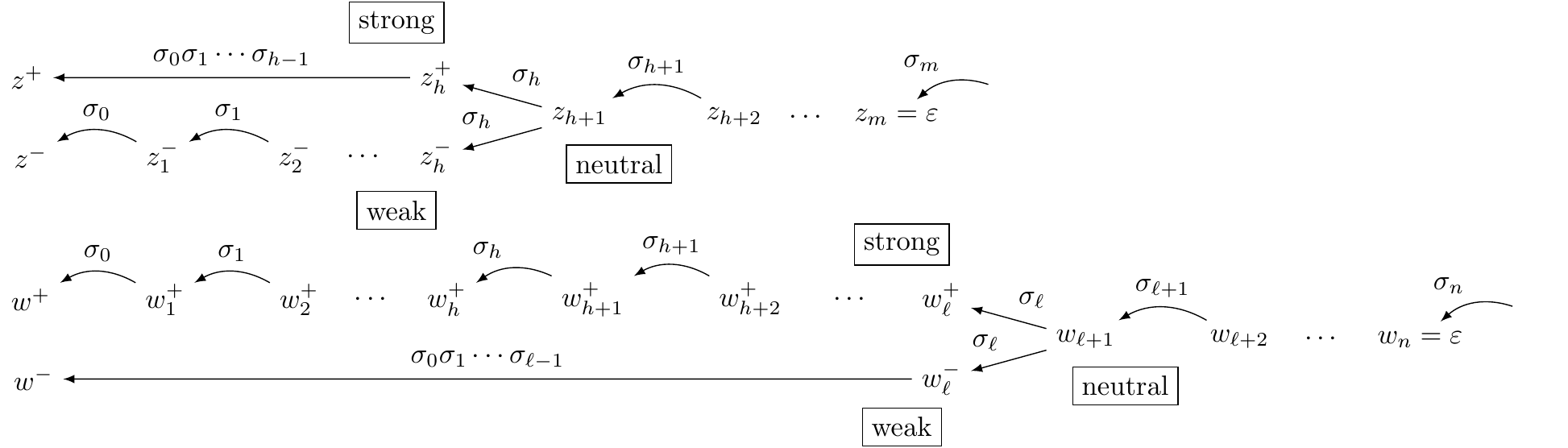}
\end{center}
\caption{Lifes of two pairs of strong and weak bispecial factors: $z^+$,
$z^-$ and $w^+$, $w^-$.}
\label{fig:pairebispecial}
\end{figure}

\begin{proof}
In this proof, we denote a bispecial factor $w$ as $w^+$ if it is strong,
$w^-$ if it is weak, and with no sign if it is neutral.
Let $m=\age(z^-)$, $z^-_0=z^-$ and  $z_{i+1}$ be the unique antecedent of
$z_i$ under $\sigma_{i}$ for $0\leq i\leq m-1$.
Let $n=\age(w^+)$, $w^+_0=w^+$ and  $w_{i+1}$ be the unique antecedent of
$w_i$ under $\sigma_{i}$ for $0\leq i\leq n-1$.
From Lemma~\ref{lem:weak-stron-language-restricted}, we have
\[
\begin{array}{l}
\history(w^+)=\sigma_0\sigma_1\cdots\sigma_n\in
\S^*\, \pi_{jk} \{\alpha_j\}^*\, \alpha_i\, \S_\alpha^*\,\,
\{\pi_{ik},\pi_{jk}\},\\
\history(z^-)=\sigma_0\sigma_1\cdots\sigma_m\in
\S^*\, \pi_{j'k'} \{\alpha_{j'}\}^*\, \alpha_{i'}\, \S_\alpha^*\,\,
\{\pi_{i'k'},\pi_{j'k'}\},
\end{array}
\]
for some $\{i,j,k\}=\{1,2,3\}$ and some other values of
$\{i',j',k'\}=\{1,2,3\}$.
We want to show that $\vect{w^+} > \vect{z^-}$ in order to conclude that
$|w^+|>|z^-|$.
Let $\ell$ ($h$ resp.) be the largest integer such that $w^+_\ell$ ($z^-_h$
resp.) is strong (weak resp.). The situation is illustrated in
Figure~\ref{fig:pairebispecial}.

%
We have $m\leq\ell$. Indeed, suppose on the contrary that $m>\ell$.
We know that $\sigma_m\in\S_\pi$. Also, $\sigma_i\in\S_\alpha$ for all
$\ell+1\leq i\leq n-1$. This implies that $m\geq n$ which is a contradiction.
Hence, $h<m\leq\ell<n$.
Since $\history(w^+_\ell)\in \pi_{jk} \{\alpha_j\}^*\, \alpha_i\,
\S_\alpha^*\,\, \{\pi_{ik},\pi_{jk}\}$, from
Lemma~\ref{lem:length-lower-bound} we have that $\vect{w^+_\ell}\geq(1,1,1)$.
Then,
\[
\vect{z_m}=(0,0,0) < (1,1,1) \leq \vect{w^+_\ell} \leq \vect{w^+_m}.
\]
Using induction and Lemma~\ref{lem:partialorderchignonpreserved},
we obtain that $\vect{z^-}<\vect{w^+}$.
Then, $|w^+|>|z^-|$.
\end{proof}

\subsection{Ranges of $2$ and $3$ in  the  sequence $(p(n+1)-p(n))_n$}\label{subsec:52}
The next lemma,  whose proof  requires a  deeper understanding of the abelianized  vectors of bispecial
factors,
 will   allow  us in Section \ref{subsec:proof} to   get a more precise information concerning the
alternance  of  weak and strong bispecial factors.

\begin{figure}[h!]
\begin{center}
\includegraphics[width=1.0\linewidth]{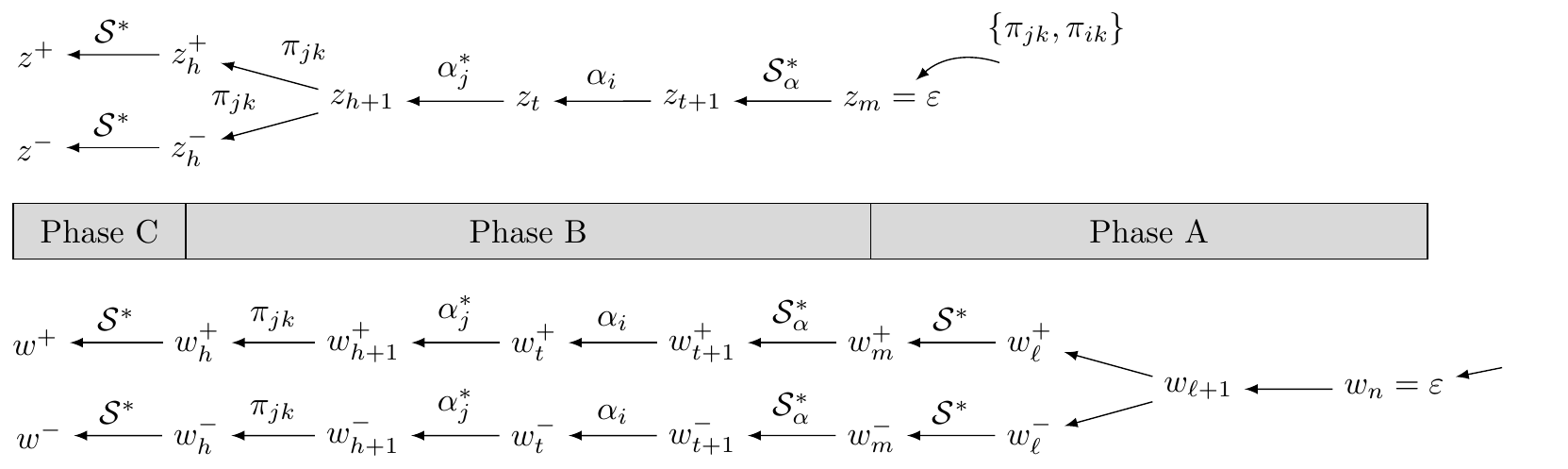}
\end{center}
\caption{Three phases of the lifes of two pairs of strong and weak bispecial factors: $z^+$,
$z^-$ and $w^+$, $w^-$.}
\label{fig:pairebispecial3phases}
\end{figure}

\begin{lemma}\label{lem:alternate_far}
Let $\bu$ be an
$\S$-adic word generated by the Arnoux-Rauzy-Poincar\'e
algorithm applied to a totally irrational vector $\bx\in\Delta$.
Let $w^+$ and $w^-$ be  two bispecial factors of the same age of   $\bu$ such
that $w^+$ is strong and $w^-$ is weak. 
If there exists a younger weak bispecial factor $z^-$ of ${\bf u}$, i.e.,
$\age(z^-)<\age(w^+)=\age(w^-)$, then $|w^+|-|z^-|>|w^-|-|w^+|$.
If there is no younger weak bispecial factor, then $|w^+|\geq|w^-|-|w^+|$.
\end{lemma}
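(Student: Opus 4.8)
The plan is to translate both length inequalities into coordinatewise inequalities between abelianized vectors and to propagate them along the common life of the factors. By Lemma~\ref{lem:weak_larger_than_strong} one has $\vect{w^+}\le\vect{w^-}$, and when a younger weak factor $z^-$ exists Lemma~\ref{lem:alternate} gives $\vect{z^-}<\vect{w^+}$; hence $D_1:=\vect{w^+}-\vect{z^-}$ and $D_2:=\vect{w^-}-\vect{w^+}$ both lie in $\NN^3$, with $\|D_1\|_1=|w^+|-|z^-|$ and $\|D_2\|_1=|w^-|-|w^+|$. Consequently the first assertion amounts to $D_1-D_2\ge(0,0,0)$ with $D_1\neq D_2$, and the second to $\vect{w^+}-D_2\ge(0,0,0)$. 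Both are governed by the single vector $G:=2\vect{w^+}-\vect{w^-}=\vect{w^+}-D_2$, since $D_1-D_2=G-\vect{z^-}$: the second assertion is $G\ge(0,0,0)$, while the first is $G\ge\vect{z^-}$ with $G\neq\vect{z^-}$.

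Next I fix the combinatorial setup. Using Lemma~\ref{lem:weak-stron-language-restricted} and the index analysis in the proof of Lemma~\ref{lem:alternate}, there are indices $h<m\le\ell<n$, with $n=\age(w^+)=\age(w^-)$ and $m=\age(z^-)$, such that $\sigma_\ell=\pi_{jk}$ is the Poincaré step splitting the common neutral ancestor $w_{\ell+1}$ into $w^+_\ell=k\pi_{jk}(w_{\ell+1})$ and $w^-_\ell=jk\pi_{jk}(w_{\ell+1})$, and $\sigma_h$ splits the ancestor of $z$. This gives the base of the induction at index $\ell$: one reads $\vect{w^-_\ell}-\vect{w^+_\ell}=\be_j$, while $\history(w^+_\ell)\in\pi_{jk}\,\S^*\,\alpha_i\,\S^*\,\S$ yields $\vect{w^+_\ell}\ge(1,1,1)$ by Lemma~\ref{lem:length-lower-bound}; therefore $G^{(\ell)}=\vect{w^+_\ell}-\be_j\ge(0,0,0)$ and its coordinate sum is $|w^+_\ell|-1\ge2$.

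I then propagate $G$ downward through $\sigma_{\ell-1},\dots,\sigma_0$ along the three phases of Figure~\ref{fig:pairebispecial3phases}. Writing $\vect{w^\pm_i}=\vect{c^\pm_i}+M_{\sigma_i}\vect{w^\pm_{i+1}}$, where the prepended blocks $c^\pm_i$ are read from the Arnoux-Rauzy rule and from Table~\ref{table:poincaredegree2}, one obtains the recurrence $G^{(i)}=(2\vect{c^+_i}-\vect{c^-_i})+M_{\sigma_i}G^{(i+1)}$. For Arnoux-Rauzy steps and for the Poincaré steps where $w^+$ and $w^-$ prepend the same block, the constant term is nonnegative and nonnegativity of $G$ is preserved since $M_{\sigma_i}\ge(0,0,0)$ as a map on $\NN^3$. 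The delicate case is a step $\sigma_i=\pi_{j'k'}$ where the strong factor prepends $k'$ while the weak one prepends $j'k'$: here the constant term is $\be_{k'}-\be_{j'}$, and preserving $G\ge(0,0,0)$ forces the matrix term to supply the missing unit on coordinate $j'$, i.e. $(G^{(i+1)})_{i'}+(G^{(i+1)})_{j'}\ge1$. The main obstacle of the proof is therefore to carry, alongside $G\ge(0,0,0)$, a support invariant recording which coordinates of $G$ are positive and coupling them to $E^-(w^+)$; this is established by a nine-row case analysis over the possible $\sigma_i\in\S$, parallel to the table used in the proof of Lemma~\ref{lem:weak_larger_than_strong}.

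Finally I assemble the two assertions. The case with no younger weak factor is exactly the invariant $G\ge(0,0,0)$ evaluated at index $0$, which gives $2|w^+|\ge|w^-|$, that is $|w^+|\ge|w^-|-|w^+|$. For the first assertion, $\vect{z^-_m}=(0,0,0)$ and $m\le\ell$ give, by monotonicity of the life (each $M_{\sigma_i}$ fixes the diagonal, so $\vect{w^+_m}\ge\vect{w^+_\ell}\ge(1,1,1)$), the inequality $G^{(m)}-\vect{z^-_m}=G^{(m)}\ge(0,0,0)$ with coordinate sum at least $2$. I then propagate $G-\vect{z^-}$ through the common substitutions $\sigma_{m-1},\dots,\sigma_0$, now applied simultaneously to $z^-$, $w^+$ and $w^-$ as extended images, maintaining $G-\vect{z^-}\ge(0,0,0)$ by the same prepend table (extended to three prepended blocks) and keeping its coordinate sum strictly positive, since the incidence matrices of $\S$ are nonnegative with positive diagonal. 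Hence $\sum(D_1-D_2)=\sum(G-\vect{z^-})>0$, that is $|w^+|-|z^-|>|w^-|-|w^+|$, as required; both conclusions rely on Lemma~\ref{lem:partialorderchignonpreserved} to guarantee that the relation $<$ is preserved when passing to extended images.
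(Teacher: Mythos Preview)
Your overall strategy coincides with the paper's: reduce both length assertions to coordinatewise inequalities on abelianized vectors and propagate them along the common life. The quantity $G=2\vect{w^+}-\vect{w^-}$ is exactly what the paper tracks, and your second assertion is handled correctly.

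There is, however, a genuine gap in your treatment of the first assertion. You propose to propagate the invariant ``$G-\vect{z^-}\ge(0,0,0)$ with positive coordinate sum'' from index $m$ down to $0$. This invariant is too weak to survive the passage through index $h$ and below. At step $\sigma_h=\pi_{j'k'}$ (and at later Poincar\'e steps in Phase~C) the weak factor $z^-$ may receive the longer prepend $j'k'$ while $w^+$ and $w^-$ both receive $k'$, so the constant term in your recurrence is $-\be_{j'}$. To absorb this you need the $j'$-coordinate of $M_{\pi_{j'k'}}(G-\vect{z^-})$ to be at least $1$, and ``positive coordinate sum'' does not guarantee this for the particular $j'$ that arises. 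The support invariant you invoke earlier was tailored to a different discrepancy (between the prepends of $w^+$ and $w^-$, coupled to $E^-(w^+)$); here the discrepancy is between the prepend of $z^-$ and that of $w^\pm$, governed by $E^-(z^-)$, and you never set up or verify the corresponding invariant. Your final appeal to Lemma~\ref{lem:partialorderchignonpreserved} actually presupposes the strict inequality on \emph{all} coordinates, which you have not established.

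The paper closes this gap with its Phase~B: it exploits Lemma~\ref{lem:weak-stron-language-restricted} applied to $z^-$, which forces the history between $h$ and $m$ to contain an $\alpha_{i'}$ before reaching $\pi_{j'k'}$. This $\alpha_{i'}$ produces a $2\be_{i'}$ slack, and the subsequent $\pi_{j'k'}$ spreads it so that at index $h$ one has $\vect{w_h^+}-\vect{z_h^-}>\vect{w_h^-}-\vect{w_h^+}$ strictly on every coordinate; simultaneously Phase~B shows $E^-(w_h^+)=E^-(w_h^-)$. Only then can Phase~C proceed, via explicit recurrences showing that the strict-on-all-coordinates inequality survives the worst case $z^-_{i-1}=j'k'\pi_{j'k'}(z^-_i)$ versus $w^\pm_{i-1}=k'\pi_{j'k'}(w^\pm_i)$. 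You skip this strengthening step entirely, and without it the propagation does not go through.
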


\begin{proof}
The proof is divided into three phases according to the lifes of the
bispecial factors (see Figure~\ref{fig:pairebispecial3phases}).
\begin{enumerate}[\rm (i)]
\item At the end of Phase A, we have $\vect{w_m^+}-\vect{z_m} \geq
\vect{w_m^-}-\vect{w_m^+}$ is strict on two letters in $\{1,2,3\}$.
\item At the end of Phase B, we have $\vect{w_h^+}-\vect{z_h^-} >
\vect{w_h^-}-\vect{w_h^+}$ and the words $w_h^-$ and $w_h^+$
have the same left extensions.
\item At the end of Phase C, we have $\vect{w^+}-\vect{z^-} > \vect{w^-}-\vect{w^+}$.
\end{enumerate}

{\bf Phase A.}  Let $\ell$ be the largest index such that $w^+_{\ell} \neq
w^-_{\ell}$.
One has that $w^+_{\ell}$ is strong and $w^-_{\ell}$ is weak and their
antecent $w^+_{\ell+1}=w^-_{\ell+1}=w_{\ell+1}$ are equal and neutral.
The bispecial factor $w_\ell^+$ contains each of the letters in $\{1,2,3\}$
because of 
Lemma~\ref{lem:weak-stron-language-restricted} and
Lemma~\ref{lem:length-lower-bound}.
Also $\vect{w_\ell^-}-\vect{w_\ell^+}=\be_a$ for some $a\in\{1,2,3\}$.
Thus $\vect{w_\ell^+} \geq \vect{w_\ell^-}-\vect{w_\ell^+}$
is a strict inequality on at least two coordinates.
One checks that this property is preserved by each of the nine possible substitutions.  This implies that
$\vect{w_m^+} \geq \vect{w_m^-}-\vect{w_m^+}$
is strict on at least two letters in $\{1,2,3\}$ as well.
This also proves the last part of the lemma, concerning the case where there is no younger
weak bispecial factor.
\medskip

{\bf Phase B.}  Each of the inequality below is implied by the precedent one.
The substitution $\alpha_i$ brings the inequality (by at least two units) on
the coordinate $i$. Then, the substitution $\pi_{jk}$ spreads the strict
inequality on every coordinate:
\begin{center}
\begin{tabular}{l}
$\vect{w_m^+}-\vect{z_m} \geq \vect{w_m^-}-\vect{w_m^+}$
is strict on at least two letters in $\{1,2,3\}$,\\
$\vect{w_{t+1}^+}-\vect{z_{t+1}} \geq \vect{w_{t+1}^-}-\vect{w_{t+1}^+}$
is strict on at least two letters in $\{1,2,3\}$,\\
$\vect{w_{t}^+}-\vect{z_{t}} -2\be_i \geq \vect{w_{t}^-}-\vect{w_{t}^+}$,\\
$\vect{w_{h+1}^+}-\vect{z_{h+1}} -2\be_i \geq
\vect{w_{h+1}^-}-\vect{w_{h+1}^+}$,\\
$\vect{\pi_{jk}(w_{h+1}^+)}-\vect{\pi_{jk}(z_{h+1})} - (2,2,2) \geq
\vect{\pi_{jk}(w_{h+1}^-)}-\vect{\pi_{jk}(w_{h+1}^+)}$,\\
$\vect{w_h^+}-\vect{z_h^-} + \be_j - (2,2,2)\geq \vect{w_h^-}-\vect{w_h^+}$,\\
$\vect{w_h^+}-\vect{z_h^-} > \vect{w_h^-}-\vect{w_h^+}$.
\end{tabular}
\end{center}
The left extensions of $w_m^+$ and $w_m^-$ are $\{j,k\}$ or $\{i,k\}$.
Arnoux-Rauzy substitutions preserve the extensions so that
$E^-(w_{h+1}^+)=E^-(w_{m}^+)$ and $E^-(w_{h+1}^-)=E^-(w_{m}^-)$.
Finally, according to Table~\ref{table:poincaredegree2}, $\pi_{jk}$ projects
those left extensions onto the same set  $E^-(w_{h+1}^+)=E^-(w_{h+1}^-)=\{j,k\}$.
\medskip

{\bf Phase C.} We have $\vect{w_h^+}-\vect{z_h^-} > \vect{w_h^-}-\vect{w_h^+}$.
Since the words $w_h^-$ and $w_h^+$ have the same left
extensions, then so do $w_{h-1}^-$ and $w_{h-1}^+$ for all
$\sigma_{h-1}\in\S_\pi$. But the left extensions of $z_{h}^-$ can be different
from the one of $w_h^-$ and $w_h^+$.
This can lead to $z_{h-1}^-=jk\pi_{jk}(z_{h}^-)$ while
$w_{h-1}^-=k\pi_{jk}(w_{h}^-)$ and $w_{h-1}^+=k\pi_{jk}(w_{h}^+)$.
Thus, the proof of Phase C relies on the following recurrences on the age of
bispecial factors (all other
cases for left extensions are easier and follow from the same recurrences):
\begin{enumerate}[\rm (i)]
    \item (Recurrence \textbf{AR}) If $\vect{w^+}-\vect{z^-}>\vect{w^-}-\vect{w^+}$, then
$\vect{\alpha_{k}(w^+)}-\vect{\alpha_{k}(z^-)} >\vect{\alpha_{k}(w^-)}-\vect{\alpha_{k}(w^+)}$.
\item (Recurrence \textbf{P}) If $\vect{w^+}-\vect{z^-}>\vect{w^-}-\vect{w^+}$, then
$\vect{\pi_{jk}(w^+)}-\vect{\pi_{jk}(z^-)}-\be_j >\vect{\pi_{jk}(w^-)}-\vect{\pi_{jk}(w^+)}$.
\end{enumerate}
Let $\vect{z^-}=(x,y,z)$, $\vect{w^+}=(a,b,c)$, $\vect{w^-}=(d,e,f)$ where the
convention $\be_i=(1,0,0)$, $\be_j=(0,1,0)$, $\be_k=(0,0,1)$ is used.
For the Arnoux-Rauzy recurrence, we have
\[
\vect{\alpha_{k}(z^-)}=(x,y,x+y+z),\quad
\vect{\alpha_{k}(w^+)}=(a,b,a+b+c),\quad
\vect{\alpha_{k}(w^-)}=(d,e,d+e+f).
\]
Then
\begin{eqnarray*}
\vect{\alpha_{k}(w^+)}-\vect{\alpha_{k}(z^-)}
&=& \vect{w^+} - \vect{z^-} + (0,0,a-x) + (0,0,b-y) \\
&>& \vect{w^-} - \vect{w^+} + (0,0,a-x) + (0,0,b-y) \\
&\geq& \vect{w^-} - \vect{w^+} + (0,0,d-a+1) + (0,0,e-b+1) \\
&=& \vect{w^-} - \vect{w^+} + (0,0,d-a) + (0,0,e-b) + (0,0,2) \\
&=& \vect{w^-} - \vect{w^+} + (0,0,d-a) + (0,0,e-b) + (0,0,2) \\
&=& (d,e,d+e+f) - (a,b,a+b+c) + (0,0,2) \\
&=& \vect{\alpha_{k}(w^-)} - \vect{\alpha_{k}(w^+)} + 2\be_k
\end{eqnarray*}
For the Poincar\'e recurrence, we have
\[
\vect{\pi_{jk}(z^-)}=(x,x+y,x+y+z),\quad
\vect{\pi_{jk}(w^+)}=(a,a+b,a+b+c),\quad
\vect{\pi_{jk}(w^-)}=(d,d+e,d+e+f).
\]
Then
\begin{eqnarray*}
\vect{\pi_{jk}(w^+)}-\vect{\pi_{jk}(z^-)}-\be_j
&=& \vect{w^+} - \vect{z^-} + (0,a-x,a-x) + (0,0,b-y) - (0,1,0) \\
&>& \vect{w^-} - \vect{w^+} + (0,a-x,a-x) + (0,0,b-y) - (0,1,0) \\
&\geq& \vect{w^-} - \vect{w^+} + (0,d-a+1,d-a+1) + (0,0,e-b+1) - (0,1,0) \\
&=& \vect{w^-} - \vect{w^+} + (0,d-a,d-a) + (0,0,e-b) + (0,0,2) \\
&=& \vect{w^-} - \vect{w^+} + (0,d-a,d-a) + (0,0,e-b) + (0,0,2) \\
&=& (d,d+e,d+e+f) - (a,a+b,a+b+c) + (0,0,2) \\
&=& \vect{\pi_{jk}(w^-)} - \vect{\pi_{jk}(w^+)} + 2\be_k. \qedhere
\end{eqnarray*}
\end{proof}


\subsection{Linear  growth for the factor complexity}\label{subsec:proof}

We now  have gathered all the elements for proving Theorem~\ref{thm:leq3n}. 

\begin{proof}[Proof of Theorem~\ref{thm:leq3n}]
Since $\bx$ is totally irrational, Proposition~\ref{prop:primitive} certifies
that lemmas of  the previous two sections can be applied since  the $\S$-adic words
$\bu^{(m)}$ are  proper and uniformly recurrent  for all $m$.
The set of bispecial factors  of length $n$ contains at most one
weak or strong bispecial factor. Indeed, suppose on the contrary that it
contains two of them: $w$ and $z$.  They cannot have the same age according
to Lemma~\ref{lem:weak_larger_than_strong}  since  this would  otherwise imply
$|w|\neq|z|$.  Also, if one is older, e.g. $\age(w)>\age(z)$, then $|w|>|z|$
from Lemma~\ref{lem:alternate}. Then $b(n)\in\{-1,0,+1\}$ according to
Equation~\eqref{eq:bisp_mult} of
Theorem~\ref{thm:cassaigne454}.
Finally, it remains  to  prove that the assumptions  of
Lemma~\ref{lem:pn23iffsumbn01} are satisfied. The first non-zero value
of $b(n)$ is $+1$ because strong and weak bispecial factors come in pairs and
the strong one is smaller than the weak one from
Lemma~\ref{lem:weak_larger_than_strong}. Moreover, non-zero values are
alternating.  Indeed, let $z^+$ and $w^+$ be two strong bispecial factors such
that $\age(w^+)>\age(z^+)$. Let $z^-$ be the weak bispecial factor such that
$\age(z^-)=\age(z^+)$. From Lemma~\ref{lem:weak_larger_than_strong} and
Lemma~\ref{lem:alternate}, $|z^+|<|z^-|<|w^+|$. Hence, there is always a $-1$
between two $+1$ in the sequence $(b(n))_{n\geq0}$. This alternance of non-zero
values in the sequence $(b(n))_n$ shows that
$p(n+1)-p(n)\in\{2,3\}$ (Lemma~\ref{lem:pn23iffsumbn01}), so that $2n+1\leq p(n)\leq
3n+1$ for $n\geq0$.

Now we show that $p(n)\leq\frac{5}{2}n+1$.
We prove by recurrence that $p(q+1)\leq\frac{5}{2}(q+1)+1$ for each
$q$ such that $b(q)=-1$.
By assuming  that $b(-1)=-1$, we remark
that the statement is valid for $q=-1$ because $p(0)=1\leq1$.
Suppose $q$ and $t$ are 
two consecutive occurrences of $-1$ in the sequence $(b(\ell))_{\ell}$, that is,
$b(q)=b(t)=-1$ and $b(\ell)\neq-1$ for all $\ell$ such that $q<\ell<t$.
We show that if
    $p(q+1) \leq \frac{5}{2}(q+1) + 1$
then
    $p(n+1) \leq \frac{5}{2}(n+1) + 1$
    for each $n$ such that
$q<n\leq t$.
From the alternance of non-zero values $+1$ and $-1$ in the sequence $(b(\ell))_{\ell}$,
there exists an integer $r$ with $q<r<t$ such that $b(r)=+1$ and such that
for all integers $r'\neq r$ with $q<r'<t$ then $b(r')=0$.
Since the first non-zero value of $(b(\ell))_{\ell \geq 0}$  is $+1$, then
$\sum_{\ell=0}^{q}b(\ell)=0$.
The consequence of Lemma~\ref{lem:alternate_far} is that 
$r-q>t-r$ which is true if and only if $\frac{t-q}{2}>t-r$.
Note that if $n$ is such that $q<n\leq r$, then
$s(n)=s(0)+\sum_{\ell=0}^{n-1}b(\ell)=2+\sum_{\ell=0}^{q}b(\ell)=2$.
Also, if $n$ is such that $r<n\leq t$, then
$s(n)=s(0)+\sum_{\ell=0}^{n-1}b(\ell)=2+\sum_{\ell=0}^{q}b(\ell)+b(r)=2+1=3$.
Therefore, for each $n$ such that $r<n\leq t$ we have 
\begin{eqnarray*}
    p(n+1) - p(q+1)
    &=& \sum_{\ell=q+1}^{r}s(\ell) + \sum_{\ell=r+1}^{n}s(\ell)
    = 2(r-q) + 3 (n-r)\\
    &=& 2(n-q) + (n-r)
    < 2(n-q) + \frac{n-q}{2}
    = \frac{5}{2}(n-q).
\end{eqnarray*}	
But since
    $p(q+1) \leq \frac{5}{2}(q+1) + 1$
    we conclude that
    $p(n+1) \leq \frac{5}{2}(n+1) + 1$.
We get the same conclusion for each $n$ such that $q<n\leq r$.
From this we conclude
that $p(n)\leq\frac{5}{2}n+1$ and
$\limsup_{n\to\infty}\frac{p(n)}{n} \leq \frac{5}{2}$.
\end{proof}

We in fact prove  the more general result.
\begin{theorem} \label{thm:leq3nbis}
Let $\bu$ be a word of the Arnoux-Rauzy-Poincar\'e system.
\begin{itemize}
\item If $\bu$ is of Type $1$, then it has a bounded factor complexity.
\item If $\bu$ is of Type $2$, then its factor complexity satisfies  ultimately $p(n)=n+k$ for some  constant~$k$.
\item If  $\bu$ is of Type $3$, then $p(n+1)-p(n) \in \{2,3\}$ and $2n+1\leq p(n) \leq
    \frac{5}{2}n+1$ for all $n\geq0$.
\end{itemize}
\end{theorem}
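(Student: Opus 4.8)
The plan is to treat the three types separately, reducing Type~3 to Theorem~\ref{thm:leq3n}. The first thing I would record is that Type~3 coincides with weak primitivity: by Lemma~\ref{lem:notweaklyprimitive} a directive sequence fails to be weakly primitive exactly when it ultimately uses at most two Arnoux-Rauzy substitutions, which is precisely the union of Types~1 and~2. Hence a Type~3 word is weakly primitive, so $\bu^{(m)}$ is uniformly recurrent for every $m$; since weak primitivity makes every letter occur and uniform recurrence then provides a left extension for each, $\bu^{(m)}$ is moreover proper. The standing hypotheses of Sections~\ref{subsec:life}--\ref{subsec:proof} are therefore met, and the proof of Theorem~\ref{thm:leq3n} applies unchanged, yielding $p(n+1)-p(n)\in\{2,3\}$ and $2n+1\le p(n)\le \frac52 n+1$.

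For Type~1 the word is purely periodic. Indeed $(\sigma_n)_n\in\S^*\{\alpha_k\}^\NN$ with $a=k$, so choosing $N$ with $\sigma_n=\alpha_k$ for all $n\ge N$ and using $\alpha_k(k)=k$ gives $\bu^{(N)}=k^\infty$. Writing $\phi=\sigma_0\sigma_1\cdots\sigma_{N-1}$ we obtain $\bu=\phi(k^\infty)=(\phi(k))^\infty$, a purely periodic word, whose factor complexity is bounded (by $|\phi(k)|$).

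For Type~2 I would reduce to a Sturmian word. Here $(\sigma_n)_n\in\S^*\{\alpha_j,\alpha_k\}^\NN$ with $a\in\{j,k\}$, and since the word is not of Type~1 both $\alpha_j$ and $\alpha_k$ occur infinitely often. Pick $N$ with $\sigma_n\in\{\alpha_j,\alpha_k\}$ for $n\ge N$ and set $\mathbf{v}=\bu^{(N)}$, $\phi=\sigma_0\cdots\sigma_{N-1}$, so that $\bu=\phi(\mathbf{v})$ and $\phi$ is a non-erasing morphism. The restrictions $\alpha_k\colon j\mapsto jk,\ k\mapsto k$ and $\alpha_j\colon k\mapsto kj,\ j\mapsto j$ are the two elementary Sturmian substitutions on $\{j,k\}$, and since $a\in\{j,k\}$ they keep $\mathbf{v}$ inside $\{j,k\}^\NN$; as both occur infinitely often, $\mathbf{v}$ is an aperiodic Sturmian word, so $p_{\mathbf{v}}(n)=n+1$ for all $n$. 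Consequently $\bu=\phi(\mathbf{v})$ is a non-erasing morphic image of a Sturmian word, hence quasi-Sturmian, and $p_\bu(n)=n+k$ for all large $n$ and some constant $k$; this is the characterization of words of complexity $n+k$ as morphic images of Sturmian words (see \cite{MR2759107}).

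The main obstacle is the Type~2 slope. Recognizing $\mathbf{v}$ as a genuine aperiodic Sturmian word and checking that $\phi$ is non-erasing are routine; the quasi-Sturmian step is where the substance lies. If one prefers a self-contained argument in the spirit of the rest of the paper, one can instead work directly with bispecial factors: for any bispecial factor of $\bu$ of age $>N$ the relevant substitutions act within the two-letter alphabet $\{j,k\}$, and a direct computation in the style of Lemma~\ref{lem:typeextempty} shows that the empty word stays neutral under $\alpha_j$ and $\alpha_k$ restricted to $\{j,k\}$, so by Lemmas~\ref{lem:arbispecial} and~\ref{lem:lifebispecial} all such bispecial factors are neutral. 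Thus only finitely many bispecial factors are strong or weak, whence $b(n)=0$ for large $n$ and $p(n+1)-p(n)$ is ultimately constant. The genuinely delicate point, and the one I expect to be hardest, is to pin this constant to exactly $1$: this requires comparing the right-special factors of $\bu$ with those of $\mathbf{v}$ through the right-extension-preserving, uniquely desubstitutable maps of Section~\ref{lem:synch}, using that the Sturmian word $\mathbf{v}$ has exactly one right-special factor of each length.
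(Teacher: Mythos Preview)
Your proposal is correct and follows essentially the same approach as the paper: Type~1 is periodic, Type~2 is a morphic image of a Sturmian word (the paper cites \cite{DBLP:conf/dlt/Cassaigne97} rather than \cite{MR2759107} for the quasi-Sturmian conclusion, but the content is the same), and Type~3 is reduced to Theorem~\ref{thm:leq3n} via weak primitivity.

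You supply more detail than the paper does—in particular the explicit identification of Type~3 with weak primitivity through Lemma~\ref{lem:notweaklyprimitive}, and the verification that $\bu^{(m)}$ is proper for every $m$—and your observation that the incidence matrices are unimodular (so $\phi(j)$ and $\phi(k)$ have $\QQ$-independent abelianizations, forcing $\bu$ to inherit irrational letter frequencies and hence aperiodicity) closes a point the paper leaves implicit. The alternative self-contained route you sketch for Type~2, computing the bispecial factors directly via Lemmas~\ref{lem:arbispecial} and~\ref{lem:typeextempty} on the two-letter alphabet, is a genuine alternative to invoking the quasi-Sturmian characterization; it stays within the paper's own toolbox at the cost of having to argue separately that the eventual slope is exactly~$1$, which you correctly flag as the delicate step.
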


\begin{proof}
Words $\bu$ of Type $1$ are periodic and thus  have a bounded factor
complexity.   A word $\bu$ of Type $2$ is an  image by a  substitution of  a Sturmian
sequences.
Then, according to \cite{DBLP:conf/dlt/Cassaigne97},  its  factor complexity satisfies ultimately $p(n)=n+k$ for some  constant $k$.
A word $\bu$ of Type $3$ is  weakly primitive, $\bu^{(m)}$  is recurrent and proper for all $m$,  and its   factor complexity was proven to satisfy  the  desired bounds
in Theorem~\ref{thm:leq3n}.
\end{proof}






\section{Convergence  and unique ergodicity} \label{sec:convergence}

We start with some terminology.
Let $\bu$ be an infinite word in~${\mathcal A}^{\mathbb N}$. Let $X_{\bu}$ be the orbit closure of the infinite word $\bu$ under the
action of the shift~$S$, that is,  $X_{\bu}$ is  the closure in ${\mathcal A} ^{\mathbb N}$ of the set
$\{S^n(\bu) \mid n \in \NN\} = \{(u_k)_{k \geq n} \mid k \in \NN\}$, where the shift $S$ satisfies $S((u_n)_n)=(u_{n+1})$.
The set~$X_{\bu}$  coincides with the  set of infinite words whose
language is contained in ${\mathcal L}({\bu})$, and is called the  \emph{symbolic dynamical system} generated by~$\bu$.
The topological  dynamical system $(X_{\bf u}, S)$  can be  endowed  with a  structure   of  a     measure-theoretic dynamical
      system  $(X_u, T,
    \mu,{\cal B})$, where ${\cal B}$ is a  $\sigma$-algebra,  by taking any  probability measure   $\mu$  preserved by  $T$,     that is,
     for all $B \in \mathcal{B}$, $\mu
    (S^{-1}(B))=\mu(B)$.  The system  $X_{\bu}$ is  said to be \emph{uniquely ergodic} if  there
exists a unique shift-invariant probability measure on $X$.

    One   natural  way for  getting   an $S$-invariant  measure is to consider  factor frequencies (for more details, see \cite{FM}).
    The \emph{frequency} of a letter $i$ in $\bu$ is defined as the limit when
$n$ tends towards infinity, if it exists, of the number of occurrences of $i$ in $u_0 u_1 \cdots u_{n-1}$ divided
by $n$.  The infinite word $\bu$ has \emph{uniform letter frequencies} if, for every letter $i$ of $u$, the number of
occurrences of $i$ in $u_k\cdots u_{k+n-1}$ divided by $n$ has a limit when $n$ tends to infinity, uniformly in $k$.
Similarly, we can define  the frequency   and the uniform frequency of a factor, and we say that $u$ has \emph{uniform frequencies} if all its
factors have uniform frequency. The property of  having uniform  factor frequencies  for a shift $X$ is actually equivalent to  unique ergodicity (see e.g. \cite{FM}).  

Factor complexity   is a priori a  topological  notion.   However it may yield    (in particular  when it has a  linear growth order)
measure-theoretical information on the  the  symbolic dynamical system  $X_{\bf u}$. Indeed, 
according to  \cite{Boshernitzan1984},   if $\bu$ is assumed to be uniformly recurrent,   and if  $\limsup p(n)/n <3$, then  $(X_{\bu}, S)$ is  uniquely  ergodic.

\begin{proof}[Proof of Theorem~\ref{thm:conv}]
We now have gathered all the elements for   observing that  Theorem \ref{thm:conv}  is     a direct consequence of  Theorem \ref{thm:leq3n} together with   the above mentioned result
 of   \cite{Boshernitzan1984} and Proposition~\ref{prop:primitive}. 
 \end{proof}

\section{Conclusion}


Given a  totally irrational vector of frequencies $\bx=(x_1,x_2,x_3)\in\RR^3_+$ (with $\sum
x_i=1$), we thus have shown  how  to  construct  an infinite word $\bu$
over the alphabet $\A=\{1,2,3\}$ such that the frequency of each letter
$i\in\A$ exists and is equal to $x_i$,  with  this word $\bu$  having a  linear   factor complexity. This word is contructed by  translating symbolically
within the $S$-adic formalism a multidimensional continued fraction algorithm,  namely the Arnoux-Rauzy-Poincar\'e algorithm.

 Observe that usual proofs of convergence for multidimensional continued  fraction algorithms  rely on  linear algebra
and  on the use of the Hilbert projective metric (see e.g.  \cite{SCH}).  Let
us stress the fact that we provide here a purely combinatorial  proof of
convergence for  a two-dimensional continued fraction  algorithm  based on the unique ergodicity.

The restriction to the regular language $\L(\G)$  is clearly important;    there
exist examples  of $\S$-adic words constructed  with the alphabet of
substitutions  $\S$  for which  the upper bound $\frac{5}{2}n+1$  does not  hold.
 Moreover,   a quadratic complexity is even  also achievable (see Section \ref{sec:quadratic}).  Hence, the present study  gives some more insight on a
statement of the $S$-adic conjecture (it  rather should be qualified of problem)  which is to find conditions for which
$S$-adic sequences  have a linear complexity (see e.g.
\cite{durand_do_2013,Leroy12}).
 Note that   any uniformly recurrent word $\bu$
  whose complexity function $p(n)$ satisfies $p(n+1)-p_u(n) \leq k$, for all $n$, is $S_k$-adic, with  a set  
 $S_k$ of substitutions that depends on  $k$ (\cite{frank}).
 
The upper bound $\limsup_{n\to\infty}\frac{p(n)}{n} \leq \frac{5}{2}$ is not
sharp.  Numerical experimentations tend to  indicate that the worst case in the language $\L(\G)$ of the 
Arnoux-Rauzy-Poincaré algorithm is  obtained  with the fixed point of $\pi_{23}\alpha_1$ for which
the value is approximately $\limsup_{n\to\infty}\frac{p(n)}{n} \approx
2.26079201$.




Factor complexity of Poincar\'e and Arnoux-Rauzy substitutions can be
described exactly by considering left and right extensions of length one.  It
is not always the case, and  the study of Brun substitutions (provided by the Brun multidimensional continued fraction algorithm)  seems to be an example for
which extensions of length longer than one are necessary to describe bispecial
factors. Recently, Klouda \cite{MR2928192} described bispecial factors in
fixed points of morphisms where extensions of length longer than one were
considered. Extending this work to $S$-adic words deserves further
research.

Balance  properties of the Poincar\'e and Arnoux-Rauzy $S$-adic system have also
nice properties and their study should be done more deeply. 
An infinite word $\bu \in {\mathcal A}^\NN$ is said to be $C$-balanced  if for any pair $v,w$ of factors of the
same length of $\bu$, and for any letter $i \in \mathcal{A}$, one has $||v|_i - |w|_i| \leq C$. It is said
balanced if there exists $C>0$ such that it is $C$-balanced.   For example, it was
proven in \cite{DBLP:conf/cwords/DelecroixHS13} that words generated by Brun
algorithm gives almost everywhere balanced sequences.  Balance properties are intimately connected with
Diophantine properties of the algorithm.
Indeed, an infinite word $\bu \in {\mathcal A}^\NN$ is balanced if and only if it has uniform letter frequencies and there exists
a constant $B$ such that for any factor $w$ of $u$, we have $||w|_i - f_i |w|| \leq B$ for all letter $i$ in ${\mathcal A}$,
where $f_i$ is the frequency of $i$.

\section{Appendix}

\begin{proof}[Proof of Proposition~\ref{prop:1}]
We define
\[
\tilP =
\{ A_jH_{jk}: \{i,j,k\} = \{1,2,3\} \} \cup
\{ P_{jk}H_{jk} : \{i,j,k\} = \{1,2,3\} \}
\]
which describes another partition of $\Delta$ into 12 triangles shown in
Figure~\ref{fig:partition_markov}. 

First, we show that the transformation
$T$ is a Markov transformation for the partition~$\tilP$.
\begin{figure}[h]
\begin{center}
\includegraphics[width=0.50\linewidth]{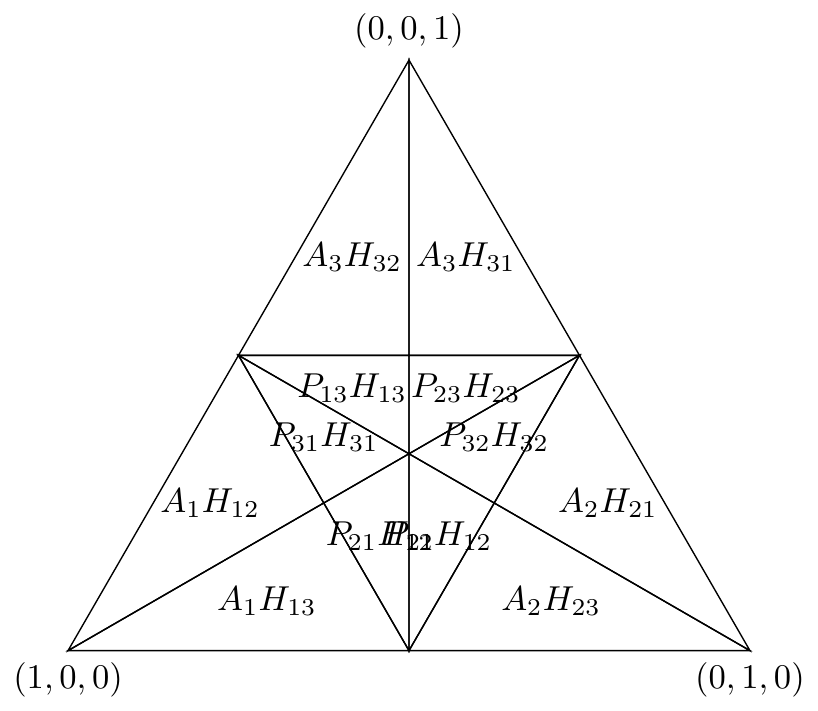}
\end{center}
\caption{The Markov partition $\tilP$ of Arnoux-Rauzy-Poincar\'e algorithm.}
\label{fig:partition_markov}
\end{figure}
Let $\{i,j,k\}=\{1,2,3\}$.
The image of $A_jH_{jk}$ and of $P_{jk}H_{jk}$ under $T$ are the
same and are equal to the half triangle $H_{jk}$:
\[
T (A_jH_{jk})
= T (P_{jk}H_{jk})
= H_{jk}.
\]
But the half triangle $H_{jk}$  is a union of elements of $\tilP$:
\[
H_{jk} =
A_iH_{ik}\cup A_iH_{ij}\cup A_jH_{jk} \cup P_{ij}H_{ij}\cup P_{ji}H_{ji}\cup P_{ki}H_{ki}.
\]
Thus, the transformation $T$ is a Markov transformation for the partition
$\tilP$.
This defines an automaton 
$\widetilde{G}=(\tilP, \widetilde{\Sigma}, \widetilde{\delta},\widetilde{I},
\widetilde{F})$
where the alphabet is
\[
    \widetilde{\Sigma} = \{
A_1^{-1}, A_2^{-1}, A_3^{-1}, P_{31}^{-1}, P_{13}^{-1}, P_{23}^{-1},
P_{32}^{-1}, P_{12}^{-1}, P_{21}^{-1} \},
\]
the transitions are 
\[
\widetilde{\delta} = \{(p,M,q)\in
\tilP\times\widetilde{\Sigma}\times\tilP : q\subseteq M\cdot p=T(p)\},
\]
or, more precisely,
\[
    \widetilde{\delta} = \left\{
\begin{array}{ll}
A_jH_{jk},A_j^{-1} \to A_iH_{ik},    & P_{jk}H_{jk},P_{jk}^{-1} \to A_iH_{ik},    \\
A_jH_{jk},A_j^{-1} \to A_iH_{ij},    & P_{jk}H_{jk},P_{jk}^{-1} \to A_iH_{ij},    \\
A_jH_{jk},A_j^{-1} \to A_jH_{jk},    & P_{jk}H_{jk},P_{jk}^{-1} \to A_jH_{jk},    \\
A_jH_{jk},A_j^{-1} \to P_{ij}H_{ij}, & P_{jk}H_{jk},P_{jk}^{-1} \to P_{ij}H_{ij}, \\
A_jH_{jk},A_j^{-1} \to P_{ji}H_{ji}, & P_{jk}H_{jk},P_{jk}^{-1} \to P_{ji}H_{ji}, \\
A_jH_{jk},A_j^{-1} \to P_{ki}H_{ki}, & P_{jk}H_{jk},P_{jk}^{-1} \to P_{ki}H_{ki}
\end{array}
\text{for each }\, \{i,j,k\} = \{1,2,3\}
\right\},
\]
the initial states and final states are all of the twelve states, i.e.,
$\widetilde{I}=\widetilde{F}=\tilP$. The automaton $\widetilde{G}$ recognize all  the expansions of the
Arnoux-Rauzy-Poincar\'e continued fraction algorithm.  It is clearly not
deterministic. A minimized and deterministic version of it is the automaton
$\G$ shown in Figure~\ref{figure:markovchain} where the alphabet considered is
$
\S = \{
\alpha_1, \alpha_2, \alpha_3, \pi_{31}, \pi_{13}, \pi_{23},
\pi_{32}, \pi_{12}, \pi_{21} \}
$
instead of $\widetilde{\Sigma}$.
In fact, amongst all the elements of $2^{\tilP}$ considered in the determinization
process, only the states in the set  $Q=\{\Delta, H_{12}, H_{13}, H_{21}, H_{23},
H_{31}, H_{32}\}$ survive the minimization.
\end{proof}

\begin{proof}[Proof of Lemma~\ref{lem:partialorderpreserved}]
(i)
Let $\vect{z}=(z_1,z_2,z_3)=\vect{\alpha_{k}(v)}$.
Let $\vect{z'}=(z'_1,z'_2,z'_3)=\vect{\alpha_{k}(v')}$.
We have
\[
\left\{
\begin{array}{l}
z_i = v_i\\
z_j = v_j\\
z_k = v_i + v_j + v_k
\end{array}
\right.
\quad
\text{and}
\quad
\left\{
\begin{array}{l}
z'_i = v'_i\\
z'_j = v'_j\\
z'_k = v'_i + v'_j + v'_k.
\end{array}
\right.
\]
Then
\[
z_k+2 = v_i+v_j+v_k + 2 = (v_i+1)+(v_j+1)+(v_k+1)-1 \leq v'_i+v'_j+v'_k-1 < z'_k
\]
and
$\vect{z}+2e_k<\vect{z'}$.\\
(ii)
Let $\vect{z}=(z_1,z_2,z_3)=\vect{\pi_{jk}(v)}$.
Let $\vect{z'}=(z'_1,z'_2,z'_3)=\vect{\pi_{jk}(v')}$.
We have
\[
\left\{
\begin{array}{l}
z_i = v_i\\
z_j = v_i + v_j\\
z_k = v_i + v_j + v_k
\end{array}
\right.
\quad
\text{and}
\quad
\left\{
\begin{array}{l}
z'_i = v'_i\\
z'_j = v'_i + v'_j\\
z'_k = v'_i + v'_j + v'_k
\end{array}
\right.
\]
As above we have $z_k+2 < z'_k$. Moreover,
\[
z_j+1 = v_i+v_j + 1 = (v_i+1)+(v_j+1)-1 \leq v'_i+v'_j-1 < z'_j.
\]
Then $\vect{z}+e_j+2e_k<\vect{z'}$.
\end{proof}

\begin{proof}[Proof of Lemma~\ref{lem:partialorderchignonpreserved}]
(i) Under Arnoux-Rauzy substitution, the extended image of $v$ and $v'$ are
uniquely determined: $w=k\alpha_k(v)$ and $w'=k\alpha_k(v')$.
From Lemma~\ref{lem:partialorderpreserved},
$\vect{\alpha_k(v)}<\vect{\alpha_k(v')}$.
Then
\[
\vect{w}=\vect{\alpha_k(v)}+e_k<\vect{\alpha_k(v')}+e_k=\vect{w'}.
\]
(ii)
The proof is divided into four disjoint cases depending on the values of
$w\in\{jk\pi_{jk}(v),k\pi_{jk}(v)\}$
and $w'\in\{jk\pi_{jk}(v'),k\pi_{jk}(v')\}$.
The proof relies on the fact that
$\vect{\pi_{jk}(v)}< \vect{\pi_{jk}(v')}$
but only the
fourth case makes a stronger use of
Lemma~\ref{lem:partialorderpreserved}, i.e.,
$\vect{\pi_{jk}(v)}+e_j<\vect{\pi_{jk}(v')}$.

(ii.i) If $w=jk\pi_{jk}(v)$ and $w'=jk\pi_{jk}(v')$, then
\[
\vect{w}=\vect{\pi_{jk}(v)}+e_j+e_k< \vect{\pi_{jk}(v')}+e_j+e_k=\vect{w'}.
\]
(ii.ii) If $w=k\pi_{jk}(v)$ and $w'=k\pi_{jk}(v')$, then
\[
\vect{w}=\vect{\pi_{jk}(v)}+e_k< \vect{\pi_{jk}(v')}+e_k=\vect{w'}.
\]
(ii.iii) If $w=k\pi_{jk}(v)$ and $w'=jk\pi_{jk}(v')$, then
\[
\vect{w}=\vect{\pi_{jk}(v)}+e_k< \vect{\pi_{jk}(v')}+e_j+e_k=\vect{w'}.
\]
(ii.iv) If $w=jk\pi_{jk}(v)$ and $w'=k\pi_{jk}(v')$, then
\[
\vect{w}=\vect{\pi_{jk}(v)}+e_j+e_k< \vect{\pi_{jk}(v')}+e_k=\vect{w'}.\qedhere
\]
\end{proof}

\bibliographystyle{alpha}
\bibliography{biblio}

\end{document}